\newif\ifdraft\draftfalse
\newif\iffull\fulltrue 
\newcommand{\printfnsymbol}[1]{%
  \textsuperscript{\@fnsymbol{#1}}%
}
\tikzstyle{point}=[circle,  inner sep=2pt, fill]
\newcommand{\scatterplotsize}[0]{0.9\textwidth}
\newcommand{\marksize}{1.8}
\newcommand{\scatterplot}[5]{%
	\begin{tikzpicture}
	\begin{axis}[
	width=\scatterplotsize,
	height=\scatterplotsize,
	axis equal image,
	xmin=1,
	ymin=1,
	ymax=9000,
	xmax=9000,
	xmode=log,
	ymode=log,
	axis x line=bottom,
	axis y line=left,
	xtick={1,9,90,900},
	xticklabels={1,9,90,900},
	extra x ticks = {2700, 9000},
	extra x tick labels = {OoR, IMP},
	extra x tick style = {grid = major},
	ytick={1,9,90,900},
	yticklabels={1,9,90,900},
	extra y ticks = {2700, 9000},
	extra y tick labels = {OoR, IMP},
	extra y tick style = {grid = major},
	xlabel={\scriptsize #3},
	xlabel style={yshift=0cm},
	ylabel={\scriptsize #5},
	ylabel style={yshift=-0.4cm},
	yticklabel style={font=\tiny},
	xticklabel style={rotate=290,anchor=west,font=\tiny},
 legend pos=outer north east,
	]

	\addplot[
	scatter,
	only marks,
	mark=asterisk,
 mark size=\marksize,
 fill opacity=0.5,
	scatter/use mapped color={
    draw=black,
    fill=black,
	}
	]%
	table [col sep=comma,x=#2,y=#4] {#1_7.txt};
	\addplot[
	scatter,
	only marks,
	mark=triangle,
 mark size=\marksize,
 fill opacity=0.5,
	scatter/use mapped color={
    draw=black,
    fill=black,
	}
	]%
	table [col sep=comma,x=#2,y=#4] {#1_9.txt};
	\addplot[
	scatter,
	only marks,
	mark=square,
 mark size=\marksize,
 fill opacity=0.5,
	scatter/use mapped color={
    draw=black,
    fill=black,
	}
	]%
	table [col sep=comma,x=#2,y=#4] {#1_8.txt};
	\addplot[
	scatter,
	only marks,
	mark=diamond,
 mark size=\marksize,
 fill opacity=0.5,
	scatter/use mapped color={
    draw=black,
    fill=black,
	}
	]%
	table [col sep=comma,x=#2,y=#4] {#1_1.txt};
	\addplot[
	scatter,
	only marks,
	mark=-,
 mark size=\marksize,
 fill opacity=0.5,
	scatter/use mapped color={
    draw=black,
    fill=black,
	}
	]%
	table [col sep=comma,x=#2,y=#4] {#1_2.txt};
	\addplot[
	scatter,
	only marks,
	mark=Mercedes star,
 mark size=\marksize,
 fill opacity=0.5,
	scatter/use mapped color={
    draw=black,
    fill=black,
	}
	]%
	table [col sep=comma,x=#2,y=#4] {#1_3.txt};
	\addplot[
	scatter,
	only marks,
	mark=Mercedes star flipped,
 mark size=\marksize,
 fill opacity=0.5,
	scatter/use mapped color={
    draw=black,
    fill=black,
	}
	]%
	table [col sep=comma,x=#2,y=#4] {#1_6.txt};
	\addplot[
	scatter,
	only marks,
	mark=heart,
 mark size=\marksize,
 fill opacity=0.5,
	scatter/use mapped color={
    draw=black,
    fill=black,
	}
	]%
	table [col sep=comma,x=#2,y=#4] {#1_4.txt};
	\addplot[
	scatter,
	only marks,
	mark=pentagon,
 mark size=\marksize,
 fill opacity=0.5,
	scatter/use mapped color={
    draw=black,
    fill=black,
	}
	]%
	table [col sep=comma,x=#2,y=#4] {#1_5.txt};

\legend{
}
	
	\addplot[no marks] coordinates {(0.01,0.01) (9000,9000) };
	\addplot[no marks, densely dotted] coordinates {(0.01,0.1) (900,9000)};
	\addplot[no marks, densely dotted] coordinates {(0.1,0.01) (9000,900)};
	\end{axis}
	\end{tikzpicture}
}
\algnewcommand\algorithmicforeach{\textbf{for each}}
\renewcommand{\paragraph}[1]{\medskip\noindent\emph{#1}}
\renewcommand{\subsubsection}[1]{\medskip\noindent\textbf{#1}}
\DeclareMathOperator{\supp}{supp}
\newcommand{\littletaller}{\mathchoice{\vphantom{\big|}}{}{}{}}
\newcommand\restr[2]{{
  \left.\kern-\nulldelimiterspace 
  #1 
  \littletaller 
  \right|_{#2} 
  }}
\newcommand{\myparagraph}[1]{\smallskip\noindent \emph{#1}}
\newcommand{\defeq}{\colonequals}
\newcommand{\nat}{\mathbb{N}}
\newcommand{\dr}{\mathbf{r}}
\newcommand{\dl}{\mathbf{l}}
\newcommand{\seqcomp}{\fatsemi}
\newcommand{\interface}{\mathsf{IO}}
\newcommand{\id}{\mathrm{id}}
\newcommand{\semFunctor}{\mathcal{S}}
\newcommand{\real}{\mathbb R}
\newcommand{\sdist}[1]{\mathcal{D}_{\leq 1}(#1)}
\newcommand{\dist}[1]{\mathcal{D}(#1)}
\newcommand{\MaxReach}[3]{\mathrm{RPr}^{#1}_\text{max}(#2, #3)}
\newcommand{\Reacha}[3]{\mathrm{RPr}^{#1}(#2, #3)}
\newcommand{\WReacha}[3]{\mathrm{WRPr}^{#1}(#2, #3)}
\newcommand{\probinterval}{[0, 1]}
\newcommand{\Scheds}[1]{\mathrm{Sched}(#1)}
\newcommand{\sd}[1]{\mathbb{#1}}
\newcommand{\typemdp}[1]{\mathrm{tp}(#1)}
\newcommand{\semantics}[1]{\llbracket #1 \rrbracket}
\newcommand{\weights}[1]{\mathbf{W}^{#1}}
\newcommand{\undefined}{\bot}
\newcommand{\initstate}{s_\iota}
\newcommand{\Act}{A}
\newcommand{\enabled}[1]{A(#1)}
\newcommand{\mdppath}{\pi}
\newcommand{\last}[1]{\mathsf{last}(#1)}
\newcommand{\FinPaths}[1]{\mathsf{FPath}_{#1}}
\newcommand{\InfPaths}[1]{\mathsf{IPath}_{#1}}
\newcommand{\prmeas}[3]{\mathsf{Pr}_{#1}^{#2,#3}}
\newcommand{\Gmsched}[1]{\Sigma_\text{h}^{#1}}
\newcommand{\Possched}[1]{\Sigma_\text{sm}^{#1}}
\newcommand{\Dmsched}[1]{\Sigma_\text{dm}^{#1}}
\newcommand{\point}{\mathbf{p}}
\newcommand{\achievable}[2]{\mathsf{Ach}_{#1}(#2)}
\newcommand{\fachievable}[1]{\mathsf{Ach}_{#1}}
\newcommand{\achievablesched}[3]{\mathsf{Ach}^{#3}_{#1}(#2)}
\newcommand{\familyachievablesched}[2]{\mathsf{Ach}^{#2}_{#1}}
\newcommand{\convcl}[1]{\mathsf{ConvCl}(#1)}
\newcommand{\sybgap}{\mathsf{E}}
\newcommand{\gap}[2]{\sybgap(#1, #2)}
\newcommand{\gapinf}[2]{\sybgap_{\infty}(#1, #2)}
\newcommand{\dw}[1]{\mathsf{DwCl}(#1)}
\newcommand{\dwconvcl}[1]{\mathsf{DwConvCl}(#1)}
\newcommand{\indMC}[2]{#1[#2]}
\newcommand{\nxt}{\mathsf{Nx}}
\newcommand{\sched}{\sigma}
\newcommand{\mdp}[1]{\mathcal{#1}}
\newcommand{\cmdp}[1]{\mathcal{C}(#1)}
\newcommand{\arr}[1]{#1_{\dr}}
\newcommand{\arl}[1]{#1_{\dl}}
\newcommand{\convw}[1]{\mathbf{#1}}
\newcommand{\en}{I}
\newcommand{\ex}{O}
\newcommand{\enr}{I_{\dr}}
\newcommand{\enrarg}[1]{I_{\dr, #1}}
\newcommand{\enl}{I_{\dl}}
\newcommand{\enlarg}[1]{I_{\dl, #1}}
\newcommand{\exr}{O_{\dr}}
\newcommand{\exrarg}[1]{O_{\dr, #1}}
\newcommand{\exl}{O_{\dl}}
\newcommand{\exlarg}[1]{O_{\dl, #1}}
\renewcommand{\cref}[1]{\Cref{#1}}
\crefname{theorem}{Thm.}{Theorems}
\crefname{mydefinition}{Def.}{Defs}
\crefname{proposition}{Prop.}{Props}
\crefname{lemma}{Lem.}{Lemmas}
\crefname{proof}{Proof.}{Proofs}
\crefname{appendix}{Append.}{Appendixes}
\crefname{example}{Ex.}{Examples}
\crefname{figure}{Fig.}{Figs}
\Crefname{equation}{}{}
\newenvironment{proofs}{%
  \proof}{\endproof}
\begin{document}

\title{Pareto Curves for Compositionally Model Checking String Diagrams of MDPs\thanks{I.H.\ and K.W.\ are supported by ERATO HASUO Metamathematics for Systems Design Project (No.\ JPMJER1603) and the ASPIRE grant No.\ JPMJAP2301, JST. K.W.\ is supported by the JST grants No.\ JPMJFS2136 and  JPMJAX23CU. J.R.\ is supported by the NWO grant No.\ OCENW.M20.053.}}

\titlerunning{Pareto Curves for Compositionally Model Checking (...)  MDPs}
\author{Kazuki Watanabe\inst{1,2,}\thanks{Equal contribution.}\and Marck van der Vegt\inst{3,}\printfnsymbol{2} \and Ichiro Hasuo\inst{1,2} \and \\ Jurriaan Rot\inst{3} \and Sebastian Junges\inst{3}}
\institute{National Institute of Informatics, Tokyo, Japan\\ \email{\{kazukiwatanabe,hasuo\}@nii.ac.jp} \and The Graduate University for Advanced Studies (SOKENDAI), Hayama, Japan  \and Radboud University, Nijmegen, the Netherlands\\ \email{\{marck.vandervegt,sebastian.junges\}@ru.nl, jrot@cs.ru.nl}}
%

%
 \authorrunning{Watanabe et al.}
%
\maketitle              
\begin{abstract}
Computing schedulers that optimize reachability probabilities in MDPs is a standard verification task.
To address scalability concerns, we focus on MDPs that are compositionally described in a high-level description formalism.
In particular, this paper considers \emph{string diagrams}, which specify an algebraic, sequential composition of subMDPs.
Towards their compositional verification, the key challenge is to locally optimize schedulers on subMDPs without considering their context in the string diagram. 
This paper proposes to consider the schedulers in a subMDP which form a \emph{Pareto curve} on a combination of local objectives. While considering all such schedulers is intractable, it gives rise to a highly efficient sound approximation algorithm. 
The prototype on top of the model checker Storm demonstrates the scalability of this approach. 
\end{abstract}
%
\section{Introduction}
\label{sec:intro}
Markov decision processes (MDPs) are a ubiquitous model for describing systems with both nondeterministic and probabilistic uncertainty.
A key problem is to compute the best-case probability of reaching a goal state in a given MDP, i.e., to compute maximal reachability probabilities. Reachability probabilities can efficiently be computed for MDPs with $\approx 10^{7}$ states~\cite{DBLP:series/lncs/BaierHK19,DBLP:conf/tacas/HartmannsJQW23}, using mature model checkers such as \textsc{Storm}~\cite{HenselJKQV22}, PRISM~\cite{KwiatkowskaNP11} or Modest~\cite{HartmannsH14}. However, scalability beyond  state space sizes suffers from the memory limitations inherent to explicitly storing the transition matrix. While decision diagrams~\cite{DBLP:conf/icalp/BaierCHKR97,DBLP:conf/tacas/AlfaroKNPS00,DBLP:conf/cav/HoltzenJVMSB20} are powerful, compact representations, they fail to concisely represent many MDPs~\cite{DBLP:conf/isola/BuddeHKKPQTZ20}. 

\myparagraph{Sequential composition.} Compositional techniques attempt to avoid reasoning on the complete state space. We distinguish \emph{parallel} and \emph{sequential} compositions. 
This paper considers the compositional analysis of sequentially composed models~\cite{DBLP:conf/ijcai/BarryKL11}. This type of compositionality allows to reduce the peak memory consumption by reasoning about the individual parts and allows to exploit the typical existence of isomorphic parts of the state space. Sequentially composed MDPs have seen a surge in interest recently~\cite{Watanabe21,DBLP:conf/nips/JothimuruganBBA21,DBLP:conf/aips/NearyVCT22,JungesS22,WatanabeEAH23}. 

\myparagraph{String diagrams.} 
We focus on \emph{string diagrams of MDPs}~\cite{WatanabeEAH23}, which are MDPs composed by two algebraic operations: the \emph{sequential composition} $\seqcomp$ and the \emph{sum} $\oplus$. More precisely, we use \emph{open MDPs}, extending MDPs with entrance and exit states. \cref{fig:openMDPs} shows open MDPs $\mdp{A}$ (left) and $\mdp{B}$ (right). The open MDP $\mdp{A}$, for instance, has two entrances $i_{\dr, 1}, i_{\dl,1}$ and two exits $o_{\dr,1}, o_{\dl, 1}$.  The algebraic operations define how the open MDPs are \emph{subMDPs of a larger, monolithic MDP}, cf.\ \cref{fig:seqcompAndSum}. We highlight that, in our \emph{bi-directional} framework, the sequential composition of acyclic open MDPs may lead to a cyclic monolithic MDP.


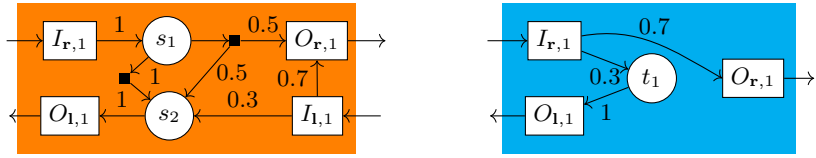
\begin{figure}[t]
\centering
\begin{tikzpicture}[
innode/.style={draw, rectangle, minimum size=0.5cm},
interface/.style={draw, rectangle, minimum size=0.5cm},]
\fill[orange] (-0.7cm, -1cm)--(-0.7cm, 1cm)--(3.8cm, 1cm)--(3.8cm, -1cm)--cycle;
\node[interface,fill=white,yshift=0.5cm] (s0) {$\enrarg{1}$};
\node[inner sep=0,right=-1.25cm of s0] (enr1) {};
\node[interface,fill=white,yshift=-0.5cm] (s0o) {$\exlarg{1}$};
\node[inner sep=0,right=-1.25cm of s0o] (exl1) {};
\node[state,right=0.6cm of s0, minimum size=0.5cm,fill=white] (s1) {$s_1$};
\node[state,right=0.6cm of s0o, minimum size=0.5cm,fill=white] (ssink) {$s_2$};

\node[inner sep=2pt, fill=black,right=0.5cm of s1] (a1a) {};
\node[inner sep=2pt, fill=black,right=-1cm of ssink,yshift=0.5cm] (a1b) {};

\node[interface, right=1.3cm of ssink,fill=white] (s2) {$\enlarg{1}$};
\node[interface, right=0.6cm of a1a,fill=white] (s3) {$\exrarg{1}$};
\node[inner sep=0,right= 0.5cm of s2] (exr1) {};
\node[inner sep=0,right= 0.5cm of s3] (exr2) {};
\draw[->] (enr1) -> (s0);
\draw[->] (s0o) -> (exl1);
\draw[->] (s0) -> node [above] {$1$} (s1);
\draw[->] (s1) -> node [above] {} (a1a);
\draw[->] (s1) -> node [above] {} (a1b);
\draw[->] (a1a) -> node [above] {$0.5$} (s3);
\draw[->] (a1a) -> node [right] {$0.5$} (ssink);
\draw[->] (ssink) -> node [above] {$1$} (s0o);
\draw[->] (a1b) -> node [right,yshift=0.2cm] {$1$} (ssink);
\draw[->] (exr1) -> (s2);
\draw[->] (s3) -> (exr2);
\draw[->] (s2) -> node [above] {$0.3$} (ssink);
\draw[->] (s2) -> node [left] {$0.7$} (s3);
\end{tikzpicture}
\hspace{30pt}
\begin{tikzpicture}[
innode/.style={draw, rectangle, minimum size=0.5cm},
interface/.style={draw, rectangle, minimum size=0.5cm},]
\fill[cyan] (-0.7cm, -1cm)--(-0.7cm, 1cm)--(3.2cm, 1cm)--(3.2cm, -1cm)--cycle;
\node[interface, yshift=0.5cm,fill=white] (t0) {$\enrarg{1}$};
\node[interface, yshift=-0.5cm,fill=white] (t1) {$\exlarg{1}$};
\node[inner sep=0,right=-1.3cm of t0] (enr1) {};
\node[inner sep=0,right=-1.3cm of t1] (enr2) {};
\node[state,right=0.6cm of t0, yshift=-0.5cm, minimum size=0.5cm,fill=white] (t2) {$t_1$};
\node[interface, right=0.6cm of t2,fill=white] (t3) {$\exrarg{1}$};
\node[inner sep=0,right=0.4cm of t3] (exr1) {};

\draw[->] (enr1) -> (t0);
\draw[->] (t1) -> (enr2);
\draw[->] (t0) -> node [below] {$0.3$} (t2);
\draw[->] (t2) -> node [below] {$1$} (t1);
\draw[->] (t0) to [out=10,in=150] node [above] {$0.7$} (t3.west);
\draw[->] (t3) -> (exr1);
\end{tikzpicture}
\caption{Open Markov decision processes $\mdp{A}$ and $\mdp{B}$.  }
\label{fig:openMDPs}
\end{figure}
\begin{figure}[t]
\centering
\vspace{-3mm}
\begin{minipage}[]{0.45\linewidth}
\scalebox{0.9}{
\begin{tikzpicture}[
innode/.style={draw, rectangle, minimum size=0.5cm},
interface/.style={inner sep=0},]
\fill[orange] (-0.5cm, -0.4cm)--(-0.5cm, 0.4cm)--(0.5cm, 0.4cm)--(0.5cm, -0.4cm)--cycle;
\node[innode,fill=white] (mdpA) {$\mdp{A}$};
\node[inner sep=0,right=-1cm of mdpA,yshift=0.13cm] (enrA1) {};
\node[inner sep=0,right=-1cm of mdpA,yshift=-0.13cm] (exlA1) {};
\node[inner sep=0,right=0.45cm of mdpA,yshift=0.13cm] (exrA1) {};
\node[inner sep=0,right=0.45cm of mdpA,yshift=-0.13cm] (exrA2) {};
\draw[->] (enrA1) -> ($(mdpA.west)!0.5!(mdpA.north west)$);
\draw[->] ($(mdpA.west)!0.5!(mdpA.south west)$) -> (exlA1);
\draw[->] ($(mdpA.east)!0.5!(mdpA.north east)$) -> (exrA1);
\draw[<-] ($(mdpA.east)!0.5!(mdpA.south east)$) -> (exrA2);
\node[interface, right=0.1cm of exrA1, yshift=-0.13cm] (seqcomp) {$\seqcomp$};
\fill[cyan] (1.3cm, -0.4cm)--(1.3cm, 0.4cm)--(2.3cm, 0.4cm)--(2.3cm, -0.4cm)--cycle;
\node[innode,fill=white] (mdpB) at (1.8cm, 0cm) {$\mdp{B}$};
\node[inner sep=0,right=-1cm of mdpB,yshift=0.13cm] (enrB1) {};
\node[inner sep=0,right=-1cm of mdpB,yshift=-0.13cm] (enrB2) {};
\node[inner sep=0,right=0.45cm of mdpB] (exrB1) {};
\draw[->] (enrB1) -> ($(mdpB.west)!0.5!(mdpB.north west)$);
\draw[<-] (enrB2) -> ($(mdpB.west)!0.5!(mdpB.south west)$);
\node[inner sep=0,right=0.45cm of mdpB] (exrB) {};
\draw[->] (mdpB) -> (exrB);
\node[inner sep=0,right=0.2cm of exrB] (eqseqcomp) {$=$};
\fill[orange] (3.5cm, -0.4cm)--(3.5cm, 0.4cm)--(4.5cm, 0.4cm)--(4.5cm, -0.4cm)--cycle;
\node[innode, fill=white] (mdpA2) at (4cm, 0cm) {$\mdp{A}$};
\node[inner sep=0,right=-1cm of mdpA2,yshift=0.13cm] (enrA2) {};
\node[inner sep=0,right=-1cm of mdpA2,yshift=-0.13cm] (exlA2) {};
\fill[cyan] (4.5cm, -0.4cm)--(4.5cm, 0.4cm)--(5.5cm, 0.4cm)--(5.5cm, -0.4cm)--cycle;
\node[innode, fill=white] (mdpB2) at (5cm, 0cm) {$\mdp{B}$};
\draw[->] (enrA2) -> ($(mdpA2.west)!0.5!(mdpA2.north west)$);
\draw[<-] (exlA2) -> ($(mdpA2.west)!0.5!(mdpA2.south west)$);
\draw[->] ($(mdpA2.east)!0.5!(mdpA2.north east)$) -> ($(mdpB2.west)!0.5!(mdpB2.north west)$);
\draw[<-] ($(mdpA2.east)!0.5!(mdpA2.south east)$) -> ($(mdpB2.west)!0.5!(mdpB2.south west)$);
\node[inner sep=0,right=0.5cm of mdpB2] (exrB2) {};
\draw[->] (mdpB2) -> (exrB2);
\end{tikzpicture}},
\end{minipage}
\hspace{15pt}
\begin{minipage}[]{0.45\linewidth}
\scalebox{0.9}{
\begin{tikzpicture}[
innode/.style={draw, rectangle, minimum size=0.5cm},
interface/.style={inner sep=0},]
\fill[orange] (-0.5cm, -0.4cm)--(-0.5cm, 0.4cm)--(0.5cm, 0.4cm)--(0.5cm, -0.4cm)--cycle;
\node[innode,fill=white] (mdpA) {$\mdp{A}$};
\node[inner sep=0,right=-1cm of mdpA,yshift=0.13cm] (enrA1) {};
\node[inner sep=0,right=-1cm of mdpA,yshift=-0.13cm] (exlA1) {};
\node[inner sep=0,right=0.45cm of mdpA,yshift=0.13cm] (exrA1) {};
\node[inner sep=0,right=0.45cm of mdpA,yshift=-0.13cm] (exrA2) {};
\draw[->] (enrA1) -> ($(mdpA.west)!0.5!(mdpA.north west)$);
\draw[<-] (exlA1) -> ($(mdpA.west)!0.5!(mdpA.south west)$);
\draw[->] ($(mdpA.east)!0.5!(mdpA.north east)$) -> (exrA1);
\draw[<-] ($(mdpA.east)!0.5!(mdpA.south east)$) -> (exrA2);
\node[interface, right=0.1cm of exrA1, yshift=-0.13cm] (oplus) {$\oplus$};
\fill[cyan] (1.5cm, -0.4cm)--(1.5cm, 0.4cm)--(2.5cm, 0.4cm)--(2.5cm, -0.4cm)--cycle;
\node[innode,fill=white] (mdpB) at (2cm, 0cm) {$\mdp{B}$};
\node[inner sep=0,right=-1cm of mdpB,yshift=0.13cm] (enrB1) {};
\node[inner sep=0,right=-1cm of mdpB,yshift=-0.13cm] (enrB2) {};
\node[inner sep=0,right=0.45cm of mdpB] (exrB1) {};
\draw[->] (enrB1) -> ($(mdpB.west)!0.5!(mdpB.north west)$);
\draw[<-] (enrB2) -> ($(mdpB.west)!0.5!(mdpB.south west)$);
\node[inner sep=0,right=0.45cm of mdpB] (exrB) {};
\draw[->] (mdpB) -> (exrB);
\node[inner sep=0,right=0.2cm of exrB] (eqseqcomp) {$=$};
\fill[orange] (3.5cm, 0cm)--(3.5cm, 0.8cm)--(4.5cm, 0.8cm)--(4.5cm, 0cm)--cycle;
\node[innode, fill=white] (mdpA2) at (4cm, 0.4cm) {$\mdp{A}$};
\node[inner sep=0,right=-1cm of mdpA2,yshift=0.13cm] (enrA2) {};
\node[inner sep=0,right=-1cm of mdpA2,yshift=-0.13cm] (exlA2) {};
\fill[cyan] (3.5cm, 0cm)--(3.5cm, -0.8cm)--(4.5cm, -0.8cm)--(4.5cm, 0cm)--cycle;
\node[innode, fill=white] (mdpB2) at (4cm, -0.4cm) {$\mdp{B}$};
\draw[->] (enrA2) -> ($(mdpA2.west)!0.5!(mdpA2.north west)$);
\draw[<-] (exlA2) -> ($(mdpA2.west)!0.5!(mdpA2.south west)$);
\node[inner sep=0,right=0.5cm of mdpA2,yshift=0.13cm] (exrA21) {};
\node[inner sep=0,right=0.5cm of mdpA2,yshift=-0.13cm] (exrA22) {};
\node[inner sep=0,right=-1cm of mdpB2,yshift=0.13cm] (enrB21) {};
\node[inner sep=0,right=-1cm of mdpB2,yshift=-0.13cm] (enrB22) {};
\draw[->] ($(mdpA2.east)!0.5!(mdpA2.north east)$) -> (exrA21);
\draw[<-] ($(mdpA2.east)!0.5!(mdpA2.south east)$) -> (exrA22);
\draw[->] (enrB21) -> ($(mdpB2.west)!0.5!(mdpB2.north west)$);
\draw[<-] (enrB22) -> ($(mdpB2.west)!0.5!(mdpB2.south west)$);
\node[inner sep=0,right=0.5cm of mdpB2] (exrB2) {};
\draw[->] (mdpB2) -> (exrB2);
\end{tikzpicture}
}
\end{minipage}
\caption{Sequential composition $\mdp{A}\seqcomp\mdp{B}$ and sum $\mdp{A}\oplus\mdp{B}$}
\label{fig:seqcompAndSum}
\end{figure}

\myparagraph{Optimal local schedulers.} 
The idea of compositional reasoning is to analyze the open MDPs individually and combine these results to answer reachability queries on the monolithic MDP. The key challenge is that during the analysis of an individual open MDP, it is unclear which exits are (un)desirable to be reached. 
Equivalently, a priori, we do not know the objective that a scheduler should (locally) optimize for, to resolve the nondeterminism in the open MDP.

\myparagraph{State-of-the-art.}
So far, this problem has been circumvented in the literature on compositional MDP verification. 
In~\cite{JungesS22}, the notion of locally optimal policies is used. In essence, the technique relies on syntactic restrictions, such as open MDPs without nondeterminism or with a dedicated \emph{desirable} exit.
In~\cite{DBLP:conf/aips/NearyVCT22}, it is assumed that agents must optimize to reach one of the exits and that reaching another exit is equivalent to reaching an error.
The results in~\cite{WatanabeEAH23} are the first to support general string diagrams of MDPs. Algorithmically, they enumerate over all (deterministic memoryless) schedulers in every open MDP, reducing the resulting set of schedulers only a posteriori using so-called meagre semantics.
Finally, work on compositional \emph{planning} aims to find a good schedulers compositionally, but without any guarantees on the optimality~\cite{DBLP:conf/ijcai/BarryKL11}. 

\myparagraph{A multi-objective perspective.}
Towards a compositional analysis, we reformulate and generalize our problem slightly. Rather than considering \emph{Given an open MDP, what is the maximal probability to reach a dedicated exit?}, we maximize the probability towards each exit individually, i.e., \emph{What is the maximal probability to reach the first and second exit, respectively?}. For this question, trade-offs are possible: In $\mdp{A}$ when starting in $I_{r,1}$ we either reach $O_{l,1}$ with probability $1$ (with one scheduler) or $O_{r,1}$ with probability $0.5$ (with another scheduler). However, we search for \emph{one scheduler} that makes this trade-off somehow. The unknown objective that a  scheduler in an open MDP optimizes for is not arbitrary, but it is given by this trade-off between reaching the different exits. A key insight is that it suffices to only consider schedulers that refer to an optimal trade-off between the different exits. As the context of the open MDP determines the preferred trade-off, we compute all schedulers that are optimal for a specific trade-off. These schedulers are \emph{Pareto-optimal} and their computation is well-studied~\cite{EtessamiKVY08,ForejtKP12,QuatmannK21}.

\myparagraph{Our approach.}
Towards a compositional algorithm, we suggest to compute Pareto curves recursively on the structure of the given string diagram of MDPs: given the Pareto curves for the open MDPs, we can compute the Pareto curve of their composition. 
As the set of Pareto-optimal schedulers remains exponential, we exploit efficient but approximative approaches to compute sound over-and-under approximations~\cite{ForejtKP12}. In practice, this means that tight approximations can be achieved that are concisely represented using only a  few schedulers. Given these sound approximations for each open MDP, we compute sound approximations\footnote{Sound approximations are  standard in probabilistic model checking, where the standard and highly scalable algorithms~\cite{DBLP:conf/cav/HartmannsK20} provide sound approximations~\cite{DBLP:conf/tacas/HartmannsJQW23}.} for their composition, and ultimately for the whole string diagram.


\myparagraph{Contributions}
Our technical contribution is as follows. We provide a novel  framework for analysing sequentially composed MDPs. 
In particular, it takes off-the-shelf analysis of \emph{multi-objective monolithic MDPs} to provide an \emph{compositional MDP model checking algorithm}. The approximative version of this algorithm computes guaranteed over- and under-approximations of reachability probabilities and scales to models with both billions of states and schedulers, while generating tight bounds.
We implement the algorithm on top of the probabilistic model checker \textsc{Storm} to demonstrate its performance.

\section{Overview}
\label{sec:overview}

\begin{figure}[t]
\centering
\includegraphics[scale=0.33]{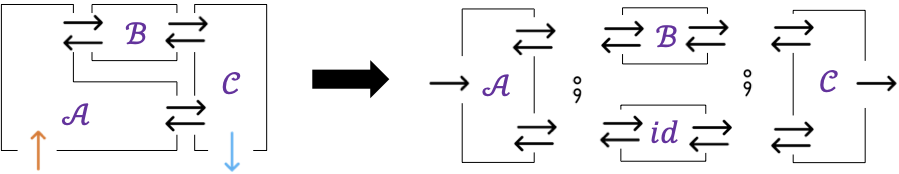}
\caption{From a multi-room MDP to a string diagram $\mdp{A}\seqcomp (\mdp{B} \oplus \id)\seqcomp \mdp{C}$.} 
\label{fig:motivatingexample}
\end{figure}

\myparagraph{Illustrative example.} We consider the small multi-room grid world with three rooms $\mdp{A}, \mdp{B}, \mdp{C}$ and probabilistic outcomes to actions, as illustrated on the left in \cref{fig:motivatingexample}. The doors can be travelled through in both directions. The grid world can be modelled as a monolithic MDP. 
In \cref{fig:motivatingexample}(right), we show how to express the MDP compositionally as a string diagram $\mdp{A}\seqcomp (\mdp{B} \oplus \id) \seqcomp \mdp{C}$, where $\id$ is an MDP where the unique entrance reaches the exit with probability one. Now, if we are interested (how to) reach the (main, rightmost) exit in $\mdp{C}$ from the (main, leftmost) entrance in $\mdp{A}$, we maximize the reachability probability in the monolithic MDP. However, to determine the optimal scheduler compositionally, we must know the optimal reachabilities in the each room individually. This is hard: In particular, it may be true that in order to reach the main exit from the door between $\mdp{A}$ and $\mdp{C}$, it is still optimal to go via room $\mdp{B}$.  

\myparagraph{The multiobjective perspective.}
We consider room $\mdp{A}$ from the perspective of the main entrance. There are two doors, which means that the underlying open MDP has two exit states. \cref{subfig:ExPareto} shows a Pareto-plot that clarifies the possible trade-offs, e.g., for the main entrance\footnote{An actual MDP corresponding to this Pareto curve is given in \cref{subfig:ExOpenMDP}}. In particular, $\point_1$ reflects a scheduler reaching the first door with probability $0.3$, while the second door is then reached with probability $0.1$. The other points reflect other schedulers that reach these doors with different probabilities.
The other entrances in $\mdp{A}$ induce other Pareto curves. In room $\mdp{C}$, there are three exits (two doors and the main exit), making the Pareto curves three-dimensional. For MDP $\id$, the curve is a (trivial) point.

\begin{wrapfigure}{r}[0pt]{0.33\textwidth}
\centering
\vspace{-5mm}
\includegraphics[scale=0.25]{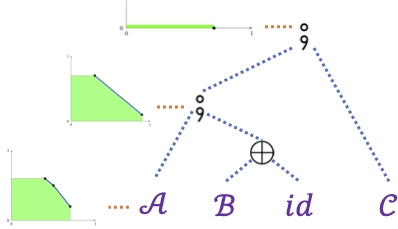}
\caption{Our approach}
\label{fig:syntaxtree}
\vspace{-7mm}
\end{wrapfigure}

\myparagraph{The approach illustrated.}
We approach the syntax tree of the string diagram recursively (see \cref{fig:syntaxtree}), i.e., we consider the string diagram as a syntax tree and (conceptually) annotate the MDPs $\mdp{A}$, $\mdp{B}$, $\mdp{C}$ and $\id$ with Pareto curves.
Instead of computing these Pareto curves precisely, we create sound approximations as in \cref{subfig:AxisPareto} where the vertices of the green area $L$ underapproximate the Pareto curve and the vertices of the green and white area $\mathbb{R}^2 \setminus U$ overapproximate the Pareto-curve. For the algebraic operations, we then combine these approximations. 
This is straightforward for the $\oplus$ as the two MDPs are independent. For the sequential composition, the cyclic dependencies are more involved. However, our algorithm is straightforward: we take the individual approximations of the Pareto curves, translate them to small so-called \emph{shortcut MDPs}, which we then compose. On these small MDPs, we then compute (approximations of) the Pareto curves that are sound approximations to the composition at hand.


\section{Formal Problem Statement}
\label{sec:problem}

We recap MDPs, and discuss string diagrams of MDPs. We then give a multi-objective version of the problem statement, leading to a compositional algorithm. 



For a finite set $X$, we write $\dist{X}$ for the set of distributions on $X$ and $\sdist{X}$ for the subdistributions. 
The support of $\mu \in \sdist{X}$ is denoted by $\supp(\mu)$.

\subsection{Markov Decision Processes}
\begin{definition}[MDP]
An MDP $M = (S, A, P)$ is a tuple with a finite set $S$ of \emph{states}, finite set $\Act$ of \emph{actions}, and a partial \emph{transition function} $P\colon S\times \Act \rightharpoonup \dist{S}$. 
\end{definition}
We use $S_M, \Act_M$ and $P_M$ to refer to the states, actions and transition function of an MDP $M$. For a state $s$, the \emph{enabled actions} are $\enabled{s} = \{ a \in \Act \mid P(s,a) \neq \undefined \}$. A state $s$ is a \emph{terminal} if $\enabled{s} = \emptyset$.
We write $P(s,a,s') \defeq P(s,a)(s')$ if $a \in \enabled{s}$ and $P(s,a,s') \defeq 0$ otherwise. 
A state $s$ is called \emph{absorbing} if $P(s,a,s) = 1$ for all $a \in \enabled{s}$. Terminals can be made absorbing by adding a self-loop.
A \emph{path} $\mdppath$ is an (in)finite alternating sequence of states and actions, i.e., $\mdppath = s_0\xrightarrow{a_0}s_1\xrightarrow{a_1}\dots$ such that $s_i \in S$ and $P(s_i, a_i, s_{i+1}) \neq 0$. 
For a finite path $\mdppath$, $\last{\mdppath}$ denotes the last state of $\mdppath$. 
The set of all finite paths is denoted $\FinPaths{M}$, the set of all infinite paths is denoted by $\InfPaths{M}$. 
We drop the subscript, whenever $M$ is clear from the context.
Finally, a \emph{Markov chain} (MC) is an MDP with $|\enabled{s}| \leq 1$. We write MCs as a tuple $(S,P)$, formally representing the MDP $(S, \{ \bot \}, P)$.

\paragraph{Schedulers.}
Schedulers (a.k.a.\ \emph{policies} or \emph{strategies}) resolve the nondeterminism in an MDP. In general, a (history-dependent) scheduler $\sched$ for MDP $M$ is a (measurable) 
function $\FinPaths{M}\rightarrow \dist{\Act_M}$ with $\supp(\sched(\mdppath)) \subseteq \enabled{\last{\mdppath}}$. 
The set of all history-dependent schedulers is denoted $\Gmsched{M}$. 
A scheduler $\sched$ is \emph{memoryless} (a.k.a. \emph{positional}) if for every $\mdppath,\mdppath' \in \FinPaths{}$,  $\last{\mdppath} = \last{\mdppath'}$ implies $\sched(\mdppath) = \sched(\mdppath')$. The set of all (stochastic) memoryless schedulers is denoted $\Possched{M}$. We often write such schedulers as $S_M \rightarrow \dist{\Act_M}$, i.e., as a map from the last state to a distribution over actions.
Finally, deterministic (memoryless) schedulers map paths (or states, respectively) to Dirac distributions. We write such schedulers as maps from paths to actions,  $\FinPaths{M}\rightarrow \Act_M$. The set of all deterministic memoryless schedulers is denoted $\Dmsched{M}$. We also call these schedulers DM schedulers.

\paragraph{Reachability probabilities.}
Let $M$ be an MDP without terminals.  For an initial state $\initstate$ and a scheduler $\sched$, we obtain probability measure
$\prmeas{M}{\initstate}{\sched} \colon \InfPaths{M} \rightarrow \mathbb{R}_{\geq 0}$
on infinite paths via the standard cylinder set construction~\cite{Baier08}. 
For a set of target states $T \subseteq S_M$, we define the set of paths that visit $T$, $\lozenge T = \{ \mdppath \in \InfPaths{M} \mid  \exists i. \mdppath_i \in T \}$. 
The reachability probability $\Reacha{M,\sched}{\initstate}{T}$ is the integral over the reachability measure, $\Reacha{M,\sched}{\initstate}{T} \defeq \int_{\mdppath \in {\lozenge {T}}} \prmeas{M}{\initstate}{\sched}(\mdppath)$. We relax notation and write $\Reacha{M,\sched}{\initstate}{t} \defeq \Reacha{M,\sched}{\initstate}{\{ t \}}$. We write $\MaxReach{M}{\initstate}{t} \defeq \max_{\sched} \Reacha{M,\sched}{\initstate}{t}$.

\subsection{String Diagrams of MDPs}
MDPs are given as a string diagram, i.e., as algebraically composed \emph{open MDPs}.
\begin{definition}[oMDP]
\label{def:openMDPs}
An \emph{open MDP} (oMDP) $\mdp{A} = (M, \interface)$ is a pair consisting of an MDP $M$ with states $S$ and \emph{open ends} $\interface =(\enr, \enl, \exr, \exl)$, where 
  $\enr, \enl, \exr, \exl\subseteq S$ are pairwise disjoint and totally ordered sets. The states $\en \defeq \enr\cup \enl$ is the \emph{entrances}, and the states  $\ex \defeq \exr\cup\exl$ is the \emph{exits}, respectively. 
\end{definition}
\cref{fig:openMDPs} shows two oMDPs as examples. 
We assume that exactly the exits are terminals.
We lift the notions of policies and reachability probabilities straightforwardly from MDPs\footnote{Where terminals have to be made absorbing by adding a self-loop.}. 
As the open ends are ordered, we may enumerate, e.g., its entrances $I$ using $\enrarg{1},\dots, \enrarg{|\enr|}, \enlarg{1},\dots, \enlarg{|\enl|}$ from \emph{rightward} to \emph{leftward}. We specialise the notation for reachability probabilities $\Reacha{*}{i}{j}$ to denote $\Reacha{*}{I_i}{O_j}$.
We explicitly write $\typemdp{\mdp{A}}\colon (\arr{m}, \arl{m})\rightarrow  (\arr{n}, \arl{n})$ for the \emph{arities} of $\mdp{A}$, where $\arr{m} \defeq |\enr|$, $\arl{m} \defeq |\exl|$, $\arr{n} \defeq |\exr|$, and $\arl{n} \defeq |\enl|$.

String diagrams of MDPs use two algebraic operations on oMDPs: the \emph{sequential composition $\seqcomp$} and the \emph{sum $\oplus$}, that we illustrated already in \cref{fig:seqcompAndSum}.



\begin{definition}[$\seqcomp$ operator]\label{def:seqOMDP}
    Let $\mdp{A}$, $\mdp{B}$ be oMDPs, $\typemdp{\mdp{A}} = (\arr{m}, \arl{m}) \rightarrow (\arr{l}, \arl{l})$, $\typemdp{\mdp{B}} = (\arr{l}, \arl{l}) \rightarrow (\arr{n}, \arl{n})$. Their \emph{sequential composition} $\mdp{A} \seqcomp \mdp{B}$  is an oMDP $(M, \interface')$ with  $\interface' = (\enr^{\mdp{A}}, \enl^{\mdp{B}}, \exr^{\mdp{B}}, \exl^{\mdp{A}})$,  $M \defeq (S^{\mdp{A}} \uplus S^{\mdp{B}}, A^{\mdp{A}} \uplus A^{\mdp{B}}, P)$ and $P$  s.t.\  
    \begin{align*}
        P(s, a, s') &\defeq \begin{cases}
            P^{\mdp{D}}(s, a, s') & \text{ if $\mdp{D} \in \{\mdp{A}, \mdp{B}\}$, $s\in S^{\mdp{D}}$, $a\in A^{\mdp{D}}$, $s'\in S^{\mdp{D}}$, }\\
            1 &\text{ if $s = \exrarg{i}^{\mdp{A}}$, $s' = \enrarg{i}^{\mdp{B}}$ for some $1 \leq i \leq \arr{l}$},\\
             1 &\text{ if $s = \exlarg{i}^{\mdp{B}}$, $s' = \enlarg{i}^{\mdp{A}}$ for some $1 \leq i \leq \arl{l}$},\\
            0 &\text{otherwise. }
        \end{cases}
    \end{align*}
\end{definition}
If $\mdp{A}\seqcomp\mdp{B}$ is well-defined, we say that $\mdp{A} \seqcomp \mdp{B}$ type-matches.

\begin{definition}[$\oplus$ operator]
\label{def:sumOMDP}
    Let $\mdp{A}, \mdp{B}$ be oMDPs.
    Their \emph{sum} $\mdp{A}\oplus\mdp{B}$ is an oMDP $(M, \interface')$ with  $\interface' = (\enr^{\mdp{A}}\uplus \enr^{\mdp{B}}, \enl^{\mdp{A}}\uplus \enl^{\mdp{B}}, \exr^{\mdp{A}}\uplus \exr^{\mdp{B}}, \exl^{\mdp{A}}\uplus \exl^{\mdp{B}})$ and  $M = (S^{\mdp{A}} \uplus S^{\mdp{B}}, A^{\mdp{A}} \uplus A^{\mdp{B}}, P)$ where $P$ is given by   
    \begin{align*}
        P(s, a, s') &\defeq \begin{cases}
            P^{\mdp{D}}(s, a, s') &\text{ if $\mdp{D} \in \{\mdp{A}, \mdp{B}\}$, $s\in S^{\mdp{D}}$, $a\in A^{\mdp{D}}$, and $s'\in S^{\mdp{D}}$, }\\
            0 &\text{otherwise. }
        \end{cases}
    \end{align*}
Here, the total order in $\enr^{\mdp{A}}\uplus \enr^{\mdp{B}}$ is given by $\enrarg{1}^{\mdp{A}},\dots, \enrarg{\arr{m}}^{\mdp{A}}, \enrarg{1}^{\mdp{B}},\dots, \enrarg{\arr{k}}^{\mdp{B}}$, and the total orders in other open ends are defined similarly. 
\end{definition}

\noindent Using the algebraic operators, we define \emph{string diagrams of MDPs}.
\begin{definition}
\label{def:sd_mdps}
A \emph{string diagram $\sd{D}$ of MDPs} is a term adhering to the  grammar
\[ \sd{D} ::= \mdp{A} \mid \sd{D} \seqcomp \sd{D} \mid \sd{D} \oplus \sd{D}\]
where $\mdp{A}$ ranges over oMDPs.
 The \emph{operational semantics} $\semantics{\sd{D}}$ is the oMDP which is inductively defined by Definitions~\ref{def:seqOMDP} and~\ref{def:sumOMDP}.
\end{definition}
Throughout this paper, we assume that $\semantics{\sd{D}}$ is well-defined, i.e., that all operations type-match. We omit syntactic sugar operations, such as  probabilistic or nondeterministic branching, as these can be modelled inside oMDPs. In the literature, the $\semantics{\sd{D}}$ are also  referred to as the \emph{monolithic MDP} for $\sd{D}$.

\begin{mdframed}
\textbf{Single-Exit Problem Statement:}
Given a string diagram $\sd{D}$, an entrance $i$, an exit $j$, and an error bound $\epsilon\in \probinterval$, compute a scheduler $\sched$ such that 
\begin{equation*}
     \MaxReach{\semantics{\sd{D}}}{i}{j} - \Reacha{\semantics{\sd{D}},\sched}{i}{j} \leq \epsilon.
\end{equation*}
\end{mdframed}
We remark that DM schedulers suffice for this problem statement.

\begin{remark}
\label{rem:difference_CAV23}
String diagrams are traditionally graphical languages based on category theory, which involve not only terms but also equations; see~\cite{MacLane2,selinger2011survey,hinze_marsden_2023}. 
The definition of string diagrams of MDPs in~\cite{WatanabeEAH23} follows in this tradition 
and satisfies certain equational axioms. 
In this paper, the equations do not play a role explicitly; our algorithms assume a syntactic presentation. Solely for the purpose of exposition, we use a more concise definition where some axioms do not hold (although they ``essentially'' hold, modulo isomorphisms and removing redundant open ends). All definitions, results, and even proofs in this paper are concretely described and self-contained, without any use of category theory. 
\end{remark}

\subsection{A Multi-Objective Generalization}
\label{subsec:multi-ob-generalization}
Towards a recursive, compositional formulation of the problem statement, we generalize the single-exit problem to allow for multiple exits. Concretely, the generalized problem statement is to compute \emph{Pareto curves}~\cite{PapadimitriouY00,ForejtKP12} that represent combinations of reachability probabilities towards a set of exits.

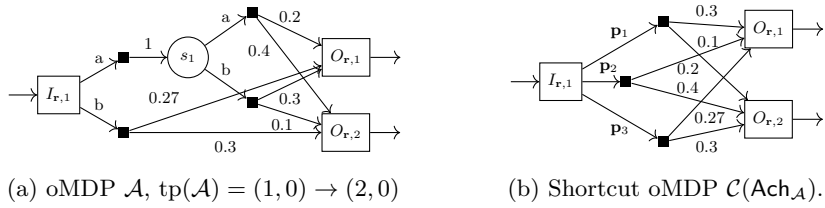
\begin{figure}[t]
    \centering
    \begin{minipage}[b]{0.45\hsize}
    \begin{tikzpicture}[
innode/.style={draw, rectangle, minimum size=0.5cm},
interface/.style={draw, rectangle, minimum size=0.5cm},]
        \node[interface,fill=white,yshift=1cm] (s0) {\scalebox{0.7}{$\enrarg{1}$}};
        \node[inner sep=0,right=-1cm of s0] (enr1) {};
        \node[inner sep=2pt, fill=black,right=0.5cm of s0, yshift=0.5cm] (as0) {};
        \node[inner sep=2pt, fill=black,right=0.5cm of s0, yshift=-0.5cm] (bs0) {};
        \node[state,right=0.5cm of as0, minimum size=0.5cm,fill=white] (s1) {\scalebox{0.7}{$s_1$}};
        \node[inner sep=2pt, fill=black,right=0.5cm of s1, yshift=0.6cm] (as1) {};
        \node[inner sep=2pt, fill=black,right=0.5cm of s1, yshift=-0.6cm] (bs1) {};
        \node[interface,right=1.5cm of s1,fill=white] (s3) {\scalebox{0.7}{$\exrarg{1}$}};
        \node[interface,right=-0.65cm of s3,fill=white, yshift=-1cm] (s4) {\scalebox{0.7}{$\exrarg{2}$}};
        \node[inner sep=0,right=0.4cm of s3] (exr1) {};
        \node[inner sep=0,right=0.4cm of s4] (exr2) {};
        \draw[->] (enr1) -> (s0);
        \draw[->] (s0) -> node [above] {$\scalebox{0.7}{a}$} (as0);
        \draw[->] (s0) -> node [above] {$\scalebox{0.7}{b}$} (bs0);
        \draw[->] (as0) -> node [above] {$\scalebox{0.7}{1}$} (s1);
        \draw[->] (s1) -> node [above] {$\scalebox{0.7}{a}$} (as1);
        \draw[->] (s1) -> node [above] {$\scalebox{0.7}{b}$} (bs1);
        \draw[->] (bs0) -> node [left, xshift=-0.5cm] {$\scalebox{0.7}{0.27}$} (s3);
        \draw[->] (bs0) -> node [below] {$\scalebox{0.7}{0.3}$} (s4);
        \draw[->] (as1) -> node [above] {$\scalebox{0.7}{0.2}$} (s3); 
        \draw[->] (as1) -> node [left, yshift=0.2cm, xshift=-0.2cm] {$\scalebox{0.7}{0.4}$} (s4); 
        \draw[->] (bs1) -> node [below] {$\scalebox{0.7}{0.3}$} (s3);
        \draw[->] (bs1) -> node [below,xshift=-0.1cm, yshift=0.05cm] {$\scalebox{0.7}{0.1}$} (s4);
        \draw[->] (s3) -> (exr1);
        \draw[->] (s4) -> (exr2);
    \end{tikzpicture}
    \centering
    \subcaption{oMDP $\mdp{A}$, $\typemdp{\mdp{A}} = (1, 0) \rightarrow (2, 0)$}\label{subfig:ExOpenMDP}
    \end{minipage}
    \begin{minipage}[b]{0.54\hsize}
    \centering
    \begin{tikzpicture}[
innode/.style={draw, rectangle, minimum size=0.5cm},
interface/.style={draw, rectangle, minimum size=0.5cm},]
        \node[interface,fill=white,yshift=1cm] (s0) {\scalebox{0.7}{$\enrarg{1}$}};
        \node[inner sep=0,right=-1cm of s0] (enr1) {};
        \node[inner sep=2pt, fill=black,right=1cm of s0, yshift=0.8cm] (as0) {};
        \node[inner sep=2pt, fill=black,right=0.5cm of s0] (bs0) {};
         \node[inner sep=2pt, fill=black,right=1cm of s0, yshift=-0.8cm] (cs0) {};
        \node[interface,right=1.5cm of bs0,fill=white,yshift=0.7cm] (s3) {\scalebox{0.7}{$\exrarg{1}$}};
        \node[interface,right=-0.65cm of s3,fill=white, yshift=-1.2cm] (s4) {\scalebox{0.7}{$\exrarg{2}$}};
        \node[inner sep=0,right=0.4cm of s3] (exr1) {};
        \node[inner sep=0,right=0.4cm of s4] (exr2) {};
        \draw[->] (enr1) -> (s0);
        \draw[->] (s0) -> node [above] {$\scalebox{0.7}{$\point_1$}$} (as0);
        \draw[->] (s0) -> node [above, xshift=0.1cm] {$\scalebox{0.7}{$\point_2$}$} (bs0);
        \draw[->] (s0) -> node [below] {$\scalebox{0.7}{$\point_3$}$} (cs0);
        \draw[->] (as0) -> node [above] {$\scalebox{0.7}{0.3}$} (s3);
        \draw[->] (as0) -> node [above, yshift=0.1cm] {$\scalebox{0.7}{0.1}$} (s4);
        \draw[->] (bs0) -> node [below, yshift=0.05cm] {$\scalebox{0.7}{0.2}$} (s3);
        \draw[->] (bs0) -> node [above, yshift=-0.05cm] {$\scalebox{0.7}{0.4}$} (s4);
        \draw[->] (cs0) -> node [below, yshift=-0.15cm] {$\scalebox{0.7}{0.27}$} (s3);
        \draw[->] (cs0) -> node [below] {$\scalebox{0.7}{0.3}$} (s4);
        \draw[->] (s3) -> (exr1);
        \draw[->] (s4) -> (exr2);
    \end{tikzpicture}
    \subcaption{Shortcut oMDP $\cmdp{\fachievable{\mdp{A}}}$.}\label{subfig:ExShortcutMDP}
    \end{minipage}
    \caption{Two oMDPs. We omit transitions to a sink for readability. }
    \label{fig:exParetoOpenMDPs}
\end{figure} 

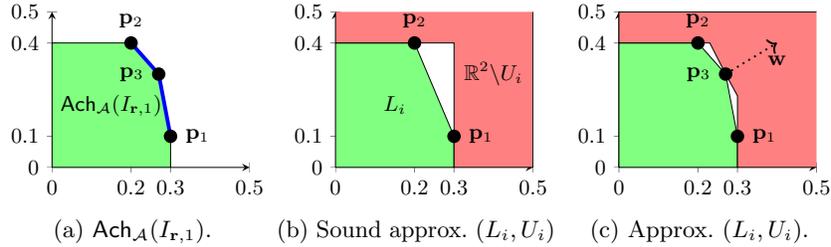
\begin{figure}[t]
    \centering
    \begin{minipage}[t]{0.3\hsize}
    \centering
    \pgfplotsset{width=4.5cm,compat=1.9}
			
 \scalebox{0.9}{
 \begin{tikzpicture}
        \begin{axis}[
            axis lines = left,
            xmin=0, xmax=0.5,
            ymin=0, ymax=0.5,
            xtick={0, 0.2, 0.3, 0.5},
            ytick={0, 0.1, 0.4, 0.5},
            axis background/.style={fill=white},
        ]
        \addplot[fill=green!50, very thin] coordinates {(0.3,0) (0.3,0.1) (0.27, 0.3) (0.2,0.4) (0,0.4) (0, 0)} -- cycle;
        \addplot[
            color=blue,
            ultra thick
        ]
        coordinates {
            (0.2,0.4) (0.27,0.3)
        };
        \addplot[
            color=blue,
            ultra thick
        ]
        coordinates {
            (0.27,0.3) (0.3, 0.1)
        };
        \node[point,label=0:$\point_1$] at (axis cs:0.3,0.1) {};
	\node[point,label=90:$\point_2$] at (axis cs:0.2,0.4) {};
 \node[point,label=180:$\point_3$] at (axis cs:0.27,0.3) {};	
        \node[] at (axis cs:0.15,0.2) {$\achievable{\mdp{A}}{\enrarg{1}}$};
        \end{axis}
        \end{tikzpicture}
    }
    \subcaption{$\achievable{\mdp{A}}{\enrarg{1}}$.}\label{subfig:ExPareto}
    \end{minipage}
    \begin{minipage}[t]{0.3\hsize}
    \centering
    \pgfplotsset{width=4.5cm,compat=1.9}
    \scalebox{0.9}{
        \begin{tikzpicture}
        \begin{axis}[
            axis lines = left,
            xmin=0, xmax=0.5,
            ymin=0, ymax=0.5,
            xtick={0, 0.2, 0.3, 0.5},
            ytick={0, 0.1, 0.4, 0.5},
            axis background/.style={fill=white},
        ]
        \addplot[fill=green!50, very thin] coordinates {(0.3,0) (0.3,0.1) (0.2,0.4) (0,0.4) (0, 0)} -- cycle;
        \addplot[fill=red!50, very thin] coordinates {(0.5, 0.5) (0.5, 0) (0.3,0) (0.3,0.4) (0,0.4) (0, 0.5)};

        \node[point,label=0:$\point_1$] at (axis cs:0.3,0.1) {};
	\node[point,label=90:$\point_2$] at (axis cs:0.2,0.4) {};
        \node[] at (axis cs:0.15,0.2) {$L_i$};
	\node[] at (axis cs:0.4,0.3) {$\real^{2}\backslash U_i$};		
        \end{axis}
        \end{tikzpicture}
        }
    \subcaption{Sound approx.\ $(L_i,U_i)$ }\label{subfig:AxisPareto}
    \end{minipage}
     \begin{minipage}[t]{0.3\hsize}
    \centering
    \pgfplotsset{width=4.5cm,compat=1.9}
        \scalebox{0.9}{
        \begin{tikzpicture}
        \begin{axis}[
            axis lines = left,
            xmin=0, xmax=0.5,
            ymin=0, ymax=0.5,
            xtick={0, 0.2, 0.3, 0.5},
            ytick={0, 0.1, 0.4, 0.5},
            axis background/.style={fill=white},
        ]
        \addplot[fill=green!50, very thin] coordinates {(0.3,0) (0.3,0.1) (0.27, 0.3) (0.2,0.4) (0,0.4) (0, 0)} -- cycle;
        \addplot[fill=red!50, very thin] coordinates {(0.5, 0.5) (0, 0.5) (0, 0.4)  (0.23,0.4) (0.27,0.3) (0.3, 0.23) (0.3, 0) (0.5, 0)};
        \addplot[
            color=black,
            dotted,
            ->,
            thick
        ]
        coordinates {
            (0.27,0.3)(0.4,0.4)
        };
        \node[point,label=0:$\point_1$] at (axis cs:0.3,0.1) {};
	\node[point,label=90:$\point_2$] at (axis cs:0.2,0.4) {};
 \node[point,label=180:$\point_3$] at (axis cs:0.27,0.3) {};	
 \node[label=90:$\convw{w}$] at (axis cs:0.4,0.28) {};
        \end{axis}
        \end{tikzpicture}
        }
    \subcaption{Approx.\ $(L_i, U_i)$.}\label{subfig:UpdatingPareto}
    \end{minipage}
    \caption{Pareto curves (blue) and achievable regions and their underapproximation (green) as well as their overapproximation (white+green). }
    \label{fig:ParetoOpenMDPs}
\end{figure}

\begin{example}
Consider the oMDP $\mdp{A}$ in~\cref{subfig:ExOpenMDP}. From the entrance, the scheduler $\sigma_1$ that always chooses the $a$ yields reachability probabilities $(0.2, 0.4)$ to reach the first and second exit, respectively, while $\sigma_2$ that chooses $b$ in $s_1$ yields $(0.3, 0.1)$. 
\end{example}

\newcommand{\vtor}[1]{\langle #1 \rangle}
\paragraph{Geometry.}
For two points $\point, \point' \in \mathbb{R}^n$, we write $\point \preceq \point'$ for pointwise inequality, i.e., $p_j \leq p'_j$ for every $j \leq n$.  
The set of (normalized) weight vectors of dimension $n \in \nat$ is given by $\weights{n} = \{ \convw{w} = \vtor{w_1, \dots, w_n} \in \mathbb{R}^n_{\geq 0} \mid \sum_{i} w_i = 1 \}$. 
The \emph{convex closure} of a set $X$ is the set $\convcl{X} \defeq \{ \sum w_i \cdot \point_i \mid w \in \weights{n},\, \point_1,\dots,\point_n \in X \}$. 
The \emph{downward closure} of a set $X$ is the set 
$\dw{X} \defeq \{ \point \mid \exists \point' \in X. \point \preceq \point'  \}$. The downward convex closure is $\dwconvcl{X} \defeq \dw{\convcl{X}}$. 
A set $X$ is convex or downward-closed, if $X$ is equal to its convex- or downward-closure, respectively. A convex, downward-closed set is \emph{finitely generated}, if it is the  downward convex closure of a finite set of points that we call \emph{vertices}. 

\smallskip 
\noindent
We write $\Reacha{\mdp{A}, \sigma}{i}{\ex} \in \mathbb{R}^\ex$ for a point with $\Reacha{\mdp{A}, \sigma}{i}{\ex}(j) = \Reacha{\mdp{A}, \sigma}{i}{j}$.

\begin{definition}[Achievable]
\label{def:achievable}
For an oMDP $\mdp{A}$, $i \in \en^\mdp{A}$ and a scheduler $\sigma$  
\[ \achievablesched{\mdp{A}}{i}{\sched} \defeq \{ \point \in \mathbb{R}^\ex \mid  \point \preceq \Reacha{\mdp{A},\sched}{i}{\ex} \} \] 
is the points achieved by $\sigma$. For a set of schedulers $\Sigma$, 
$
\achievablesched{\mdp{A}}{i}{\Sigma} \defeq \bigcup_{\sigma \in \Sigma} {\achievablesched{\mdp{A}}{i}{\sigma}}.  
$
The set of \emph{achievable points} is $\achievable{\mdp{A}}{i} \defeq \achievablesched{\mdp{A}}{i}{\Gmsched{\mdp{A}}}$. 
The \emph{Pareto curve} is the set \[ \semFunctor(\mdp{A}, i) \defeq \{ \point \in \achievable{\mdp{A}}{i} \mid \text{for all } \point' \succ \point: \point' \not\in \achievable{\mdp{A}}{i}    \}. \] A scheduler $\sigma$ is Pareto-optimal (w.r.t. $i$), if $\Reacha{\mdp{A}, \sigma}{i}{O} \in \semFunctor(\mdp{A}, i)$. 
\end{definition}
Points in the Pareto curve are also called \emph{Pareto-optimal}. The set of achievable points is convex and downward-closed, the downward closure of the Pareto curve is the set of achievable points~\cite{EtessamiKVY08,ForejtKNPQ11}. \cref{subfig:ExPareto} illustrates the achievable points (the green region)  and the Pareto curve (the blue line) of $\mdp{A}$ from the entrance $\enrarg{1}$. 
From facts on the necessary schedulers in multi-objective model-checking~\cite{EtessamiKVY08} with the fact that exits are absorbing, we only need to consider memoryless schedulers and the vertices of the achievable points all correspond to DM schedulers:
\begin{lemma}\label{lem:dmsuffices} For oMDP $\mdp{A}$ and entrance $i \in \en$: 
 \[ \dw{\semFunctor(\mdp{A}, i)} = \achievable{\mdp{A}}{i} = \achievablesched{\mdp{A}}{i}{\Possched{\mdp{A}}} = \dwconvcl{\achievablesched{\mdp{A}}{i}{\Dmsched{\mdp{A}}}}. \]
\end{lemma}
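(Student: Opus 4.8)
The plan is to reduce the statement to known results on multi-objective reachability. Since by assumption the exits are exactly the terminals and are made absorbing, reaching a given exit $O_j$ is a reachability objective towards a distinct absorbing state, and maximizing the probabilities towards all exits simultaneously is an instance of \emph{multi-objective reachability} on the finite MDP underlying $\mdp{A}$ with initial state $I_i$ (that $\mdp{A}$ may be cyclic, as produced by bidirectional composition, is harmless here). I would invoke the multi-objective analysis of~\cite{EtessamiKVY08,ForejtKNPQ11}, which gives three facts: the achievable set $\achievable{\mdp{A}}{i} \subseteq [0,1]^{\ex}$ is a closed, bounded, convex, downward-closed polytope; every point of it is dominated by the point of some memoryless randomized scheduler; and each of its finitely many vertices is attained by a DM scheduler, because a vertex maximizes a linear objective $\convw{w}\cdot\point$ over the achievable set, and $\max_\sigma \convw{w}\cdot \Reacha{\mdp{A},\sigma}{i}{\ex}$ is a single-objective (weighted) reachability problem, for which DM schedulers are optimal.

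Granting these facts, I would first dispatch the easy inclusions. From $\Dmsched{\mdp{A}} \subseteq \Possched{\mdp{A}} \subseteq \Gmsched{\mdp{A}}$ and monotonicity of $\achievablesched{\mdp{A}}{i}{\cdot}$ in the scheduler set, we get $\achievablesched{\mdp{A}}{i}{\Dmsched{\mdp{A}}} \subseteq \achievablesched{\mdp{A}}{i}{\Possched{\mdp{A}}} \subseteq \achievable{\mdp{A}}{i}$. As $\achievable{\mdp{A}}{i}$ is convex and downward-closed, applying $\dwconvcl{\cdot}$ to the leftmost term keeps it inside, so $\dwconvcl{\achievablesched{\mdp{A}}{i}{\Dmsched{\mdp{A}}}} \subseteq \achievable{\mdp{A}}{i}$. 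Furthermore, since $\achievable{\mdp{A}}{i}$ is closed, bounded, convex and downward-closed, every achievable point is dominated by a maximal one, which yields $\dw{\semFunctor(\mdp{A}, i)} = \achievable{\mdp{A}}{i}$ -- the cited identity between the downward closure of the Pareto curve and the achievable set.

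The remaining inclusions are $\achievable{\mdp{A}}{i} \subseteq \dwconvcl{\achievablesched{\mdp{A}}{i}{\Dmsched{\mdp{A}}}}$ and $\achievable{\mdp{A}}{i} \subseteq \achievablesched{\mdp{A}}{i}{\Possched{\mdp{A}}}$. For the first, the achievable polytope equals the downward convex closure of its vertices; each vertex is exactly attained by a DM scheduler, hence lies in $\achievablesched{\mdp{A}}{i}{\Dmsched{\mdp{A}}}$, so $\achievable{\mdp{A}}{i} = \dwconvcl{\{\text{vertices}\}} \subseteq \dwconvcl{\achievablesched{\mdp{A}}{i}{\Dmsched{\mdp{A}}}}$. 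For the second, any $\point \in \achievable{\mdp{A}}{i}$ is dominated by a Pareto-optimal point, which is a convex combination of such vertices; the multi-objective result supplies a single memoryless randomized scheduler $\sigma$ with $\Reacha{\mdp{A},\sigma}{i}{\ex} \succeq \point$, whence $\point \in \achievablesched{\mdp{A}}{i}{\sigma} \subseteq \achievablesched{\mdp{A}}{i}{\Possched{\mdp{A}}}$. Chaining the inclusions collapses all four sets into one equality.

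The main obstacle is concentrated entirely in the two multi-objective facts imported from~\cite{EtessamiKVY08}: that memoryless randomized schedulers suffice to realize every achievable vector (so no memory is needed despite possible cycles), and that the vertices of the achievable region are attained by DM schedulers. Reconstructing these from scratch would require the occupation-measure (LP) characterization of achievable vectors together with the observation that its basic feasible solutions correspond to deterministic memoryless schedulers; I would instead cite them. A minor technical point to verify is compactness of $\achievable{\mdp{A}}{i}$, needed for the Pareto frontier to dominate every achievable point; this holds because all coordinates lie in $[0,1]$ and the set is a finite-vertex polytope.
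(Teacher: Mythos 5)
Your proposal is correct and follows essentially the same route as the paper, which does not spell out a proof but justifies the lemma by exactly the facts you import from~\cite{EtessamiKVY08,ForejtKNPQ11} (sufficiency of memoryless randomized schedulers given absorbing exits, and DM schedulers attaining the vertices via weighted single-objective reachability), combined with convexity and downward-closedness of the achievable set. Your explicit chaining of the inclusions is just the bookkeeping the paper leaves implicit.
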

We define the \emph{error} $\gap{X}{Y}$ between downward-closed convex sets $X \subseteq Y \subseteq \mathbb{R}^n$ using the $L^2$ norm~$\|\_\|_2$ as in~\cite{DBLP:journals/fmsd/QuatmannJK22}:
$
    \gap{X}{Y} \defeq  \sup_{\point \in Y}  \inf_{\point' \in X}\|\point - \point'  \|_2
$.

To simplify notation, we write $\familyachievablesched{\mdp{A}}{\Sigma}$ for the indexed family $\big(\achievablesched{\mdp{A}}{i}{\Sigma}\big)_{i\in \en}$ of the achievable points.  Given families $X = \{X_i\}_{i \leq k}$ and $Y =\{Y_i\}_{i \leq k}$ with $X(i) \subseteq Y(i)$, we  define $\gap{X}{Y} \defeq \sup_{i \leq k} \gap{X_i}{Y_i}$.  

We conservatively extend the single-exit problem statement. We now want to find a \emph{set} of schedulers such 
that the error between the achieved points corresponding to this set and the (unknown) Pareto curve is small.
\begin{mdframed}
{\bf Multi-Exit Problem Statement:} 
Given a string diagram $\sd{D}$, an error bound $\epsilon \in \probinterval$,
compute a set of schedulers $\Sigma$ s.t.\ $\gap{\familyachievablesched{\mdp{A}}{\Sigma}} {\fachievable{\mdp{A}}} \leq \epsilon$.
\end{mdframed}

\section{Compositional Algorithm}
\label{sec:naive_comp_algo}

We present a compositional algorithm by soundly approximating Pareto curves. 



\subsection{Approximating Pareto Curves on oMDPs}

For an oMDP $\mdp{A}$, we can efficiently approximate $\fachievable{\mdp{A}}$ via an off-the-shelf multi-objective model checking algorithm~\cite{ForejtKP12,QuatmannK21}. We  outline this algorithm, tailored to oMDPs. The key idea is that the algorithm iteratively refines a sound approximation of the Pareto curve. 

\begin{definition}
\label{def:underoverapprox}
    Let $\mdp{A}$ be an oMDP. An \emph{under-approximation} $L$ is a family $L \defeq (L_i)_{i\in \en}$ such that $L_i$ is a convex, downward-closed, and finitely generated with $L_i\subseteq \achievable{\mdp{A}}{i}$ for  $i\in \en$. 
    An \emph{over-approximation} $U$ is analogously defined, with 
    $\achievable{\mdp{A}}{i} \subseteq U_i$ for $i\in \en$. 
    We call $(L,U)$ a \emph{sound approximation} for $\mdp{A}$.
\end{definition}

\begin{algorithm}[t]
\caption{\texttt{approxMultiObjMDP}: Approximation of the Pareto curve}
\begin{algorithmic}[1]
\State \textbf{Input:} an oMDP $\mdp{A}$ and an imprecision $\eta\in \probinterval$.
\State \textbf{Output:} a sound approximation $(L,U)$ of $\mdp{A}$ s.t.\ $\gap{L}{U}\leq \eta$.
\State initialize $L_i \defeq \emptyset$ and $U_i\defeq \sdist{\ex}$, for $i\in \en$.  
\While{$\gap{L}{U} > \eta$}
\State select an entrance $i\in \en$, a weight vector $\convw{w}\defeq (w_o)_{o\in \ex}$, and $\delta \geq 0$. \label{select_weights_error}
\State find  $l_{\convw{w}}, u_{\convw{w}}$ s.t.\   $l_{\convw{w}} \leq   \sup_{\sigma} \WReacha{\sigma}{\convw{w}}{i} \leq u_{\convw{w}}$,  $|l_{\convw{w}}-u_{\convw{w}}|\leq \delta$.
\State compute $\point^{l}$ such that $l_{\convw{w}} = \convw{w} \cdot \point^{l}$. 
\State update $L_i  \defeq L_i \cup \{\point^{l}\}$ and $U_i \defeq U_i \cap  \{\point \mid \convw{w}\cdot \point \leq u_{\convw{w}}\}$.
\EndWhile
\State \textbf{return} $(L, U)$.
\end{algorithmic}
\label{alg:approxMultiObjMDP}
\end{algorithm}
\cref{alg:approxMultiObjMDP} summarizes the approach. 
$L$ and $U$ are initialized as a trivial approximation. The algorithm iteratively refines them by computing \emph{weighted reachability} probabilities for some weight vector $\convw{w}  \in \weights{\ex}$, which is adequately chosen~\cite{QuatmannK21}\footnote{While our algorithm is indeed correct for $\delta > 0$, we only discuss $\delta=0$ here.}. For that $\convw{w}$, we denote the weighted reachability using $\WReacha{\mdp{A},\sched}{\convw{w}}{i} \defeq \sum_{j \in \ex} \convw{w}_j \cdot \Reacha{\mdp{A},\sched}{i}{j}$. A scheduler $\tau$ is \emph{weighted optimal} w.r.t.\  $\convw{w}$ and $i$ if  $\WReacha{\mdp{A},\tau}{\convw{w}}{i} =\sup_{\sigma\in \Scheds{\mdp{A}}} \WReacha{\mdp{A},\sigma}{\convw{w}}{i}$.
Weighted reachability can be computed via standard reachability query on a mildly modified MDP that has a fresh unique target and sink state. This implies that DM schedulers suffice for optimality. 
Furthermore, every Pareto optimal scheduler optimizes some weighted reachability. Finally, from a lower bound  $l_{\convw{w}}$ we can compute an achievable $\point^l$ and use this point to update $L$.
As the weighted optimal scheduler is optimal, we obtain a linear inequality on the reachability probabilities that can be achieved in the direction of $\convw{w}$\footnote{By construction, $U_i$ is convex, downward-closed and finitely generated.}. \Cref{subfig:UpdatingPareto} illustrates how~\cref{alg:approxMultiObjMDP} works. 

\begin{proposition}[\!\!\cite{QuatmannK21}]
\label{prop:algmdpcorrect}
    \Cref{alg:approxMultiObjMDP} is correct.
\end{proposition}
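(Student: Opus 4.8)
The plan is to prove correctness by decomposing the claim into three parts, matching the output specification of \cref{alg:approxMultiObjMDP}: (i) after every iteration the pair $(L,U)$ is a sound approximation of $\mdp{A}$, i.e.\ it is a loop invariant; (ii) on termination the \textbf{while}-guard yields the required bound $\gap{L}{U}\le\eta$; and (iii) the loop terminates. Throughout I would fix an entrance $i\in\en$ and reason componentwise on $L_i,U_i$, since the family-level statement follows by the supremum over $i$ in the definition of $\gap{\cdot}{\cdot}$ on families. The one structural fact I would set up first is the bridge between the scalar optimization solved per iteration and the geometry of the achievable set: for a memoryless scheduler $\sigma$ we have $\WReacha{\mdp{A},\sigma}{\convw{w}}{i}=\convw{w}\cdot\Reacha{\mdp{A},\sigma}{i}{\ex}$, and because $\convw{w}$ has nonnegative entries, \cref{lem:dmsuffices} gives $\sup_{\sigma}\WReacha{\mdp{A},\sigma}{\convw{w}}{i}=\sup_{\point\in\achievable{\mdp{A}}{i}}\convw{w}\cdot\point$. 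This identity is used repeatedly below.

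For (i) I would argue by induction on the iterations. In the base case $L_i=\emptyset\subseteq\achievable{\mdp{A}}{i}$, and $U_i=\sdist{\ex}\supseteq\achievable{\mdp{A}}{i}$ because the reachability probabilities to the pairwise-distinct absorbing exits form a subdistribution; note $U_i$ is a convex, downward-closed, finitely generated polytope. For the inductive step on the under-approximation, the recorded point $\point^l$ satisfying $l_{\convw{w}}=\convw{w}\cdot\point^l$ is the reachability vector of a weighted-optimal scheduler $\tau$ witnessing the lower bound, hence $\point^l\in\achievable{\mdp{A}}{i}$; since $\achievable{\mdp{A}}{i}$ is convex and downward-closed (\cref{lem:dmsuffices}), the downward convex closure of the recorded points remains inside $\achievable{\mdp{A}}{i}$. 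For the over-approximation, the bound $u_{\convw{w}}\ge\sup_{\point\in\achievable{\mdp{A}}{i}}\convw{w}\cdot\point$ together with nonnegativity of $\convw{w}$ shows every achievable point satisfies $\convw{w}\cdot\point\le u_{\convw{w}}$; thus intersecting $U_i$ with the downward-closed half-space $\{\point\mid\convw{w}\cdot\point\le u_{\convw{w}}\}$ preserves $\achievable{\mdp{A}}{i}\subseteq U_i$ and keeps $U_i$ a finitely generated, downward-closed, convex polytope. Part (ii) is then immediate, as the loop is exited precisely when $\gap{L}{U}\le\eta$, and combining this with (i) yields the stated output.

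The main obstacle is part (iii), termination, which is also the heart of the cited result~\cite{QuatmannK21}. For $\delta=0$ the bounds coincide, $l_{\convw{w}}=u_{\convw{w}}$, so each iteration simultaneously records an exact Pareto point $\point^l$ on the supporting hyperplane in direction $\convw{w}$ and trims $U_i$ by exactly that supporting half-space. The argument I would give is that, when $\convw{w}$ and $i$ are selected in the direction witnessing the current gap (line~\ref{select_weights_error}), each iteration separates a fresh vertex of the polytope $\achievable{\mdp{A}}{i}$; since by \cref{lem:dmsuffices} this polytope is the downward convex closure of the finitely many DM schedulers, only finitely many distinct supporting directions can occur, so after finitely many iterations $L_i$ and $U_i$ agree on all relevant vertices and $\gap{L}{U}$ drops below any positive $\eta$. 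Making ``the direction witnessing the gap'' precise and quantifying the per-iteration decrease of $\gap{L}{U}$ is the delicate step, and this is exactly the convergence analysis of~\cite{QuatmannK21}, which I would invoke to close the proof; the case $\delta>0$ additionally tracks the slack $|l_{\convw{w}}-u_{\convw{w}}|\le\delta$, but the excerpt restricts attention to $\delta=0$.
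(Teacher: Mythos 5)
Your proposal is correct. The paper gives no proof of its own for this proposition --- it is imported wholesale from \cite{QuatmannK21}, with only the informal description surrounding \cref{alg:approxMultiObjMDP} (achievability of $\point^l$, the half-space cut from $u_{\convw{w}}$, and the footnote that $U_i$ stays convex, downward-closed and finitely generated) standing in for an argument; your loop-invariant reconstruction of soundness matches that description exactly, and your deferral of the termination/convergence analysis to \cite{QuatmannK21} is precisely what the paper itself does.
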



\subsection{From Pareto Curves to Shortcut MDPs}
\label{sec:paretotoshortcut}
In our algorithm, it will be convenient to construct a (small) MDP with a particular Pareto curve. In particular, we construct the oMDP in \cref{subfig:ExShortcutMDP} from the Pareto curve in \cref{subfig:ExPareto}.
This construction is rather straightforward and we give it for both finite sets of (Pareto-optimal) points and for finitely generated convex sets.
Due to the exits being terminals, $\Reacha{\mdp{A}, \sigma}{i}{\ex} \in \sdist{\ex}$, where we liberally interpret distributions as points in~$\mathbb{R}^O$.


\begin{definition}[shortcut oMDP]
\label{def:abstractMDP}Let $\mdp{A}$ be an oMDP, $B$ be an indexed family $B\defeq (B_i)_{i\in \en^\mdp{A}}$ of finite sets $B_i \subseteq \sdist{\ex^\mdp{A}}$ 
The \emph{shortcut oMDP for $B$} is  $\cmdp{B} \defeq (M, \interface^{\mdp{A}})$, with $M\defeq (S,A, P)$, $S\defeq \en^\mdp{A} \cup \ex^\mdp{A} \cup \{\star\}$, 
$ A \defeq  \bigcup_{i\in \en^\mdp{A}} {B_i}$,  
\begin{align*}
    P(s, a, s') &\defeq \begin{cases}
        a(s') &\text{ if $s\in \en^\mdp{A}$, $a\in B_s$, and $s'\in \ex^\mdp{A}$,}\\
        1 - \sum_{o\in \ex^\mdp{A}} a(o) &\text{ if $s\in \en^\mdp{A}$, $a\in B_s$, and $s' = \star$,}\\
        0 &\text{ otherwise. }
    \end{cases}
\end{align*}
Additionally, if $B$ is an indexed family $B\defeq (B_i)_{i\in \en}$  of convex, downward-closed and finitely generated $B_i \subseteq \sdist{O}$ with vertices $B^V_i$, then $\cmdp{B} \defeq \cmdp{(B_i^V)_{i \in \en}}$.
\end{definition}

 For each entrance $s$, exit $s'$ and point $\point \in B_s$, the probability transition $P(s, \point, s')$ is the reachability probability from $s$ to $s'$ in $\mdp{A}$ induced by schedulers that yielded reachability probabilities in point $\point$. The sink $\star$ is introduced to ensure that we obtain proper distributions.
\begin{proposition}
\label{prop:correct_abst_openMDPs}
    For oMDP $\mdp{A}$, $i \in \en^\mdp{A}$, and sound approximation $(L, U)$: 
    \begin{align*}
  \achievable{\cmdp{L}}{i} \quad \subseteq \quad  \achievable{\mdp{A}}{i} \quad \subseteq\quad  \achievable{\cmdp{U}} {i}.
    \end{align*}
\end{proposition}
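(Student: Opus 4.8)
The plan is to reduce both inclusions of the proposition to a single, purely one-step characterisation of the achievable set of a shortcut oMDP: for any finite indexed family $B=(B_i)_{i\in\en^\mdp{A}}$ with $B_i\subseteq\sdist{\ex^\mdp{A}}$ and every entrance $i$,
\[
  \achievable{\cmdp{B}}{i} \;=\; \dwconvcl{B_i}.
\]
Once this is established, the proposition follows by instantiating $B$ with the vertex families of $L$ and of $U$ and invoking the definition of a sound approximation (\cref{def:underoverapprox}).

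First I would exploit the shape of $\cmdp{B}$ given in \cref{def:abstractMDP}. Every transition of $\cmdp{B}$ originates in an entrance $s\in\en^\mdp{A}$ and lands in a single step in an exit or in the sink $\star$; all exits and $\star$ are terminal, and no entrance is reachable from any other state. Hence reachability from $i$ is entirely determined by the action chosen in $i$: a memoryless scheduler $\sigma$ selects a distribution $\sigma(i)\in\dist{B_i}$, and since $P(i,a,j)=a(j)$ for $a\in B_i$ and $j\in\ex^\mdp{A}$, the induced reachability vector is the convex combination
\[
  \Reacha{\cmdp{B},\sigma}{i}{\ex} \;=\; \sum_{a\in B_i}\sigma(i)(a)\cdot a .
\]
As $\sigma(i)$ ranges over $\dist{B_i}$, these vectors range over exactly $\convcl{B_i}$. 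By \cref{lem:dmsuffices} memoryless schedulers suffice, so $\achievable{\cmdp{B}}{i}$ is the downward closure of the attainable reachability vectors, which is $\dw{\convcl{B_i}}=\dwconvcl{B_i}$, proving the claim.

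It remains to instantiate. Since $L_i$ and $U_i$ are convex, downward-closed, and finitely generated, each equals the downward convex closure of its vertex set, i.e.\ $L_i=\dwconvcl{L_i^V}$ and $U_i=\dwconvcl{U_i^V}$; and by \cref{def:abstractMDP} the oMDPs $\cmdp{L}$ and $\cmdp{U}$ are built from exactly these vertex families. The characterisation therefore yields $\achievable{\cmdp{L}}{i}=\dwconvcl{L_i^V}=L_i$ and $\achievable{\cmdp{U}}{i}=\dwconvcl{U_i^V}=U_i$. Finally, soundness of $(L,U)$ gives $L_i\subseteq\achievable{\mdp{A}}{i}\subseteq U_i$, and chaining these equalities with these inclusions proves the proposition. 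The only step carrying genuine content is the characterisation of $\achievable{\cmdp{B}}{i}$: the mild care needed there is to observe that the single-step structure of $\cmdp{B}$ rules out any loop- or history-dependence, so that the attainable reachability vectors are precisely the convex combinations of $B_i$ and nothing more; everything else is bookkeeping on the definitions.
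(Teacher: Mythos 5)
Your proof is correct, but it takes a genuinely different route from the paper's. You prove a single one-step characterisation, $\achievable{\cmdp{B}}{i} = \dwconvcl{B_i}$, exploiting that every run of a shortcut oMDP from an entrance makes exactly one decision and terminates; combined with the fact that a finitely generated $L_i$ (resp.\ $U_i$) equals the downward convex closure of its vertices, this gives the \emph{equalities} $\achievable{\cmdp{L}}{i}=L_i$ and $\achievable{\cmdp{U}}{i}=U_i$, from which the proposition is immediate via \cref{def:underoverapprox}. The paper instead proves the special case $\achievable{\cmdp{\fachievable{\mdp{A}}}}{i}=\achievable{\mdp{A}}{i}$ and then derives the two inclusions from a monotonicity lemma ($X(i)\subseteq Y(i)$ implies $\achievable{\cmdp{X}}{i}\subseteq\achievable{\cmdp{Y}}{i}$), which in turn rests on \cref{lem:addingSuboptimalActions}, stating that adding locally dominated actions to an arbitrary MDP does not change its Pareto curve. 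Your argument is more elementary and yields a sharper conclusion for this proposition, but it relies essentially on the acyclic, single-step structure of a standalone shortcut oMDP; the paper's machinery is deliberately structure-agnostic because it is reused in the proof of \cref{prop:correct_seqcomp}, where the composed shortcut MDPs may contain cycles and a closed-form characterisation of the achievable set is no longer available. Two minor points to tidy up: the case of an entrance with $B_i=\emptyset$ (which makes that entrance terminal and its achievable set the downward closure of the origin rather than $\dwconvcl{\emptyset}$) deserves a remark, and strictly speaking you should note that the vertices of a downward-closed $L_i$ can be chosen in $\sdist{\ex^{\mdp{A}}}$ so that \cref{def:abstractMDP} applies --- though the paper itself glosses over the same convention.
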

   For the inclusions, a key element in the correctness of this statement that, intuitively, we can add points that are not vertices of $B$ without changing the Pareto curve $\cmdp{B}$. 
\iffull
\cref{lem:addingSuboptimalActions} in \cref{app:proof:addingSuboptimalActions} formalizes this idea for any MDP. 
Using that lemma, we establish the following implication for $X, Y \subseteq \dist{O}^\en$ where $X(i), Y(i)$ are finitely generated downward-closed convex sets:
$
X(i) \subseteq Y(i)\text{ implies }\achievable{\cmdp{X}}{i} \subseteq \achievable{\cmdp{Y}}{i}
$.
Finally, we apply this implication twice. Details are given in \cref{app:proof:correct_abst_openMDP}.
\else Based on the idea, we establish the following implication for $X, Y \subseteq \dist{O}^\en$ where $X(i), Y(i)$ are finitely generated downward-closed convex sets:
$
X(i) \subseteq Y(i)\text{ implies }\achievable{\cmdp{X}}{i} \subseteq \achievable{\cmdp{Y}}{i}
$.
Finally, we apply this implication twice. Details are given in~\cite[Appendix B]{WVHRJ2024accepted}.
\fi

~\cref{prop:correct_abst_openMDPs} also implies the following corollary, which claims that Pareto-optimal schedulers suffice for optimality: 
\begin{corollary}
\label{cor:pareto_optimal}
For oMDP $\mdp{A}$, 
\begin{align*}
    \fachievable{\mdp{A}} =  
   \fachievable{\cmdp{\fachievable{\mdp{A}}}}.
\end{align*}
\end{corollary}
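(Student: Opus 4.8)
The plan is to derive the corollary directly from \cref{prop:correct_abst_openMDPs} by instantiating the sound approximation with $L = U = \fachievable{\mdp{A}}$. The first thing I would do is check that $\fachievable{\mdp{A}}$ is indeed a legitimate sound approximation of itself in the sense of \cref{def:underoverapprox}. For this, each component $\achievable{\mdp{A}}{i}$ must be convex, downward-closed, and finitely generated. Convexity and downward-closedness hold by the remarks following \cref{def:achievable}. Finite generation is the one point that needs a word: by \cref{lem:dmsuffices} we have $\achievable{\mdp{A}}{i} = \dwconvcl{\achievablesched{\mdp{A}}{i}{\Dmsched{\mdp{A}}}}$, and since $\Dmsched{\mdp{A}}$ is finite, this is the downward convex closure of a finite set of points, hence finitely generated. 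Trivially $\achievable{\mdp{A}}{i} \subseteq \achievable{\mdp{A}}{i}$, so $\fachievable{\mdp{A}}$ qualifies simultaneously as an under-approximation $L$ and as an over-approximation $U$.

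Next I would apply \cref{prop:correct_abst_openMDPs} to the sound approximation $(L, U) = (\fachievable{\mdp{A}}, \fachievable{\mdp{A}})$. The left inclusion of that proposition, read with $L = \fachievable{\mdp{A}}$, yields $\achievable{\cmdp{\fachievable{\mdp{A}}}}{i} \subseteq \achievable{\mdp{A}}{i}$ for every $i \in \en$. The right inclusion, read with $U = \fachievable{\mdp{A}}$, yields $\achievable{\mdp{A}}{i} \subseteq \achievable{\cmdp{\fachievable{\mdp{A}}}}{i}$ for every $i \in \en$. Combining the two inclusions gives $\achievable{\cmdp{\fachievable{\mdp{A}}}}{i} = \achievable{\mdp{A}}{i}$ for each entrance $i$, which is exactly the componentwise statement of $\fachievable{\mdp{A}} = \fachievable{\cmdp{\fachievable{\mdp{A}}}}$.

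There is essentially no hard analytic step left, because all the geometric and MDP-level work is already carried out inside \cref{prop:correct_abst_openMDPs}; the corollary is just the diagonal case where the under- and over-approximations coincide with the true achievable family. The only place requiring care is the verification that $\fachievable{\mdp{A}}$ meets the finiteness hypothesis of \cref{def:abstractMDP} and \cref{def:underoverapprox}, so that the shortcut oMDP $\cmdp{\fachievable{\mdp{A}}}$ is even well-defined and \cref{prop:correct_abst_openMDPs} is applicable; I would therefore foreground the appeal to \cref{lem:dmsuffices} for finite generation before invoking the proposition.
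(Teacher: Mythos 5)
Your proposal is correct and matches the paper's route: the paper states that \cref{cor:pareto_optimal} is implied by \cref{prop:correct_abst_openMDPs}, and the intended derivation is exactly your diagonal instantiation $L = U = \fachievable{\mdp{A}}$, with part (1) of the paper's appendix proof of that proposition establishing the underlying equality $\achievable{\cmdp{\fachievable{\mdp{A}}}}{i} = \achievable{\mdp{A}}{i}$ via the correspondence between DM schedulers of $\cmdp{\fachievable{\mdp{A}}}$ and Pareto-optimal DM schedulers of $\mdp{A}$. Your explicit verification that $\fachievable{\mdp{A}}$ is finitely generated via \cref{lem:dmsuffices} (so that the shortcut oMDP is well-defined and \cref{def:underoverapprox} is satisfied) is a worthwhile detail the paper leaves implicit.
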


It is often convenient to obtain schedulers on either $\cmdp{\fachievable{\mdp{A}}}$ and obtain a scheduler on $\mdp{A}$ or vice versa. Furthermore, while \cref{prop:correct_abst_openMDPs} considers every entrance separately, we aim to have schedulers that match on \emph{every} entrance \emph{simultaneously} by remembering which entrance was taken (\iffull proof in \cref{app:proof:construct_deterministic_scheduler}\else proof in \cite[Appendix B]{WVHRJ2024accepted}\fi).
\begin{proposition}
\label{lem:construct_deterministic_scheduler}
\label{lem:recoverstratfromshortcut}
Let $\mdp{A}$ be an oMDP and $\sigma \in \Dmsched{\cmdp{\fachievable{\mdp{A}}}}$. 
There is a scheduler $\tau$ on $\mdp{A}$ s.t.\
$\Reacha{\mdp{A},\tau}{i}{\ex^{\mdp{A}}} = \Reacha{\cmdp{\fachievable{\mdp{A}}},\sigma}{i}{\ex^{\mdp{A}}}$ for every $i\in \en^{\mdp{A}}$. Let $\tau' \in \Dmsched{\mdp{A}}$, there is $\sigma' \in \Dmsched{\cmdp{\fachievable{\mdp{A}}}}$ s.t.\ $\Reacha{\mdp{A},\tau'}{i}{\ex^{\mdp{A}}} \preceq \Reacha{\cmdp{\fachievable{\mdp{A}}},\sigma'}{i}{\ex^{\mdp{A}}}$ for every $i\in \en^{\mdp{A}}$.
\end{proposition}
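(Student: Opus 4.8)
The plan is to prove the two directions separately, each time tracking how reachability probabilities transfer between $\mdp{A}$ and its shortcut $\cmdp{\fachievable{\mdp{A}}}$. Recall that in $\cmdp{\fachievable{\mdp{A}}}$ each entrance $i$ has enabled actions indexed by the vertices of the achievable set $\achievable{\mdp{A}}{i}$ (equivalently, Pareto-optimal points, by \cref{lem:dmsuffices}), and each such action $a$ leads in one step to the exits according to the distribution it represents, with the residual mass going to $\star$. So a DM scheduler on the shortcut is, at each entrance $i$, simply a choice of one vertex point $\point_i \in \semFunctor(\mdp{A},i)$, and the induced reachability vector to the exits is exactly $\point_i$ (the exits are absorbing terminals, so no further dynamics occur).

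For the first direction, let $\sigma \in \Dmsched{\cmdp{\fachievable{\mdp{A}}}}$. For each entrance $i$, $\sigma$ selects a vertex $\point_i := \Reacha{\cmdp{\fachievable{\mdp{A}}},\sigma}{i}{\ex^{\mdp{A}}} \in \semFunctor(\mdp{A},i)$. By \cref{lem:dmsuffices}, each such Pareto point is realized by some memoryless scheduler on $\mdp{A}$; concretely, $\point_i$ is a vertex corresponding to a DM scheduler $\tau_i \in \Dmsched{\mdp{A}}$ with $\Reacha{\mdp{A},\tau_i}{i}{\ex^{\mdp{A}}} = \point_i$. The obstacle here is that the various $\tau_i$ need not agree, yet we must produce a \emph{single} scheduler $\tau$ matching all entrances simultaneously. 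The idea is to build $\tau$ as a history-dependent scheduler that first records which entrance $i \in \en^{\mdp{A}}$ the path began in, and thereafter behaves as $\tau_i$. Since the entrance is fixed along any path (entrances have no incoming edges inside $\mdp{A}$, as they are open ends), this ``remembering'' is consistent and well-defined, and the reachability from entrance $i$ under $\tau$ coincides with that under $\tau_i$, namely $\point_i$. This yields the claimed equality $\Reacha{\mdp{A},\tau}{i}{\ex^{\mdp{A}}} = \Reacha{\cmdp{\fachievable{\mdp{A}}},\sigma}{i}{\ex^{\mdp{A}}}$ for every $i$.

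For the second direction, let $\tau' \in \Dmsched{\mdp{A}}$. For each entrance $i$, the vector $\Reacha{\mdp{A},\tau'}{i}{\ex^{\mdp{A}}}$ lies in $\achievable{\mdp{A}}{i}$, so it is dominated by some Pareto point in $\semFunctor(\mdp{A},i)$, which in turn is a convex combination of vertices. However, we want a \emph{deterministic} shortcut scheduler $\sigma'$, so we cannot directly pick a convex combination. Instead I would choose, for each $i$, a single vertex $\point'_i \in \semFunctor(\mdp{A},i)$ with $\Reacha{\mdp{A},\tau'}{i}{\ex^{\mdp{A}}} \preceq \point'_i$; such a vertex exists because the achievable set is the downward convex closure of the DM-induced points and is finitely generated, and any point is below some vertex on the Pareto frontier along a suitable weight direction. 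Setting $\sigma'$ to select the action $\point'_i$ at entrance $i$ gives $\Reacha{\cmdp{\fachievable{\mdp{A}}},\sigma'}{i}{\ex^{\mdp{A}}} = \point'_i \succeq \Reacha{\mdp{A},\tau'}{i}{\ex^{\mdp{A}}}$, as required. The main subtlety to verify is precisely this domination-by-a-vertex claim; the inequality (rather than equality) in the statement is exactly what buys us the freedom to round up to a vertex.

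The technical heart of both directions is the simultaneous-entrance bookkeeping combined with the one-step semantics of the shortcut. I expect the first direction's ``remember the entrance'' construction to be the main obstacle, since it is where history-dependence is genuinely needed and where one must argue that no path mixes entrances; the second direction is comparatively routine once the vertex-domination fact is in place.
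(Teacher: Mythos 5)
Your first direction matches the paper's own proof: for each entrance $i$ you realize the point chosen by $\sigma$ at $i$ by a memoryless scheduler $\tau_i$ on $\mdp{A}$, and glue these into one history-dependent $\tau$ that conditions on the first state of the path. One small correction: entrances \emph{can} have incoming edges inside an oMDP (the paper points this out explicitly in the proof of the composed version of this statement), so your justification that ``entrances have no incoming edges'' is factually wrong; it is also unnecessary, since $\tau$ only needs to read off the initial state $s_0$, which never changes along a path, so the construction stands as the paper gives it.

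The second direction has a genuine gap. You reduce it to the claim that every point $\Reacha{\mdp{A},\tau'}{i}{\ex^{\mdp{A}}}$ is pointwise dominated ($\preceq$) by a \emph{single} vertex of $\achievable{\mdp{A}}{i}$, justified ``along a suitable weight direction''. Weighted optimality only yields $\convw{w}\cdot \point \le \convw{w}\cdot \point'$, not $\point \preceq \point'$, and the domination claim is false if the vertices are the extreme points: consider an entrance with three actions leading directly to two exits with probabilities $(1,0)$, $(0,1)$ and $(0.5,0.5)$; the achievable set is generated by the vertices $(1,0)$ and $(0,1)$, yet the DM scheduler achieving $(0.5,0.5)$ is dominated by neither. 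The paper instead obtains this part ``directly from \cref{def:abstractMDP}'': the finite generating set used to build $\cmdp{\fachievable{\mdp{A}}}$ may be taken to contain every DM-achieved point (there are finitely many DM schedulers, and by \cref{lem:dmsuffices} these points generate $\achievable{\mdp{A}}{i}$), so $\sigma'$ can simply select at each entrance the action equal to $\Reacha{\mdp{A},\tau'}{i}{\ex^{\mdp{A}}}$, giving the required inequality with equality. (Alternatively, one allows $\sigma'$ to randomize over vertex actions, which is all that is needed where the proposition is later invoked.) Your rounding-up-to-a-single-Pareto-vertex step, as written, does not go through.
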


\begin{algorithm}[t]
\caption{\texttt{approxMultiObjSD}: Approximation of the Pareto curve}\label{alg:AppMeager}
\begin{algorithmic}[1]
\State \textbf{Input:} a string diagram $\sd{D}$ and a local imprecision $\eta\in \probinterval$.
\State \textbf{Output:} A sound approximation $(L, U)$ for $\semantics{\sd{D}}$. If $\eta = 0$, then $L = U$.
\If{$\sd{D} = \mdp{A}$} 
\State \Return \texttt{approxMultiObjMDP}$(\mdp{A}, \eta)$ \Comment{Invoke Algorithm~\ref{alg:approxMultiObjMDP}}
\Else \Comment{$\sd{D} = \sd{D}_1 \ast \sd{D}_2$, $\ast \in \{ \seqcomp, \oplus \} $}
 \State $(L_1, U_1) \gets \texttt{approxMultiObjSD}(\sd{D}_1, \eta)$ \Comment{Recursion}
    \State $(L_2, U_2) \gets \texttt{approxMultiObjSD}(\sd{D}_2, \eta)$  \Comment{Recursion}
    \State $\mdp{A}_L \gets 
  \cmdp{L_1} \ast  \cmdp{L_2}$ \Comment{See Def.~\ref{def:abstractMDP}}
    \State $\mdp{A}_U \gets 
  \cmdp{U_1} \ast  \cmdp{U_2}$ \Comment{See Def.~\ref{def:abstractMDP}}
    \State \Return $\Big(\texttt{approxMultiObjMDP}(\mdp{A}_L, \eta)\Big)_1, \Big(\texttt{approxMultiObjMDP}(\mdp{A}_U, \eta)\Big)_2$
\EndIf
\end{algorithmic}
\label{alg:approxMultiObjSD}
\end{algorithm}

\subsection{Approximating Pareto Curves for String Diagrams}
\label{sec:composingshortcuts}
We now provide a recursive algorithm in Algorithm~\ref{alg:approxMultiObjSD}. In the base case, when is a single oMDP, we analyze the oMDP using Algorithm~\ref{alg:approxMultiObjMDP}.
Otherwise, we have an string diagram $\sd{D}_1 \ast \sd{D}_2$ for $\ast \in \{ \seqcomp, \oplus \}$. We recursively compute sound approximations for $\sd{D}_1$ and  $\sd{D}_2$.
Next, we compose the under- and over-approximations, respectively. We discuss the under-approximation, the over-approximation is handled analogously. Given both under-approximations $L_1, L_2$, we create the corresponding shortcut MDPs $\cmdp{L_1}$ and $\cmdp{L_2}$ and then take their sequential composition $\mdp{A}_L$. This oMDP can be analyzed using Algorithm~\ref{alg:approxMultiObjMDP}. Any under-approximation for $\mdp{A}_L$ is an underapproximation for $\sd{D}(\defeq \sd{D}_1 \ast \sd{D}_2)$. 
This algorithm  easily  supports additional operations such as $n$-ary compositions that significantly reduce overhead.
We remark that contrary to \cref{alg:approxMultiObjMDP}, we do not guarantee any error on the approximation that we return, unless all computations are precise ($\eta = 0$). We discuss error bounds in \cref{sec:comp_error_bouds}.

\subsubsection{Correctness}
We first state that under-approximations and over-approximations are preserved by the algebraic operations:

\begin{proposition}[Case $\seqcomp$]
\label{prop:correct_seqcomp}
    For oMDPs $\mdp{A},\mdp{B}$, and sound approximations $(L^\mdp{A}, U^\mdp{A})$ and $(L^\mdp{B}, U^\mdp{B})$, the tuple $\Big(\fachievable{\cmdp{L^\mdp{A}}\seqcomp\cmdp{L^\mdp{B}}},\fachievable{\cmdp{U^\mdp{A}}\seqcomp\cmdp{U^\mdp{B}}} \Big)$ is a sound approximation for $\mdp{A}\seqcomp \mdp{B}$, i.e., for any $i \in \en^\mdp{A\seqcomp B}$, the following conditions hold:
    \begin{align*}  
&\achievable{\cmdp{L^\mdp{A}}\seqcomp\cmdp{L^\mdp{B}}}{i} \quad \subseteq \quad 
  \achievable{\mdp{A}\seqcomp\mdp{B}}{i}  \quad \subseteq\quad  \achievable{\cmdp{U^\mdp{A}}\seqcomp\cmdp{U^\mdp{B}}}{i}.
    \end{align*}
\end{proposition}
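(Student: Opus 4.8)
The plan is to derive the proposition from two ingredients: an \emph{exact compositionality} identity and a \emph{monotonicity} principle, and then to instantiate them using the soundness inclusions $L^\mdp{A}_e \subseteq \achievable{\mdp{A}}{e} \subseteq U^\mdp{A}_e$ (and likewise for $\mdp{B}$) from \cref{def:underoverapprox}. The conceptual core, on which everything rests, is that reachability in a sequential composition \emph{factors through the interface}: once a memoryless scheduler is fixed, the induced Markov chain on $\mdp{A}\seqcomp\mdp{B}$ lumps to a (sub-stochastic) chain on the glued entrance/exit states of \cref{def:seqOMDP}, in which the transition out of a component from entrance $e$ to exit $x$ is exactly the component's entrance-to-exit reachability $\Reacha{\mdp{A},\sigma^\mdp{A}}{e}{x}$, and the remaining mass is absorbed in a dead sink. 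Consequently the reachability from $i$ to each exit of $\mdp{A}\seqcomp\mdp{B}$ is a function of the two \emph{profiles} $(\Reacha{\mdp{A},\sigma^\mdp{A}}{e}{\ex})_e$ and $(\Reacha{\mdp{B},\sigma^\mdp{B}}{e}{\ex})_e$ alone; crucially, this factorization stays valid when the composition is cyclic, since every path between the two components must traverse a glued interface state. I would moreover observe that this function is \emph{monotone} with respect to $\preceq$: raising a profile pointwise only moves absorption mass away from the dead sink towards genuine exits, which by the Neumann-series expansion $(I-Q)^{-1}=\sum_k Q^k$ of the absorbing chain can only increase every reachability value. This monotonicity is what tames the ``unknown objective'' difficulty: although choosing \emph{which} Pareto-optimal trade-off is context-dependent, dominating a profile is unconditionally beneficial.

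For the lower inclusion $\achievable{\cmdp{L^\mdp{A}}\seqcomp\cmdp{L^\mdp{B}}}{i} \subseteq \achievable{\mdp{A}\seqcomp\mdp{B}}{i}$, it suffices by \cref{lem:dmsuffices} to treat a DM scheduler $\sigma$ on the shortcut composition. Restricting $\sigma$ to the two (disjoint) state sets yields memoryless schedulers on $\cmdp{L^\mdp{A}}$ and $\cmdp{L^\mdp{B}}$, whose profiles $p^\mdp{A}_e$, $p^\mdp{B}_e$ lie in $\achievable{\cmdp{L^\mdp{A}}}{e} \subseteq \achievable{\mdp{A}}{e}$ and $\achievable{\cmdp{L^\mdp{B}}}{e}\subseteq\achievable{\mdp{B}}{e}$ by \cref{prop:correct_abst_openMDPs}. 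Because in a shortcut oMDP the entrances are mutually unreachable, a single scheduler may fix each entrance independently; hence, using $\achievable{\mdp{A}}{e}=\achievable{\cmdp{\fachievable{\mdp{A}}}}{e}$ from \cref{cor:pareto_optimal} together with \cref{lem:recoverstratfromshortcut}, I can build schedulers $\tau^\mdp{A}$ on $\mdp{A}$ and $\tau^\mdp{B}$ on $\mdp{B}$ realizing profiles that dominate $p^\mdp{A}_e$, $p^\mdp{B}_e$ \emph{simultaneously over all entrances}. Combining $\tau^\mdp{A},\tau^\mdp{B}$ into a scheduler on $\mdp{A}\seqcomp\mdp{B}$ and invoking the interface factorization together with profile monotonicity shows that the point achieved by $\sigma$ from $i$ is dominated by the point achieved by this combined scheduler; since achievable sets are downward closed, it lies in $\achievable{\mdp{A}\seqcomp\mdp{B}}{i}$.

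The upper inclusion $\achievable{\mdp{A}\seqcomp\mdp{B}}{i}\subseteq\achievable{\cmdp{U^\mdp{A}}\seqcomp\cmdp{U^\mdp{B}}}{i}$ is symmetric: I restrict a memoryless scheduler on $\mdp{A}\seqcomp\mdp{B}$ to its components, whose profiles lie in $\achievable{\mdp{A}}{e}\subseteq\achievable{\cmdp{U^\mdp{A}}}{e}$ by \cref{prop:correct_abst_openMDPs}, realize dominating profiles inside the shortcuts $\cmdp{U^\mdp{A}}, \cmdp{U^\mdp{B}}$ using their product (non-interference) structure, and conclude by the same factorization and monotonicity argument. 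I expect the main obstacle to be the rigorous justification of the interface factorization and its monotonicity in the cyclic case, i.e.\ formally lumping the scheduler-induced Markov chain to the interface states and proving that absorption probabilities are monotone in the sub-stochastic transfer matrix, and, entangled with it, the \emph{simultaneous} realizability of a whole family of target profiles by one scheduler on a component. The latter is exactly what the product structure of shortcut oMDPs and \cref{lem:recoverstratfromshortcut} are for; the only subtlety is that dominating a profile that sits on an edge of the Pareto curve may require a convex combination of DM schedulers rather than a single one, which is harmless here because \cref{def:achievable} quantifies over history-dependent schedulers and such a randomization can be realized by committing to a sampled DM scheduler for the duration of each excursion through the component.
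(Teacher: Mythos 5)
Your proposal is correct and its core coincides with the paper's: both arguments factor reachability in $\mdp{A}\seqcomp\mdp{B}$ through the interface states (the paper writes this as a linear equation system obtained by partitioning paths at the glued exits, with the same caveat about cyclicity that you flag), both decompose/recompose schedulers componentwise, and both lean on \cref{prop:correct_abst_openMDPs} and \cref{lem:recoverstratfromshortcut} to move between a component and its shortcut while matching all entrances simultaneously (the paper's \cref{lem:construct_deterministic_scheduler_seqcomp} is exactly your ``combine $\tau^{\mdp{A}},\tau^{\mdp{B}}$ and track the last entrance'' step). Where you genuinely diverge is the finishing move for the $L$/$U$ inclusions: the paper first proves the exact identity $\achievable{\cmdp{\fachievable{\mdp{A}}}\seqcomp\cmdp{\fachievable{\mdp{B}}}}{i}=\achievable{\mdp{A}\seqcomp\mdp{B}}{i}$ and then transfers it to $L$ and $U$ purely at the level of shortcut MDPs, via \cref{lem:addingSuboptimalActions} (adding dominated actions does not change the Pareto curve, so $\cmdp{L^{\mdp{A}}}$ becomes a subMDP of an enlarged $\cmdp{\fachievable{\mdp{A}}}$), whereas you argue directly by pointwise domination of interface profiles plus monotonicity of the absorption map in the sub-stochastic transfer matrix. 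Your route is conceptually uniform (one monotonicity principle does all the work) but it is the one that forces the subtlety you yourself identify: a vertex of $L^{\mdp{A}}_e$ need not be realized by a single DM scheduler of $\mdp{A}$, so you must either randomize per excursion or pass to stochastic memoryless schedulers, and in the cyclic case the per-excursion-sampling argument needs the linearity of the interface system to be spelled out. The paper's suboptimal-actions lemma sidesteps this entirely because the domination is absorbed into the shortcut MDP before any composition happens; that is the main thing its route buys. One small caution on your monotonicity claim: the Neumann series $\sum_k Q^k$ need not converge as $(I-Q)^{-1}$ when interface loops have probability $1$, so you should phrase monotonicity via least fixed points (as the paper implicitly does by pinning the zero-reachability variables to $0$ to get uniqueness of the solution).
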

 \noindent The proof (\cref{sec:seqcompcorrect}) is not trivial due to the cyclic dependency between $\mdp{A}$ and $\mdp{B}$.
Similar to~\cref{cor:seqcom_pareto_optimal}, Pareto-optimal schedulers suffice for compositionally solving Pareto curves w.r.t. the sequential composition. 
\begin{corollary}
\label{cor:seqcom_pareto_optimal}
    For oMDPs $\mdp{A},\mdp{B}$,
    \begin{align*}
        \fachievable{\mdp{A}\seqcomp \mdp{B}} =  
   \fachievable{\cmdp{\fachievable{\mdp{A}}}\seqcomp \cmdp{\fachievable{\mdp{B}}}}. 
    \end{align*}
\end{corollary}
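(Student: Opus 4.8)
The plan is to derive the corollary as a direct ``sandwich'' consequence of \cref{prop:correct_seqcomp}, mirroring the way \cref{cor:pareto_optimal} follows from \cref{prop:correct_abst_openMDPs}. The crucial observation is that the exact achievable family is itself a (degenerate) sound approximation where the under- and over-approximation coincide. Concretely, I would instantiate \cref{prop:correct_seqcomp} with $L^{\mdp{A}} \defeq U^{\mdp{A}} \defeq \fachievable{\mdp{A}}$ and $L^{\mdp{B}} \defeq U^{\mdp{B}} \defeq \fachievable{\mdp{B}}$.

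Before applying the proposition, I must check that $(\fachievable{\mdp{A}}, \fachievable{\mdp{A}})$ really is a sound approximation in the sense of \cref{def:underoverapprox}, i.e.\ that each $\achievable{\mdp{A}}{i}$ is convex, downward-closed, and \emph{finitely generated}; the same for $\mdp{B}$. Convexity and downward-closedness are already noted after \cref{def:achievable}. Finite generation follows from \cref{lem:dmsuffices}: since $\achievable{\mdp{A}}{i} = \dwconvcl{\achievablesched{\mdp{A}}{i}{\Dmsched{\mdp{A}}}}$ and there are only finitely many deterministic memoryless schedulers, the generating set of achievable points is finite. Hence the inclusions $\fachievable{\mdp{A}} \subseteq \fachievable{\mdp{A}}$ (and likewise for $\mdp{B}$) trivially satisfy both the under- and over-approximation conditions, so the hypotheses of \cref{prop:correct_seqcomp} are met.

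With these instantiations, \cref{prop:correct_seqcomp} yields, for every entrance $i \in \en^{\mdp{A}\seqcomp\mdp{B}}$,
\[
\achievable{\cmdp{\fachievable{\mdp{A}}}\seqcomp\cmdp{\fachievable{\mdp{B}}}}{i} \;\subseteq\; \achievable{\mdp{A}\seqcomp\mdp{B}}{i} \;\subseteq\; \achievable{\cmdp{\fachievable{\mdp{A}}}\seqcomp\cmdp{\fachievable{\mdp{B}}}}{i}.
\]
Since the leftmost and rightmost sets are syntactically identical, all three coincide, giving $\achievable{\cmdp{\fachievable{\mdp{A}}}\seqcomp\cmdp{\fachievable{\mdp{B}}}}{i} = \achievable{\mdp{A}\seqcomp\mdp{B}}{i}$ for each $i$. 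As this holds entrance-wise, the indexed families are equal, i.e.\ $\fachievable{\mdp{A}\seqcomp\mdp{B}} = \fachievable{\cmdp{\fachievable{\mdp{A}}}\seqcomp\cmdp{\fachievable{\mdp{B}}}}$, which is the claim.

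I do not expect a genuine obstacle here: all the substance is packaged into \cref{prop:correct_seqcomp}, whose proof handles the cyclic dependency between $\mdp{A}$ and $\mdp{B}$. The only point requiring a sentence of justification is the finite-generation condition needed to make $\fachievable{\mdp{A}}$ a legal approximation, for which \cref{lem:dmsuffices} and finiteness of $\Dmsched{\mdp{A}}$ suffice. (A minor note: the corollary's statement references itself via ``Similar to \cref{cor:seqcom_pareto_optimal}'' in the surrounding text, but this plays no role in the argument.)
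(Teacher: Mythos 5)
Your proposal is correct and matches the paper's intended derivation: the paper presents this corollary as a consequence of \cref{prop:correct_seqcomp} (mirroring how \cref{cor:pareto_optimal} follows from \cref{prop:correct_abst_openMDPs}), which is exactly your sandwich argument with the degenerate sound approximation $L=U=\fachievable{\mdp{A}}$, and your check that $\fachievable{\mdp{A}}$ is finitely generated via \cref{lem:dmsuffices} is the right (and only) hypothesis that needs verifying. The only cosmetic difference is that the paper's own proof of \cref{prop:correct_seqcomp} in fact establishes the equality $\achievable{\cmdp{\fachievable{\mdp{A}}}\seqcomp\cmdp{\fachievable{\mdp{B}}}}{i}=\achievable{\mdp{A}\seqcomp\mdp{B}}{i}$ directly as its first two steps, so the corollary is already contained there rather than re-derived from the proposition's statement.
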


The similar compositionality  result also holds for the sum: 
\begin{proposition}[Case $\oplus$]
\label{prop:correct_oplus}
    For oMDPs $\mdp{A},\mdp{B}$, and sound approximations $(L^\mdp{A}, U^\mdp{A})$ and $(L^\mdp{B}, U^\mdp{B})$, the tuple $\Big(\fachievable{\cmdp{L^\mdp{A}}\oplus\cmdp{L^\mdp{B}}},\fachievable{\cmdp{U^\mdp{A}}\oplus\cmdp{U^\mdp{B}}} \Big)$ is a sound approximation for $\mdp{A}\oplus\mdp{B}$, i.e., for any $i \in \en^\mdp{A\oplus B}$, the following conditions hold:
    \begin{align*}   
&\achievable{\cmdp{L^\mdp{A}}\oplus\cmdp{L^\mdp{B}}}{i} \quad \subseteq \quad 
    \achievable{\mdp{A}\oplus\mdp{B}}{i}  \quad \subseteq\quad  \achievable{\cmdp{U^\mdp{A}}\oplus\cmdp{U^\mdp{B}}}{i}.
    \end{align*}
\end{proposition}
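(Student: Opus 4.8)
The plan is to show that the sum operation $\oplus$ behaves much more simply than the sequential composition, precisely because the two oMDPs are \emph{independent}: there are no transitions between $S^\mdp{A}$ and $S^\mdp{B}$ in $\mdp{A}\oplus\mdp{B}$. This independence means that from any entrance $i \in \en^{\mdp{A}\oplus\mdp{B}}$, every path stays entirely within the summand that contains $i$. Hence the reachability probabilities from $i$ to exits of the \emph{other} summand are identically zero, and the reachability probabilities to exits of the \emph{same} summand coincide exactly with those in the standalone oMDP. Concretely, I would first establish the decomposition: for $i \in \en^{\mdp{A}}$ (and symmetrically for $i \in \en^{\mdp{B}}$) and any scheduler $\sigma$, the point $\Reacha{\mdp{A}\oplus\mdp{B},\sigma}{i}{\ex^{\mdp{A}\oplus\mdp{B}}}$ restricted to $\ex^{\mdp{A}}$ equals $\Reacha{\mdp{A},\sigma|_\mdp{A}}{i}{\ex^{\mdp{A}}}$ and restricted to $\ex^{\mdp{B}}$ is the zero vector. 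This should follow directly from \cref{def:sumOMDP}, since $P$ never leaves the originating summand.

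From this I would derive that the achievable set factors as a product over the interleaving of coordinates: $\achievable{\mdp{A}\oplus\mdp{B}}{i}$, viewed in $\mathbb{R}^{\ex^{\mdp{A}}\uplus\ex^{\mdp{B}}}$, is exactly the downward closure of $\achievable{\mdp{A}}{i}$ embedded in the $\ex^{\mdp{A}}$ coordinates with the $\ex^{\mdp{B}}$ coordinates set to $0$ (again dually for $i \in \en^{\mdp{B}}$). Since downward closure already fills in those zero coordinates with anything nonpositive—but probabilities are nonnegative—the effective achievable region is just $\achievable{\mdp{A}}{i}$ padded with zeros. The same reasoning applies verbatim to the shortcut oMDPs: by \cref{def:abstractMDP}, $\cmdp{L^\mdp{A}}\oplus\cmdp{L^\mdp{B}}$ has no cross-transitions either, so $\achievable{\cmdp{L^\mdp{A}}\oplus\cmdp{L^\mdp{B}}}{i} = \achievable{\cmdp{L^\mdp{A}}}{i}$ (padded with zeros) for $i \in \en^{\mdp{A}}$, and likewise for $U$.

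With both decompositions in hand, the claimed chain of inclusions reduces coordinatewise to the single-oMDP statement of \cref{prop:correct_abst_openMDPs}. That is, for $i \in \en^{\mdp{A}}$,
\begin{align*}
\achievable{\cmdp{L^\mdp{A}}}{i} \subseteq \achievable{\mdp{A}}{i} \subseteq \achievable{\cmdp{U^\mdp{A}}}{i},
\end{align*}
and padding every set with the same zero coordinates on $\ex^{\mdp{B}}$ preserves these inclusions. The case $i \in \en^{\mdp{B}}$ is symmetric. I would note that the well-definedness of $\cmdp{L^\mdp{A}}\oplus\cmdp{L^\mdp{B}}$ as an oMDP follows because $\oplus$ imposes no type-matching constraint between summands.

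The main obstacle—modest compared to the sequential case—is bookkeeping the ordering and embedding of the combined exit set $\ex^{\mdp{A}}\uplus\ex^{\mdp{B}}$ so that the ``padding with zeros'' identification is stated cleanly, and verifying that $\mathsf{DwConvCl}$ interacts correctly with this embedding (i.e.\ that taking convex and downward closures commutes with adjoining identically-zero coordinates). Once the independence of summands is made precise, the substantive content is entirely inherited from \cref{prop:correct_abst_openMDPs}, so I expect no analogue of the fixed-point or cyclic-dependency argument needed for $\seqcomp$.
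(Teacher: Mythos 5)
Your proposal is correct and follows exactly the route the paper intends: the paper gives no detailed proof of this proposition, justifying it only with the remark that it is ``straightforward due to the lack of interaction between $\mdp{A}$ and $\mdp{B}$,'' and your argument---paths never cross between summands, so the achievable sets decompose into the single-summand achievable sets padded with zero coordinates, reducing the inclusions coordinatewise to \cref{prop:correct_abst_openMDPs}---is precisely the elaboration of that remark. The only minor imprecision is your aside about the downward closure filling in ``nonpositive'' values: by \cref{def:achievable} the achievable set is the full downward closure in $\mathbb{R}^{\ex}$ and so genuinely contains points with negative coordinates, but this does not affect the inclusions.
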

The statement is straightforward due to the lack of interaction between $\mdp{A}$ and $\mdp{B}$. 
\iffull 
\begin{corollary}
\label{cor:sum_pareto_optimal}
    For oMDPs $\mdp{A},\mdp{B}$,
    \begin{align*}
        \fachievable{\mdp{A}\oplus \mdp{B}} =  
   \fachievable{\cmdp{\fachievable{\mdp{A}}}\oplus \cmdp{\fachievable{\mdp{B}}}}. 
    \end{align*}
\end{corollary}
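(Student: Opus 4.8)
The plan is to obtain this statement as an immediate specialization of \cref{prop:correct_oplus}, in exactly the same way that \cref{cor:pareto_optimal} specializes \cref{prop:correct_abst_openMDPs} and \cref{cor:seqcom_pareto_optimal} specializes \cref{prop:correct_seqcomp}. The single idea driving the argument is that the \emph{exact} achievable family $\fachievable{\mdp{A}}$ is simultaneously a valid under- and over-approximation of itself, so we may feed it into \cref{prop:correct_oplus} on both sides at once.

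Concretely, first I would check that the pair $(\fachievable{\mdp{A}}, \fachievable{\mdp{A}})$ is a sound approximation for $\mdp{A}$ in the sense of \cref{def:underoverapprox}. The inclusions $\achievable{\mdp{A}}{i} \subseteq \achievable{\mdp{A}}{i}$ required of an under- and of an over-approximation hold trivially, so the only nontrivial point is that each $\achievable{\mdp{A}}{i}$ is convex, downward-closed, and finitely generated. This is precisely \cref{lem:dmsuffices}, which identifies $\achievable{\mdp{A}}{i}$ with $\dwconvcl{\achievablesched{\mdp{A}}{i}{\Dmsched{\mdp{A}}}}$, i.e.\ the downward convex closure of the finite set of points achieved by DM schedulers. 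The identical reasoning applies to $\mdp{B}$.

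Next I would instantiate \cref{prop:correct_oplus} with $L^{\mdp{A}} = U^{\mdp{A}} = \fachievable{\mdp{A}}$ and $L^{\mdp{B}} = U^{\mdp{B}} = \fachievable{\mdp{B}}$. For every entrance $i \in \en^{\mdp{A}\oplus\mdp{B}}$ this yields the chain
\[
\achievable{\cmdp{\fachievable{\mdp{A}}}\oplus\cmdp{\fachievable{\mdp{B}}}}{i} \subseteq \achievable{\mdp{A}\oplus\mdp{B}}{i} \subseteq \achievable{\cmdp{\fachievable{\mdp{A}}}\oplus\cmdp{\fachievable{\mdp{B}}}}{i},
\]
whose two ends coincide; hence all three sets are equal for each $i$. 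Since this holds for every entrance, the indexed families agree, which is exactly the claimed identity $\fachievable{\mdp{A}\oplus\mdp{B}} = \fachievable{\cmdp{\fachievable{\mdp{A}}}\oplus\cmdp{\fachievable{\mdp{B}}}}$.

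I do not expect a genuine obstacle here: all the substantive work sits inside \cref{prop:correct_oplus}, whose own proof is easy because $\mdp{A}$ and $\mdp{B}$ do not interact under $\oplus$. The only step demanding any care is the finite-generation condition needed to treat $\fachievable{\cdot}$ as a sound approximation, and that is handed to us directly by \cref{lem:dmsuffices}.
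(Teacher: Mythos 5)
Your proposal is correct and matches the paper's intended route: the paper states this as an immediate corollary of \cref{prop:correct_oplus} (in parallel with how \cref{cor:pareto_optimal} and \cref{cor:seqcom_pareto_optimal} follow from their respective propositions), namely by instantiating both the under- and over-approximation with $\fachievable{\mdp{A}}$ and $\fachievable{\mdp{B}}$ themselves so that the two inclusions collapse to an equality. Your added care in checking via \cref{lem:dmsuffices} that $\fachievable{\mdp{A}}$ qualifies as a sound approximation (convex, downward-closed, finitely generated) is exactly the detail the paper leaves implicit.
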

\else 
\fi

\begin{theorem}
Algorithm~\ref{alg:approxMultiObjSD} is correct. Additionally, if $\eta=0$ then $\gap{L}{U} = 0$.
\end{theorem}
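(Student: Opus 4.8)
The plan is to prove the theorem by structural induction on the string diagram $\sd{D}$, mirroring the recursive structure of \cref{alg:approxMultiObjSD}. The statement has two parts: (i) \emph{correctness}, meaning that the returned pair $(L,U)$ is a sound approximation for $\semantics{\sd{D}}$ in the sense of \cref{def:underoverapprox}, i.e.\ $L_i \subseteq \achievable{\semantics{\sd{D}}}{i} \subseteq U_i$ for every entrance $i$; and (ii) the precision guarantee that $\eta = 0$ forces $\gap{L}{U} = 0$. I would handle soundness first and then observe that the precision claim comes almost for free along the same induction.

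\textbf{Base case.} When $\sd{D} = \mdp{A}$ is a single oMDP, the algorithm returns $\texttt{approxMultiObjMDP}(\mdp{A},\eta)$, and soundness is exactly the content of \cref{prop:algmdpcorrect}: the output is a sound approximation with $\gap{L}{U} \leq \eta$. In particular, for $\eta = 0$ this gives $\gap{L}{U} = 0$, settling both parts of the base case.

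\textbf{Inductive step.} Suppose $\sd{D} = \sd{D}_1 \ast \sd{D}_2$ with $\ast \in \{\seqcomp, \oplus\}$. By the induction hypothesis, $(L_1, U_1)$ and $(L_2, U_2)$ are sound approximations for $\semantics{\sd{D}_1}$ and $\semantics{\sd{D}_2}$. The crux is then the relevant compositionality proposition: \cref{prop:correct_seqcomp} for $\ast = \seqcomp$ (and \cref{prop:correct_oplus} for $\ast = \oplus$) tells us that
\[
\Big(\fachievable{\cmdp{L_1} \ast \cmdp{L_2}},\ \fachievable{\cmdp{U_1} \ast \cmdp{U_2}}\Big)
\]
is already a sound approximation for $\semantics{\sd{D}}$, since $\semantics{\sd{D}} = \semantics{\sd{D}_1} \ast \semantics{\sd{D}_2}$. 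The algorithm, however, does not return these exact Pareto curves; it constructs the shortcut MDPs $\mdp{A}_L = \cmdp{L_1} \ast \cmdp{L_2}$ and $\mdp{A}_U = \cmdp{U_1} \ast \cmdp{U_2}$ and then runs \texttt{approxMultiObjMDP} on them. By \cref{prop:algmdpcorrect}, the under-component returned for $\mdp{A}_L$ is contained in $\achievable{\mdp{A}_L}{i} = \achievable{\cmdp{L_1} \ast \cmdp{L_2}}{i}$, and the over-component returned for $\mdp{A}_U$ contains $\achievable{\mdp{A}_U}{i}$. Chaining these inclusions with the compositionality proposition yields $L_i \subseteq \achievable{\semantics{\sd{D}}}{i} \subseteq U_i$, which is soundness for $\sd{D}$.

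\textbf{Precision for $\eta = 0$.} Here I would argue that at every node the approximations are \emph{exact}, so that $L$ and $U$ coincide with the true achievable family. With $\eta = 0$, \cref{prop:algmdpcorrect} gives $\gap{L}{U} = 0$ at each invocation, i.e.\ the returned under- and over-approximations agree and equal the exact achievable region of the analyzed MDP. Propagating this through the induction, when $\eta = 0$ the inductive hypothesis gives $\cmdp{L_k} = \cmdp{U_k}$ realizing $\fachievable{\semantics{\sd{D}_k}}$ (invoking \cref{cor:pareto_optimal} to identify the shortcut of the exact Pareto curve with the original achievable region), so $\mdp{A}_L$ and $\mdp{A}_U$ coincide, and \cref{cor:seqcom_pareto_optimal} (resp.\ the analogue for $\oplus$) identifies their exact achievable region with $\fachievable{\semantics{\sd{D}}}$; hence $L = U$ and $\gap{L}{U} = 0$. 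The main obstacle is the bookkeeping in the inductive step: one must be careful that the composition proposition is stated for the shortcut MDPs built from $L_k$ and $U_k$ rather than from the true curves, and that the final \texttt{approxMultiObjMDP} call preserves the inclusions in the right direction (under for $L$, over for $U$); everything else reduces to transitively chaining set inclusions supplied by the already-established propositions.
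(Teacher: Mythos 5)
Your proof is correct and follows essentially the same route as the paper: structural induction over $\sd{D}$, with the base case settled by \cref{prop:algmdpcorrect} and the inductive step obtained by chaining the soundness of the inner \texttt{approxMultiObjMDP} calls on $\mdp{A}_L$, $\mdp{A}_U$ with \cref{prop:correct_seqcomp} and \cref{prop:correct_oplus}, and the $\eta=0$ case by observing that $L$ and $U$ recursively coincide. Your version is merely more explicit than the paper's (which compresses the $\eta=0$ argument to a single sentence), but no new idea or different decomposition is involved.
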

\begin{proof}
We prove that output $(L, U)$ is a sound approximation of $\semantics{\sd{D}}$. We prove this recursively over the structure of $\sd{D}$.  If $\sd{D} = \mdp{A}$, \cref{prop:algmdpcorrect} applies. Otherwise, we focus on $L$ being a lower bound and with $\ast = \seqcomp$ the upper bound and $\oplus$ are analogous. Indeed $L(i) \subseteq \achievable{\cmdp{L_1}\seqcomp\cmdp{L_2}}{i}$ for any entry $i$ by Algorithm~\ref{alg:approxMultiObjMDP}. Likewise, $L_1$, $L_2$ are lower bounds to $\semantics{\sd{D}_1}$ and $\semantics{\sd{D}_2}$ and thus the theorem follows \cref{prop:correct_seqcomp} and \cref{prop:correct_oplus}. If $\eta=0$, then $L$ and $U$ recursively coincide. \qed
\end{proof}

\subsubsection{Obtaining schedulers}
To obtain a scheduler, first observe that in Algorithm~\ref{alg:approxMultiObjMDP}, every point in $L_i$ can be annotated with a memoryless scheduler (using standard model checking).
Now, when we obtain a memoryless scheduler for some $\mdp{A} \ast \mdp{B}$, then we can translate this straightforwardly to memoryless schedulers for $\mdp{A}$ and $\mdp{B}$. Finally, if we obtain a scheduler for $\cmdp{L'}$ and $L'$ is an underapproximation for $\mdp{A}$, then \cref{lem:recoverstratfromshortcut} states that we can recover a scheduler for $\mdp{A}$.

\subsection{Proof outline for \cref{prop:correct_seqcomp}}

\label{sec:seqcompcorrect}
We give the main ingredients for \cref{prop:correct_seqcomp}, the key ingredient for our approach. 
We discuss the crux for showing   
 $ \achievable{\cmdp{\fachievable{\mdp{A}}}\seqcomp\cmdp{\fachievable{\mdp{B}}}}{i}  =
    \achievable{\mdp{A}\seqcomp\mdp{B}}{i}$:
We can map (memoryless) schedulers in $\mdp{A}\seqcomp\mdp{B}$ and $\cmdp{\fachievable{\mdp{A}}}\seqcomp\cmdp{\fachievable{\mdp{B}}}$ to each other while matching reachability probabilities. More precisely, we lift \cref{lem:recoverstratfromshortcut} to the sequential composition, while using that \cref{lem:recoverstratfromshortcut} already established the mapping for $\mdp{A}$ and $\cmdp{\fachievable{\mdp{A}}}$.  
Therefore, we note that for any $\sched$, $i\in \en^{\mdp{A}}\cup  \en^{\mdp{B}}$ and $j\in \ex^{\mdp{A}\seqcomp \mdp{B}}$, the following equations hold directly from the definition of the sequential composition and from the definition of reachability probabilities by adequately partitioning the paths from entrance to exit, i.e. $\Reacha{\mdp{A}\seqcomp \mdp{B}, \tau}{i}{j} =$
    \begin{align*}
        \begin{cases}\Reacha{\mdp{A},\tau}{i}{j} + \sum_{k\in \exr^{\mdp{A}}} \Reacha{\mdp{A},\tau}{i}{k}\cdot  \Reacha{\mdp{A}\seqcomp \mdp{B}, \tau}{\nxt(k)}{j} &\text{ if }i\in \en^{\mdp{A}},\\
       \Reacha{\mdp{B},\tau}{i}{j} + \sum_{k\in \exl^{\mdp{B}}} \Reacha{\mdp{B},\tau}{i}{k}\cdot  \Reacha{\mdp{A}\seqcomp \mdp{B}, \tau}{\nxt(k)}{j} &\text{ if } i\in \en^{\mdp{B}},
        \end{cases}
    \end{align*}
 where $\nxt((\exr^\mdp{A})_i) = {(\enl^\mdp{B})}_i$ and  $\nxt((\exl^\mdp{B})_i) = {(\enr^\mdp{A})}_i$, are the next states that are visited from any exits which are \emph{not} in $\ex^{\mdp{A}\seqcomp \mdp{B}}$.
 Naturally, by substitution we obtain the equations for $\cmdp{\fachievable{\mdp{A}}} \seqcomp \cmdp{\fachievable{\mdp{B}}}$. 
 We observe that various parts of these equations are independent of the sequential composition. Thus, for these, \cref{lem:recoverstratfromshortcut} applies. Once we apply these, we obtain two times the \emph{same} linear equation system with variables for 
 $\Reacha{\mdp{A}\seqcomp \mdp{B}, \tau}{i}{j}$ and $\Reacha{\cmdp{\fachievable{\mdp{A}}}\seqcomp \cmdp{\fachievable{\mdp{B}}}, \tau}{i}{j}$, respectively, which shows that the probabilities coincide.
\iffull In \cref{app:explainmainprop}\else In~\cite[Appendix C]{WVHRJ2024accepted}\fi, we derive the inclusions formally and show that it indeed preserves reachability probabilities. We also establish the inclusions in \cref{prop:correct_seqcomp} analogously to \cref{prop:correct_abst_openMDPs}.


\section{Compositional Estimation of Error Bounds}
\label{sec:comp_error_bouds}

As we discussed above, \cref{alg:approxMultiObjSD} provides a way to obtain the Pareto curve \emph{precisely}, for $\eta=0$. However, setting the imprecision $\eta = 0$ is often infeasible.  When setting an $\eta > 0$, we only have soundness of the bounds, but no guarantee on the tightness. A naive extension to \cref{alg:approxMultiObjSD} would be to \emph{a posteriori} determine the error and tighten the sound approximation if the error is not matched. This process terminates with the required error bound as there are only finitely many schedulers. In this section, we discuss the (im)possibility of a one-shot approach, where we would recursively compute sound approximations with an approximate error bound given \emph{a priori}. 
To that end, we study how the error propagates through the composition. We show positive results by restricting the compositional structure: we maintain errors on string diagrams that are only constructed by the sum and the rightward sequential composition, as we show in~\cref{prop:error_sum,thm:error_uni_seq}. After showing negative results that explode the error on the (general) sequential composition in~\cref{ex:err_bounds_seq_comp_steq}, we end with a positive note, showing that the final result can have an error that is (significantly) \emph{smaller} than the individual errors in~\cref{ex:err_bounds_disappear}.

\paragraph{The $L^{\infty}$-Error}
For conciseness, this section uses the $L^{\infty}$-error between sound approximations. 
Let $(L, U)$ be a sound approximation. The \emph{$L^{\infty}$-error} is $\gapinf{L}{U}\defeq \sup_{i\in \en}\sup_{\point \in U_i} \inf_{\point'\in L_i} \|\point - \point'\|_{\infty}$, where $\|\_\|_{\infty}$  is the $L^{\infty}$ norm. 
The $L^{\infty}$-error $\gapinf{L}{U}$ and the error $\gap{L}{U}$ are equivalent in the sense that a sequence $\big(\gapinf{L_n}{U_n}\big)_{n\in \nat}$ converges to $0$ iff  $\big(\gap{L_n}{U_n}\big)_{n\in \nat}$ converges to $0$. 


\paragraph{The sum.} For the sum $\mdp{A}\oplus \mdp{B}$, we can easily obtain an error bound compositionally, since there are no interactions between $\mdp{A}$ and $\mdp{B}$.  
\begin{proposition}
\label{prop:error_sum}
Let $\mdp{A}, \mdp{B}$ be oMDPs, and $(L^{\mdp{A}}, U^{\mdp{A}}), (L^{\mdp{B}}, U^{\mdp{B}})$ be sound approximations. Then: 
$
    \gapinf{L^{\mdp{A}\oplus \mdp{B}}}{U^{\mdp{A}\oplus \mdp{B}}} \leq \max\big( \gapinf{L^{\mdp{A}}}{U^{\mdp{A}}} ,\gapinf{L^{\mdp{B}}}{U^{\mdp{B}}}\big), 
$
where $L^{\mdp{A}\oplus \mdp{B}} \defeq  \fachievable{\cmdp{L^{\mdp{A}}}\oplus \cmdp{L^{\mdp{B}}}}$, and $U^{\mdp{A}\oplus \mdp{B}} \defeq \fachievable{\cmdp{U^{\mdp{A}}}\oplus \cmdp{U^{\mdp{B}}}}$.

\end{proposition}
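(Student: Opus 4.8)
The plan is to reduce the error on the sum to the errors on the individual components by exploiting that the sum $\mdp{A} \oplus \mdp{B}$ has no interaction between $\mdp{A}$ and $\mdp{B}$: every entrance $i$ of $\mdp{A} \oplus \mdp{B}$ belongs to exactly one summand, and from such an entrance only the exits of that same summand are reachable. Formally, I would first argue that for an entrance $i \in \en^{\mdp{A}}$ (the case $i \in \en^{\mdp{B}}$ being symmetric), the achievable set $\achievable{\mdp{A} \oplus \mdp{B}}{i}$ is, up to the trivial (zero) coordinates corresponding to the exits of $\mdp{B}$, exactly the achievable set $\achievable{\mdp{A}}{i}$. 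This is because any scheduler in $\mdp{A} \oplus \mdp{B}$ restricted to paths starting in $i$ only ever sees states of $\mdp{A}$, so the reachability probabilities toward $\ex^{\mdp{B}}$ are all $0$, and the probabilities toward $\ex^{\mdp{A}}$ coincide with those of the corresponding scheduler on $\mdp{A}$.

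\textbf{Second}, I would observe that the shortcut construction respects this decomposition: by \cref{def:sumOMDP} and \cref{def:abstractMDP}, the shortcut MDP $\cmdp{L^{\mdp{A}}} \oplus \cmdp{L^{\mdp{B}}}$ has, from an entrance $i \in \en^{\mdp{A}}$, only transitions into $\ex^{\mdp{A}} \cup \{\star\}$, so that $\achievable{\cmdp{L^{\mdp{A}}} \oplus \cmdp{L^{\mdp{B}}}}{i}$ is (again up to trivial coordinates) equal to $\achievable{\cmdp{L^{\mdp{A}}}}{i} = L^{\mdp{A}}_i$, where the last equality holds because the vertices of $L^{\mdp{A}}_i$ are recovered exactly as achievable points of the shortcut MDP. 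The same reasoning applies to $U^{\mdp{A}} = \cmdp{U^{\mdp{A}}}$. Hence, writing $L^{\mdp{A}\oplus\mdp{B}}_i$ and $U^{\mdp{A}\oplus\mdp{B}}_i$ for the under- and over-approximations at entrance $i \in \en^{\mdp{A}}$, these embed isometrically (in the $L^\infty$ norm, since the extra coordinates are identically $0$ in both the lower and upper set) into $L^{\mdp{A}}_i$ and $U^{\mdp{A}}_i$.

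\textbf{Finally}, I would put these pieces together. For $i \in \en^{\mdp{A}}$, the padding-by-zeros embedding is norm-preserving, so
\[
\gapinf{L^{\mdp{A}\oplus\mdp{B}}_i}{U^{\mdp{A}\oplus\mdp{B}}_i} = \gapinf{L^{\mdp{A}}_i}{U^{\mdp{A}}_i} \leq \gapinf{L^{\mdp{A}}}{U^{\mdp{A}}},
\]
and symmetrically for $i \in \en^{\mdp{B}}$. Taking the supremum over all entrances of $\mdp{A} \oplus \mdp{B}$, which is the disjoint union of the entrances of $\mdp{A}$ and of $\mdp{B}$, yields the claimed bound $\max\big(\gapinf{L^{\mdp{A}}}{U^{\mdp{A}}}, \gapinf{L^{\mdp{B}}}{U^{\mdp{B}}}\big)$. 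The main subtlety — and the only point that needs care rather than being purely routine — is verifying that the error in the $L^\infty$ norm is genuinely unchanged when passing from the full exit set $\ex^{\mdp{A}\oplus\mdp{B}} = \ex^{\mdp{A}} \uplus \ex^{\mdp{B}}$ to the relevant sub-block $\ex^{\mdp{A}}$: one must check that the coordinates corresponding to $\ex^{\mdp{B}}$ contribute zero to the $\inf$-$\sup$ computation because both $L^{\mdp{A}\oplus\mdp{B}}_i$ and $U^{\mdp{A}\oplus\mdp{B}}_i$ are supported (in those coordinates) only at $0$, so no discrepancy arises there. This reduction to per-component, per-block errors is exactly where the absence of interaction in $\oplus$ is used.
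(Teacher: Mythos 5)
Your proof is correct and follows exactly the route the paper intends (the paper gives no explicit proof of this proposition, only the remark that the summands do not interact): the entrances of $\mdp{A}\oplus\mdp{B}$ partition into those of $\mdp{A}$ and $\mdp{B}$, the exits of the other summand are unreachable from a given entrance, and the per-entrance $L^\infty$-error therefore reduces to that of the corresponding component, so the supremum over all entrances is the claimed maximum. The only cosmetic slip is the claim that the sets are ``supported only at $0$'' in the $\ex^{\mdp{B}}$-coordinates --- being downward-closed they extend below $0$ there --- but this does not affect the argument, since for any $\point\in U_i$ one may choose $\point'\in L_i$ agreeing with $\point$ on those coordinates, so they contribute nothing to the infimum.
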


\paragraph{Rightward composition.}
 An open MPD $\mdp{A}$ is \emph{rightward} if
 $\typemdp{\mdp{A}} = (m, 0)\rightarrow (l, 0)$.
\begin{example}
Let $\mdp{A}, \mdp{B}$ be rightward oMDPs with $\typemdp{\mdp{A}} = (1, 0) \rightarrow (2, 0)$ and $\typemdp{\mdp{B}} = (2, 0) \rightarrow (1, 0)$, and $(L^{\mdp{A}}, U^{\mdp{A}}), (L^{\mdp{B}}, U^{\mdp{B}})$ be sound approximations all generated by a singleton, such that we can write: 
\begin{align*}
    L^{\mdp{A}}_1 &\defeq (0.3, 0.2) ,  U^{\mdp{A}}_1 \defeq (0.4, 0.3),
 L^{\mdp{B}}_1 \defeq 0.7, L^{\mdp{B}}_2 \defeq 0.6, U^{\mdp{B}}_1\defeq 0.75, U^{\mdp{B}}_2 \defeq 0.65.
\end{align*}
Then the lower bound $L^{\mdp{A}\seqcomp \mdp{B}}$ on their composition  consists of one point $0.3 \cdot 0.7 + 0.2 \cdot 0.6 = 0.33$, and the upper bound consists of one point $0.4 \cdot 0.75 + 0.3 \cdot 0.65 = 0.495$. These values can be easily calculated from the shortcut MDPs. While the error was $0.1$ and $0.05$ respectively on $\mdp{A}$ and $\mdp{B}$, the composition has an error of $0.165$. 

\end{example}

We can estimate sufficient error bounds for rightward $\mdp{A}, \mdp{B}$ in order to ensure a certain error bound for the sequential composition $\mdp{A}\seqcomp \mdp{B}$: 
\begin{theorem}
\label{thm:error_uni_seq}
Let $\mdp{A}, \mdp{B}$ be rightward oMDPs, $(L^{\mdp{A}}, U^{\mdp{A}}), (L^{\mdp{B}}, U^{\mdp{B}})$ sound approximations, and $(L^{\mdp{A}\seqcomp \mdp{B}}, U_X), (L_X, U^{\mdp{A}\seqcomp \mdp{B}})$ be sound approximations of $\cmdp{L^{\mdp{A}}}\seqcomp \cmdp{L^{\mdp{B}}}$ and $\cmdp{U^{\mdp{A}}}\seqcomp \cmdp{U^{\mdp{B}}}$, respectively.
Then $\gapinf{L^{ \mdp{A}\seqcomp \mdp{B}}}{U^{ \mdp{A}\seqcomp \mdp{B}}}$ is bounded from above by 
\begin{equation*}
\label{eq:error_uniseq}
     |\ex^{\mdp{A}}|\cdot \gapinf{L^{\mdp{A}}}{U^{\mdp{A}}} + \gapinf{L^{\mdp{B}}}{U^{\mdp{B}}} + \gapinf{L^{\mdp{A}\seqcomp \mdp{B}}}{U_X} +  \gapinf{L_X}{U^{\mdp{A}\seqcomp \mdp{B}}}
\end{equation*}
\end{theorem}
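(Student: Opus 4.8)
The plan is to bound the target error by a chain of set inclusions together with a triangle inequality for the asymmetric distance $\gapinf{\cdot}{\cdot}$, so that the only nontrivial estimate is how the component errors of $\mdp{A}$ and $\mdp{B}$ propagate through the \emph{rightward} composition. First I would record three elementary facts about $\gapinf{\cdot}{\cdot}$, each immediate from its $\sup/\inf$ definition and the triangle inequality of $\|\cdot\|_\infty$: for nested downward-closed convex families $X\subseteq Y\subseteq Z$ one has $\gapinf{X}{Z}\le \gapinf{X}{Y}+\gapinf{Y}{Z}$; moreover $\gapinf{X}{\cdot}$ is monotone in its upper argument and $\gapinf{\cdot}{Z}$ is antitone in its lower argument (enlarging the lower family can only shrink the error, since the inner $\inf$ ranges over a larger set).

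Next, put $Y_1 \defeq \fachievable{\cmdp{L^\mdp{A}}\seqcomp\cmdp{L^\mdp{B}}}$ and $Y_2 \defeq \fachievable{\cmdp{U^\mdp{A}}\seqcomp\cmdp{U^\mdp{B}}}$. Since $L^{\mdp{A}}_i\subseteq U^{\mdp{A}}_i$ and $L^{\mdp{B}}_k\subseteq U^{\mdp{B}}_k$ pointwise, the choices available in the $L$-shortcut composition are a subset of those in the $U$-shortcut composition, so $Y_1\subseteq Y_2$ (the monotonicity implication of \cref{prop:correct_abst_openMDPs}, lifted to compositions). Together with soundness of the two local approximations this gives $L^{\mdp{A}\seqcomp\mdp{B}}\subseteq Y_1\subseteq U_X$ and $L_X\subseteq Y_2\subseteq U^{\mdp{A}\seqcomp\mdp{B}}$, hence the chain $L^{\mdp{A}\seqcomp\mdp{B}}\subseteq Y_1\subseteq Y_2\subseteq U^{\mdp{A}\seqcomp\mdp{B}}$. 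Applying the triangle inequality twice and then the two monotonicity facts yields
\[
\gapinf{L^{\mdp{A}\seqcomp\mdp{B}}}{U^{\mdp{A}\seqcomp\mdp{B}}}
  \;\le\; \gapinf{L^{\mdp{A}\seqcomp\mdp{B}}}{U_X}
     \;+\; \gapinf{Y_1}{Y_2}
     \;+\; \gapinf{L_X}{U^{\mdp{A}\seqcomp\mdp{B}}},
\]
which already supplies the third and fourth summands of the claimed bound. It remains to prove $\gapinf{Y_1}{Y_2}\le |\ex^\mdp{A}|\cdot\gapinf{L^\mdp{A}}{U^\mdp{A}}+\gapinf{L^\mdp{B}}{U^\mdp{B}}$.

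This last estimate is where rightwardness is essential: because $\mdp{A},\mdp{B}$ have no leftward open ends, the composed shortcut is acyclic, so by \cref{lem:dmsuffices} every point of $Y_2$ at an entrance $i\in\enr^\mdp{A}$ is dominated by a reachability vector $\bigl(\sum_{k\in\ex^\mdp{A}} a_i(k)\,b_k(j)\bigr)_j$ of a memoryless scheduler, where $a_i\in U^\mdp{A}_i\subseteq\sdist{\ex^\mdp{A}}$ is the effective subdistribution over $\mdp{A}$'s exits and $b_k\in U^\mdp{B}_k$ the one chosen at the entrance $k$ of $\mdp{B}$. I then pick $a'_i\in L^\mdp{A}_i$ and $b'_k\in L^\mdp{B}_k$ within $\gapinf{L^\mdp{A}}{U^\mdp{A}}$ resp.\ $\gapinf{L^\mdp{B}}{U^\mdp{B}}$ of $a_i,b_k$ (possible by the definition of $\gapinf{\cdot}{\cdot}$), and argue that $\bigl(\sum_k a'_i(k)\,b'_k(j)\bigr)_j$ lies in $Y_1$: it is dominated coordinatewise by a genuine reachability vector of the $L$-composition, which suffices as $Y_1$ is downward-closed. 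The bound follows from the telescoping identity $a_i(k)b_k(j)-a'_i(k)b'_k(j) = a_i(k)\bigl(b_k(j)-b'_k(j)\bigr) + \bigl(a_i(k)-a'_i(k)\bigr)b'_k(j)$, summed over $k$: the first part is at most $\gapinf{L^\mdp{B}}{U^\mdp{B}}$ because $\sum_k a_i(k)\le 1$, and the second is at most $|\ex^\mdp{A}|\cdot\gapinf{L^\mdp{A}}{U^\mdp{A}}$ because each $b'_k(j)\le 1$ and there are $|\ex^\mdp{A}|$ summands; a final clip to the downward closure converts this per-coordinate upward gap into an $L^\infty$-bound.

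The main obstacle is the bookkeeping of this last estimate: making precise that an arbitrary achievable point of the $U$-composition is captured by such a product-sum (using acyclicity and \cref{lem:dmsuffices}), and that the constructed $L$-point is genuinely achievable even though $a'_i,b'_k$ may lie strictly below the respective Pareto frontiers. Both points are resolved by dominating the constructed vector with an actual reachability vector and invoking downward closure. The factor $|\ex^\mdp{A}|$ is exactly the number of intermediate exits over which the approximation error of $\mdp{A}$ accumulates.
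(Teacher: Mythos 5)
Your proposal is correct and follows essentially the same route as the paper: a triangle inequality for $\gapinf{\cdot}{\cdot}$ over the chain $L^{\mdp{A}\seqcomp\mdp{B}}\subseteq \fachievable{\cmdp{L^{\mdp{A}}}\seqcomp\cmdp{L^{\mdp{B}}}}\subseteq\fachievable{\cmdp{U^{\mdp{A}}}\seqcomp\cmdp{U^{\mdp{B}}}}\subseteq U^{\mdp{A}\seqcomp\mdp{B}}$, combined with exactly the telescoping product estimate $a_i(k)b_k(j)-a'_i(k)b'_k(j)=a_i(k)(b_k(j)-b'_k(j))+(a_i(k)-a'_i(k))b'_k(j)$ that the paper isolates as its Proposition on composed scheduler errors. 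Your phrasing in terms of points $a_i,b_k$ rather than schedulers on the shortcut MDPs, and your explicit downward-closure clip, are only cosmetic differences.
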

\iffull See~\cref{subsec:proof_error_uni_seq} for the proof\else See~\cite[Appendix D]{WVHRJ2024accepted} for the proof\fi.
Whereas the first two summands are inherent to the approximation of $\mdp{A}$ and $\mdp{B}$, the latter two terms originate from the approximations when computing a Pareto curve of the composed shortcut MDPs.
 \cref{thm:error_uni_seq} thus provides reasonably tight error bounds for the sequential composition on rightward oMDPs (only) when the number of exits $\ex^{\mdp{A}}$ is small.

\begin{figure}[t]
\centering
\begin{minipage}{0.45\textwidth}
    \scalebox{0.9}{
\begin{tikzpicture}[
innode/.style={draw, rectangle, minimum size=0.5cm},
interface/.style={draw, rectangle, minimum size=0.5cm},]
\fill[orange] (-0.7cm, -1cm)--(-0.7cm, 1cm)--(1.8cm, 1cm)--(1.8cm, -1cm)--cycle;
\node[interface,fill=white] (s0) {$\enrarg{1}$};
\node[inner sep=0,right=-1.25cm of s0] (enr1) {};


\node[interface, right=0.4cm of s0, yshift = 0.6cm, fill=white] (s2) {$\enlarg{1}$};
\node[interface, right=0.4cm of s0, yshift = -0.6cm,fill=white] (s3) {$\exrarg{1}$};
\node[inner sep=0,right= 0.5cm of s2] (exr1) {};
\node[inner sep=0,right= 0.5cm of s3] (exr2) {};
\draw[->] (enr1) -> (s0);
\draw[->] (s0) -> node [above] {$1$} (s3);
\draw[->] (s2) -> node [left] {$1$} (s3);
\draw[->] (exr1) -> (s2);
\draw[->] (s3) -> node [above] {$1$} (exr2);
\node[right=0.5cm of s3, yshift= 0.6cm] (seqcomp) {\scalebox{2}{$\seqcomp$}};
\end{tikzpicture}
\hspace{1pt}
\begin{tikzpicture}[
innode/.style={draw, rectangle, minimum size=0.5cm},
interface/.style={draw, rectangle, minimum size=0.5cm},]
\fill[cyan] (-0.7cm, -1cm)--(-0.7cm, 1cm)--(2.2cm, 1cm)--(2.2cm, -1cm)--cycle;
\node[interface, fill=white, yshift=0.6cm] (t0) {$\exlarg{1}$};
\node[interface, fill=white, yshift=-0.6cm] (t1) {$\enrarg{1}$};
\node[inner sep=0, right=-1.25cm of t0] (enr1) {};
\node[inner sep=0, right=-1.25cm of t1] (enr2) {};
\node[interface, right=0.8cm of t1,fill=white] (t3) {$\exrarg{1}$};
\node[inner sep=0,right=0.4cm of t3] (exr1) {};
\node[state,right=0.9cm of t0, minimum size=0.5cm,fill=white] (t2) {$t_1$};

\draw[<-] (enr1) -> node [above] {$1$} (t0);
\draw[<-] (t1) -> (enr2);
\draw[->] (t1) -> node [below] {$0.009$} (t3);
\draw[->] (t1) -> node [above, yshift=0.1cm] {$0.001$} (t2);
\draw[<-] (t0) -> node [left] {$0.99$} (t1);
\draw[->] (t3) -> (exr1);
\end{tikzpicture}}\subcaption{Exploding errors.}
\label{fig:bidirectional_seqcomp}

\end{minipage}
\hfill
\begin{minipage}{0.45\textwidth}
    \scalebox{0.9}{
\begin{tikzpicture}[
innode/.style={draw, rectangle, minimum size=0.5cm},
interface/.style={draw, rectangle, minimum size=0.5cm},]
\fill[orange] (-0.7cm, -1cm)--(-0.7cm, 1cm)--(1.8cm, 1cm)--(1.8cm, -1cm)--cycle;
\node[interface,fill=white] (s0) {$\enrarg{1}$};
\node[inner sep=0,right=-1.25cm of s0] (enr1) {};


\node[interface, right=0.4cm of s0, yshift = 0.6cm, fill=white] (s2) {$\exrarg{1}$};
\node[inner sep=2pt, fill=black,above=0.3cm of s0] (a1a) {};

\node[inner sep=2pt, fill=black,below=0.3cm of s0] (a1b) {};

\node[interface, right=0.4cm of s0, yshift = -0.6cm,fill=white] (s3) {$\exrarg{2}$};
\node[inner sep=0,right= 0.5cm of s2] (exr1) {};
\node[inner sep=0,right= 0.5cm of s3] (exr2) {};
\draw[->] (enr1) -> (s0);
\draw[->] (s0) -> node [left] {$a$} (a1a);
\draw[->] (s0) -> node [left] {$b$} (a1b);
\draw[->] (a1a) -> node [above] {$1$} (s2);
\draw[->] (a1b) -> node [above] {$1$} (s3);

\draw[->] (s2) -> (exr1);
\draw[->] (s3) -> node [above] {} (exr2);
\node[right=0.5cm of s3, yshift= 0.6cm] (seqcomp) {\scalebox{2}{$\seqcomp$}};
\end{tikzpicture}
\hspace{-3pt}
\begin{tikzpicture}[
innode/.style={draw, rectangle, minimum size=0.5cm},
interface/.style={draw, rectangle, minimum size=0.5cm},]
\fill[cyan] (-0.7cm, -1cm)--(-0.7cm, 1cm)--(1.7cm, 1cm)--(1.7cm, -1cm)--cycle;
\node[interface, fill=white, yshift=0.6cm] (t0) {$\enrarg{1}$};
\node[interface, fill=white, yshift=-0.6cm] (t1) {$\enrarg{2}$};
\node[inner sep=0, right=-1.25cm of t0] (enr1) {};
\node[inner sep=0, right=-1.25cm of t1] (enr2) {};
\node[interface, right=0.5cm of t1,fill=white] (t3) {$\exrarg{1}$};
\node[inner sep=0,right=0.4cm of t3] (exr1) {};


\draw[->] (enr1) -> node [above] {} (t0);
\draw[<-] (t1) -> (enr2);
\draw[->] (t1) -> node [below] {$1$} (t3);
\draw[->] (t0) -> node [left] {$1$} (t3);
\draw[->] (t3) -> (exr1);
\end{tikzpicture}
}
\subcaption{Collapsing errors.}
\label{fig:bidirectional_seqcomp_goodcase}

\end{minipage}

\caption{Bidirectional sequential compositions $\mdp{A}\seqcomp \mdp{B}$.  }
\end{figure}

\paragraph{(General, bidirectional) sequential composition.}
In general, we cannot obtain tight error bounds for $\mdp{A}\seqcomp \mdp{B}$, even if their error bounds of $\mdp{A}$ and $ \mdp{B}$ are small. 
\begin{example}
\label{ex:err_bounds_seq_comp_steq}
    For oMDPs $\mdp{A}, \mdp{B}$ 
     in~\cref{fig:bidirectional_seqcomp}, and $(L^{\mdp{A}}, U^{\mdp{A}}), (L^{\mdp{B}}, U^{\mdp{B}})$ be sound approximations with $L^{\mdp{A}}_1,  L^{\mdp{A}}_2, U^{\mdp{A}}_1,  U^{\mdp{A}}_2 $ are all singleton sets with $1$, and $L^{\mdp{B}}_1 \defeq (0.001, 0.99),\, U^{\mdp{B}}_1 \defeq (0.009, 0.99)$. 
    Then, $\gapinf{L^{\mdp{A}}}{U^{\mdp{A}}} = 0$, $\gapinf{L^{\mdp{B}}}{U^{\mdp{B}}} = 0.008$. 
In the composition $\cmdp{L^\mdp{A}}\seqcomp\cmdp{L^\mdp{B}}$, we obtain a Pareto point on $\frac{0.001}{1-0.99} = 0.1$, while for $\cmdp{U^\mdp{A}}\seqcomp\cmdp{U^\mdp{B}}$, we obtain $\frac{0.009}{1-0.99} = 0.9$. Then, $\gapinf{L^{\mdp{A}\seqcomp \mdp{B}}}{U^{\mdp{A}\seqcomp \mdp{B}}} = 0.8$.
\end{example}
The example demonstrates that with highly-likely loops, errors measured in the infinity norm may be amplified. This motivates looking at different distance measures, e.g., based on ratios~\cite{DBLP:conf/fossacs/Chatterjee12}. These bounds may be tight for various shortcut MDPs, but require additional assumptions, such as inducing the same graph in the shortcut MDPs. We briefly discuss these bounds \iffull in~\cref{app:rem:bounds}\else in~\cite[Appendix D]{WVHRJ2024accepted}\fi.

Finally, the error may also disappear when composing oMDPs. For the motivating example in \cref{fig:motivatingexample}, a large error in room $B$ is irrelevant if the best scheduler never visits this room. We provide a concrete example:
\begin{example}
\label{ex:err_bounds_disappear}
Consider oMDPs $\mdp{A}$, $\mdp{B}$ in~\cref{fig:bidirectional_seqcomp_goodcase}. 
A lower bound $L$ for $\mdp{A}$ is a Pareto curve that only contains $(1,0)$, 
i.e., the point induced by taking action $a$. The error for this lower bound is $1$, as we may reach exit $\exrarg{2}$ with probability $1$. However, the error for $\cmdp{\fachievable{L}} \seqcomp \mdp{B}$ is $0$ as we already recover an optimal scheduler. 
\end{example}

\section{Implementation and Experiments}
\label{sec:imp_exp}

\subsubsection{Implementation}
We have implemented a prototypical \texttt{C++} extension of \textsc{\textsc{Storm}}~\cite{HenselJKQV22} that takes a string diagram in JSON-format as input. The syntax allows to name terms for simple reuse and oMDPs are defined as PRISM models with a list of entrances and exits defined via expressions over the variables. The tool supports caching of Pareto curves and shortcut MDPs, which is hugely beneficial if the same string diagram occurs in multiple contexts. We provide dedicated support for some syntactic sugar, most notably $n$-ary operations and the \emph{trace} operator~\cite{WatanabeEAH23}. We have implemented two approaches. 
The \emph{monolithic} (\texttt{Mon}) approach takes a string diagram $\sd{D}$ and inductively constructs the monolithic MDP $\semantics{\sd{D}}$. The \emph{recursive Pareto} computation (\texttt{rPareto}) follows the explanation in \cref{sec:naive_comp_algo}.


\subsubsection{Setup}
We run the algorithms using a time out of 15 minutes and a memory limit of 16GB. All experiments run on a single core of an Intel i9-10980XE processor.

\paragraph{Benchmarks}
Our benchmarks exhibit fundamental topological structures such as chains, which seems common structures for discretized (grid) worlds, and protocols that work in rounds or keep track of the number of rounds won/lost.  
Specifically, we create 8  benchmarks families with 50 different instances. We use three simple types of string diagrams: a two-dimensional grid of rooms with bidirectional doors (\texttt{BiGrid}), a grid of rooms that can only be passed in one direction (\texttt{UniGrid}), and a big chain (\texttt{Chain}). Each string diagram is initialized at the leaves with 6 to 16 different simple open MDPs of similar shape that occur multiple times. The shapes are a small room \texttt{RmS}, a big room \texttt{RmB}, and a selection of  biased dice \texttt{Dice}. Details are given \iffull in~\cref{app:allresults}\else in~\cite[Appendix A]{WVHRJ2024accepted}\fi. 

\newcolumntype{L}[1]{>{\raggedright\let\newline\\\arraybackslash\hspace{0pt}}m{#1}}
\newcolumntype{C}[1]{>{\centering\let\newline\\\arraybackslash\hspace{0pt}}m{#1}}
\newcolumntype{R}[1]{>{\raggedleft\let\newline\\\arraybackslash\hspace{0pt}}m{#1}}

\paragraph{Baselines.}
The only comparable compositional algorithm in the literature executes scheduler enumeration~\cite{WatanabeEAH23}. We approximate this using $\texttt{Prec} = \texttt{rPareto}(\eta = 0)$, i.e., with precise Pareto curve computations. While pure scheduler enumeration produces less overhead, it requires analyzing all schedulers, whereas  \texttt{Prec} only computes the Pareto-optimal DM schedulers. Scheduler enumeration over more than $10^{12}$ schedulers is completely infeasible. All benchmarks in our benchmark set have over $10^{32}$ schedulers. The monolithic algorithm is not optimised but uses mature data structures for sparse model construction, in particular for building the oMDPs. All algorithms use standard settings for MDP solving in \textsc{\textsc{Storm}}, in particular OVI with precision $10^{-4}$ and double arithmetic.

\newcommand{\showpgfmark}[1]{\tikz[baseline=-0.4ex]\node[mark size=0.7ex]{\pgfuseplotmark{#1}};}
\begin{figure}[t]
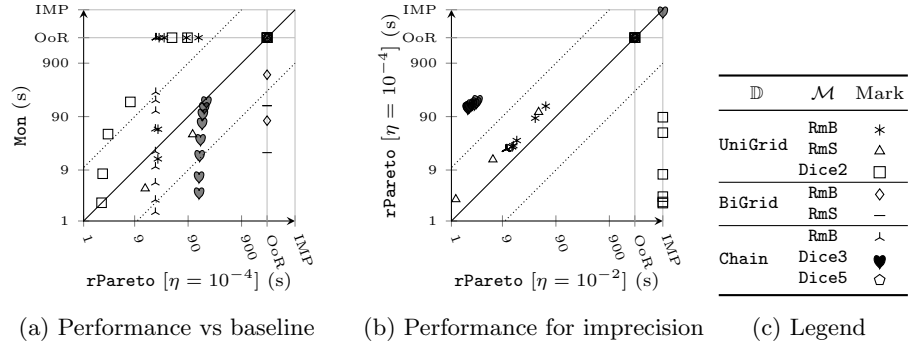

    \begin{subfigure}{.4\textwidth}
        \centering
        \scatterplot{plots/results_plot}{Total time Pareto4}{\texttt{rPareto} [$\eta = 10^{-4}$] (s)}{Total time Monolithic}{\texttt{Mon} (s)}
        \caption{Performance vs baseline}
        \label{fig:benchmark_scatter1}
    \end{subfigure}%
    \begin{subfigure}{.4\textwidth}
        \centering
        \scatterplot{plots/results_plot}{Total time Pareto2}{\texttt{rPareto} [$\eta = 10^{-2}$] (s)}{Total time Pareto4}{\texttt{rPareto} [$\eta = 10^{-4}$] (s)}
        \caption{Performance for imprecision}
        \label{fig:benchmark_scatter2}
    \end{subfigure}%
    \begin{subfigure}{.2\textwidth}%
{\scriptsize%
        \begin{NiceTabular}{ c c c }%
        \toprule
              $\sd{D}$ & $\mdp{M}$ & Mark \\\midrule
             \Block[L]{3-1}{\texttt{UniGrid}}  & \texttt{RmB}   & \showpgfmark{asterisk} \\
                                               & \texttt{RmS} & \showpgfmark{triangle} \\
                                               & \texttt{Dice2} & \showpgfmark{square} \\\hline
             \Block[L]{2-1}{\texttt{BiGrid}}   & \texttt{RmB}   & \showpgfmark{diamond} \\
                                               & \texttt{RmS} & \showpgfmark{-} \\\hline
             \Block[L]{3-1}{\texttt{Chain}}    & \texttt{RmB}   & \showpgfmark{Mercedes star} \\
                                               & \texttt{Dice3} & \showpgfmark{heart} \\
                                               & \texttt{Dice5} & \showpgfmark{pentagon} \\
             \bottomrule
        \end{NiceTabular}
}
    \caption{Legend}
    \label{tab:benchmark_series}
    \end{subfigure}
    \caption{Performance (time in s, OoR=time-out or memory-out, IMP=imprecise)}
\end{figure}

\subsubsection{Results}
In~\cref{fig:benchmark_scatter1} (log-log scale) we compare the run time of 
\texttt{rPareto} and \texttt{Mon}. Every point matches a benchmark, the point $(x,y)$ indicates that \texttt{Mon} required $x$ seconds, while \texttt{rPareto} took $y$ seconds. A point above the diagonal means that \texttt{rPareto} was faster. A point above the dotted diagonal means that \texttt{rPareto} was 10x faster. A point is on the IMP line, if the error is bigger then $0.8 \cdot 10^{-3}$, i.e., significantly above $10^{-4}$. \cref{fig:benchmark_scatter2} similarly compares \texttt{rPareto}($\eta = 10^{-4}$) vs \texttt{rPareto}($\eta = 10^{-2}$) to visualise the performance benefit when reducing $\eta$. \cref{tab:concrete_benchmarks} gives details: The columns give the string diagram and class of open MDPs, the number of monolithic states $|S_{\semantics{\sd{D}}}|$, the unique $\#\mdp{A}$ oMDPs in $\sd{D}$, and their state count $|S_\mdp{A}|$. We then consider the \texttt{Mon} and \texttt{rPareto} with $\eta \in \{0, 10^{-2}, 10^{-4}\}$ respectively: We list the total time $t$, the model building time $t_m$, the error $E$, and the total number of vertices $p$ in all sound approximations $(L,U)$ combined. For \texttt{Mon}, the error is guaranteed to be below $10^{-4}$. A complete table is \iffull in~\cref{app:allresults}\else in~\cite[Appendix A]{WVHRJ2024accepted}\fi.

\newcommand{\OOM}{MO}
\newcommand{\TO}{TO}
\begin{table}[t]
    \caption{Benchmark details (time in s, MO=memory out, TO=time out)}
\begin{adjustbox}{width=\textwidth}
    \centering
    \begin{NiceTabular}{l l r r r || r r | r | r r r r | r r r r }
        \toprule
             \Block[C]{2-1}{$\sd{D}$} & \Block[C]{2-1}{$\mdp{M}$} & \Block[C]{2-1}{$|S_{\semantics{\sd{D}}}|$} & \Block[C]{2-1}{$\#\mdp{A}$} & \Block[C]{2-1}{$|S_\mdp{A}|$} & \Block{1-2}{\texttt{Mon}} & & \texttt{Prec} & \Block{1-4}{\texttt{rPareto} [$\eta = 10^{-2}$]} & & & & \Block{1-4}{\texttt{rPareto} [$\eta = 10^{-4}$]} \\
             &  &  &  &  & $t$ & $t_m$ & $t$ & $t$ & $t_m$ & $E$ & \hspace{5mm}$p$ & $t$ & $t_m$ & $E$ & \hspace{5mm}$p$ \\\midrule
\texttt{UniGrid100} & \texttt{RmB} &1.1e+08&14&148582&\OOM&\OOM&\OOM&17&2&7e-11&58&32&2&4e-12&155\\
\texttt{UniGrid200} & \texttt{RmB} &4.2e+08&14&148582&\OOM&\OOM&\OOM&37&2&2e-19&58&84&2&1e-20&155\\
\texttt{UniGrid250} & \texttt{RmB} &6.6e+08&14&148582&\OOM&\OOM&\OOM&58&2&5e-23&58&141&2&3e-24&155\\
\texttt{UniGrid100} & \texttt{Dice}&5.4e+07&14&74424&170&41&\OOM&5&$<$1&3e-01&56&7&$<$1&2e-04&94\\
\texttt{UniGrid200} & \texttt{Dice}&2.1e+08&14&65424&\OOM&\OOM&\OOM&25&$<$1&8e-01&56&45&$<$1&5e-04&94\\
\texttt{UniGrid250} & \texttt{Dice}&3.3e+08&14&74424&\OOM&\OOM&\OOM&46&$<$1&1e+00&56&88&$<$1&7e-04&94\\
\texttt{Chain1000} &\texttt{RmB}&1.6e+07&6&42463&181&16&\OOM&11&$<$1&2e-50&44&22&$<$1&6e-52&82\\
\texttt{Chain2000} & \texttt{RmB}&2.1e+07&6&42463&258&24&\OOM&11&$<$1&3e-68&44&22&$<$1&7e-70&82\\
\texttt{BiGrid100} & \texttt{RmS}&8.5e+05&16&1353&19&1&\TO&\TO&\TO&\TO&\TO&\TO&\TO&\TO&\TO\\
\texttt{BiGrid200} & \texttt{RmS}&3.4e+06&16&1352&145&5&\TO&\TO&\TO&\TO&\TO&\TO&\TO&\TO&\TO\\
         \bottomrule
    \end{NiceTabular}
\end{adjustbox}
   \vspace{0.5cm} 
    \label{tab:concrete_benchmarks}
\end{table}

\subsubsection{Discussion} We discuss performance and error bounds. 

\paragraph{Performance.}
Our \texttt{Monolithic} baseline is reasonably fast, can construct a monolithic MDP with millions of states in a matter of seconds and analyze it in minutes. However, the approach reaches memory limitations when handling $>10^8$ states. 
In contrast, \texttt{rPareto} reduces the amount of time for model building, may speed up model checking by orders of magnitude, and is applicable even when the (monolithic) MDP has $>10^8$ states.  The `vertical lines' in \cref{fig:benchmark_scatter1} are due to the effect of caching intermediate results. However, \texttt{rPareto} is not a silver bullet: In particular, handling open MDPs with more than 3 exits, in particular as in \texttt{BiGrid}, is challenging to the multi-objective engine. This effect is amplified when using open MDPs that are challenging for value iteration, due to the iterative nature of \cref{alg:approxMultiObjMDP}. Precisely computing Pareto curves is not competitive at all for any of these benchmarks due to the large number of schedulers.

\paragraph{Error bounds.}
In all experiments, when $\eta=10^{-4}$, the approximations are sufficiently tight such that $E < 0.8 \cdot 10^{-3}$. The error actually mostly collapses: The lower bound typically includes a scheduler `close' to the optimal scheduler, similar to \cref{ex:err_bounds_disappear}. Consider~\cref{fig:benchmark_scatter2}: Increasing $\eta$ from $10^{-4}$ to $10^{-2}$ decreases the number of Pareto points roughly by a factor 2-3 and similarly the model checking time: However, the error sometimes explodes and for \texttt{UniGrid250}/\texttt{Dice}, the error is $1$.  



\section{Related Work and Conclusion}

\subsubsection{Related Work} Compositional verification methods for sequential MDPs~\cite{DBLP:conf/ijcai/BarryKL11,JungesS22,DBLP:conf/aips/NearyVCT22,WatanabeEAH23} have been discussed in~\cref{sec:intro}. For hierarchical MCs~\cite{DBLP:conf/qest/AbrahamJWKB10}, there are no schedulers. Hierarchical methods for reinforcement learning (RL) have been surveyed in \cite{DBLP:journals/csur/PateriaSTQ21}. Particularly interesting is~\cite{DBLP:conf/nips/JothimuruganBBA21}, which applies compositional RL w.r.t.\ to some sequentially composed MDP, derived from the task specification. 

Compositional probabilistic verification w.r.t.\ the parallel composition is investigated in~\cite{DBLP:conf/atva/XuGG10,DBLP:conf/atva/FengHKP11,DBLP:conf/tacas/KwiatkowskaNPQ10}. Most relevant is an multi-objective optimization framework~\cite{KwiatkowskaNPQ13} that reasons compositionally in an assume-guarantee fashion. 

The computation of multi-objective reward-bounded properties~\cite{DBLP:journals/jar/HartmannsJKQ20} generalizes topological value iteration. Their notion of episodes resembles rightward sequential composition. We support bidirectional composition where topological value iteration may not apply. For rightward MDPs, our approach caches the Pareto curves and thus supports exponentially many episodes in linear time.

\subsubsection{Conclusion}
In this work, we employ multi-objective model checking of monolithic MDPs to obtain a novel compositional algorithm for MDPs compositionally defined by string diagrams. Future work includes support for reward properties and a complete temporal logic, in particular also including support for finite horizon properties.  Furthermore, it is interesting to develop property-driven algorithms that compute Pareto-curves up to a context-dependent precision and to improve scalability for oMDPs with many exits. Finally, we think it is important to extract the compositional structure in form of string diagrams from popular formalisms like Prism programs. 

\clearpage
%
%
%
\bibliographystyle{splncs04}
\bibliography{TACAS2024}

\begin{thebibliography}{10}
\providecommand{\url}[1]{\texttt{#1}}
\providecommand{\urlprefix}{URL }
\providecommand{\doi}[1]{https://doi.org/#1}

\bibitem{DBLP:conf/qest/AbrahamJWKB10}
{\'{A}}brah{\'{a}}m, E., Jansen, N., Wimmer, R., Katoen, J., Becker, B.: {DTMC} model checking by {SCC} reduction. In: {QEST}. pp. 37--46. {IEEE} Computer Society (2010)

\bibitem{DBLP:conf/tacas/AlfaroKNPS00}
de~Alfaro, L., Kwiatkowska, M.Z., Norman, G., Parker, D., Segala, R.: Symbolic model checking of probabilistic processes using {MTBDD}s and the {K}ronecker representation. In: {TACAS}. {LNCS}, vol.~1785, pp. 395--410. Springer (2000)

\bibitem{DBLP:conf/icalp/BaierCHKR97}
Baier, C., Clarke, E.M., Hartonas{-}Garmhausen, V., Kwiatkowska, M.Z., Ryan, M.: Symbolic model checking for probabilistic processes. In: {ICALP}. {LNCS}, vol.~1256, pp. 430--440. Springer (1997)

\bibitem{DBLP:series/lncs/BaierHK19}
Baier, C., Hermanns, H., Katoen, J.: The 10, 000 facets of {MDP} model checking. In: Computing and Software Science, {LNCS}, vol. 10000, pp. 420--451. Springer (2019)

\bibitem{Baier08}
Baier, C., Katoen, J.: Principles of model checking. {MIT} Press (2008)

\bibitem{DBLP:conf/ijcai/BarryKL11}
Barry, J.L., Kaelbling, L.P., Lozano{-}P{\'{e}}rez, T.: Deth*: Approximate hierarchical solution of large {M}arkov decision processes. In: {IJCAI}. pp. 1928--1935. {IJCAI/AAAI} (2011)

\bibitem{DBLP:conf/isola/BuddeHKKPQTZ20}
Budde, C.E., Hartmanns, A., Klauck, M., Kret{\'{\i}}nsk{\'{y}}, J., Parker, D., Quatmann, T., Turrini, A., Zhang, Z.: On correctness, precision, and performance in quantitative verification - qcomp 2020 competition report. In: ISoLA {(4)}. {LNCS}, vol. 12479, pp. 216--241. Springer (2020)

\bibitem{DBLP:conf/fossacs/Chatterjee12}
Chatterjee, K.: Robustness of structurally equivalent concurrent parity games. In: {FOSSACS}. {LNCS}, vol.~7213, pp. 270--285. Springer (2012). \doi{10.1007/978-3-642-28729-9\_18}, \url{https://doi.org/10.1007/978-3-642-28729-9\_18}

\bibitem{EtessamiKVY08}
Etessami, K., Kwiatkowska, M.Z., Vardi, M.Y., Yannakakis, M.: Multi-objective model checking of {M}arkov decision processes. Log. Methods Comput. Sci.  \textbf{4}(4) (2008). \doi{10.2168/LMCS-4(4:8)2008}, \url{https://doi.org/10.2168/LMCS-4(4:8)2008}

\bibitem{DBLP:conf/atva/FengHKP11}
Feng, L., Han, T., Kwiatkowska, M.Z., Parker, D.: Learning-based compositional verification for synchronous probabilistic systems. In: {ATVA}. {LNCS}, vol.~6996, pp. 511--521. Springer (2011)

\bibitem{ForejtKNPQ11}
Forejt, V., Kwiatkowska, M.Z., Norman, G., Parker, D., Qu, H.: Quantitative multi-objective verification for probabilistic systems. In: {TACAS}. {LNCS}, vol.~6605, pp. 112--127. Springer (2011). \doi{10.1007/978-3-642-19835-9\_11}, \url{https://doi.org/10.1007/978-3-642-19835-9\_11}

\bibitem{ForejtKP12}
Forejt, V., Kwiatkowska, M.Z., Parker, D.: Pareto curves for probabilistic model checking. In: {ATVA}. {LNCS}, vol.~7561, pp. 317--332. Springer (2012), \url{https://doi.org/10.1007/978-3-642-33386-6\_25}

\bibitem{HartmannsH14}
Hartmanns, A., Hermanns, H.: The {M}odest toolset: An integrated environment for quantitative modelling and verification. In: {\'{A}}brah{\'{a}}m, E., Havelund, K. (eds.) {TACAS}. {LNCS}, vol.~8413, pp. 593--598. Springer (2014). \doi{10.1007/978-3-642-54862-8\_51}, \url{https://doi.org/10.1007/978-3-642-54862-8\_51}

\bibitem{DBLP:journals/jar/HartmannsJKQ20}
Hartmanns, A., Junges, S., Katoen, J., Quatmann, T.: Multi-cost bounded tradeoff analysis in {MDP}. J. Autom. Reason.  \textbf{64}(7),  1483--1522 (2020)

\bibitem{DBLP:conf/tacas/HartmannsJQW23}
Hartmanns, A., Junges, S., Quatmann, T., Weininger, M.: A practitioner's guide to {MDP} model checking algorithms. In: {TACAS} {(1)}. {LNCS}, vol. 13993, pp. 469--488. Springer (2023)

\bibitem{DBLP:conf/cav/HartmannsK20}
Hartmanns, A., Kaminski, B.L.: Optimistic value iteration. In: {CAV} {(2)}. {LNCS}, vol. 12225, pp. 488--511. Springer (2020)

\bibitem{HenselJKQV22}
Hensel, C., Junges, S., Katoen, J., Quatmann, T., Volk, M.: The probabilistic model checker {S}torm. Int. J. Softw. Tools Technol. Transf.  \textbf{24}(4),  589--610 (2022). \doi{10.1007/s10009-021-00633-z}, \url{https://doi.org/10.1007/s10009-021-00633-z}

\bibitem{hinze_marsden_2023}
Hinze, R., Marsden, D.: Introducing String Diagrams: The Art of Category Theory. Cambridge University Press (2023). \doi{10.1017/9781009317825}

\bibitem{DBLP:conf/cav/HoltzenJVMSB20}
Holtzen, S., Junges, S., Vazquez{-}Chanlatte, M., Millstein, T.D., Seshia, S.A., den Broeck, G.V.: Model checking finite-horizon {M}arkov chains with probabilistic inference. In: {CAV} {(2)}. {LNCS}, vol. 12760, pp. 577--601. Springer (2021)

\bibitem{DBLP:conf/nips/JothimuruganBBA21}
Jothimurugan, K., Bansal, S., Bastani, O., Alur, R.: Compositional reinforcement learning from logical specifications. In: NeurIPS. pp. 10026--10039 (2021)

\bibitem{JungesS22}
Junges, S., Spaan, M.T.J.: Abstraction-refinement for hierarchical probabilistic models. In: Shoham, S., Vizel, Y. (eds.) Computer Aided Verification - 34th International Conference, {CAV} 2022, Haifa, Israel, August 7-10, 2022, Proceedings, Part {I}. {LNCS}, vol. 13371, pp. 102--123. Springer (2022). \doi{10.1007/978-3-031-13185-1\_6}, \url{https://doi.org/10.1007/978-3-031-13185-1\_6}

\bibitem{KwiatkowskaNP11}
Kwiatkowska, M.Z., Norman, G., Parker, D.: {PRISM} 4.0: Verification of probabilistic real-time systems. In: {CAV}. {LNCS}, vol.~6806, pp. 585--591. Springer (2011). \doi{10.1007/978-3-642-22110-1\_47}, \url{https://doi.org/10.1007/978-3-642-22110-1\_47}

\bibitem{DBLP:conf/tacas/KwiatkowskaNPQ10}
Kwiatkowska, M.Z., Norman, G., Parker, D., Qu, H.: Assume-guarantee verification for probabilistic systems. In: {TACAS}. {LNCS}, vol.~6015, pp. 23--37. Springer (2010)

\bibitem{KwiatkowskaNPQ13}
Kwiatkowska, M.Z., Norman, G., Parker, D., Qu, H.: Compositional probabilistic verification through multi-objective model checking. Inf. Comput.  \textbf{232},  38--65 (2013). \doi{10.1016/j.ic.2013.10.001}, \url{https://doi.org/10.1016/j.ic.2013.10.001}

\bibitem{MacLane2}
Mac~Lane, S.: Categories for the working mathematician, Graduate Texts in Mathematics, vol.~5. Springer-Verlag, New York, second edn. (1978)

\bibitem{DBLP:conf/aips/NearyVCT22}
Neary, C., Verginis, C.K., Cubuktepe, M., Topcu, U.: Verifiable and compositional reinforcement learning systems. In: {ICAPS}. pp. 615--623. {AAAI} Press (2022)

\bibitem{PapadimitriouY00}
Papadimitriou, C.H., Yannakakis, M.: On the approximability of trade-offs and optimal access of web sources. In: {FOCS}. pp. 86--92. {IEEE} Computer Society (2000). \doi{10.1109/SFCS.2000.892068}, \url{https://doi.org/10.1109/SFCS.2000.892068}

\bibitem{DBLP:journals/csur/PateriaSTQ21}
Pateria, S., Subagdja, B., Tan, A., Quek, C.: Hierarchical reinforcement learning: {A} comprehensive survey. {ACM} Comput. Surv.  \textbf{54}(5),  109:1--109:35 (2021)

\bibitem{DBLP:journals/fmsd/QuatmannJK22}
Quatmann, T., Junges, S., Katoen, J.: Markov automata with multiple objectives. Formal Methods Syst. Des.  \textbf{60}(1),  33--86 (2022). \doi{10.1007/s10703-021-00364-6}, \url{https://doi.org/10.1007/s10703-021-00364-6}

\bibitem{QuatmannK21}
Quatmann, T., Katoen, J.: Multi-objective optimization of long-run average and total rewards. In: {TACAS}. {LNCS}, vol. 12651, pp. 230--249. Springer (2021). \doi{10.1007/978-3-030-72016-2\_13}, \url{https://doi.org/10.1007/978-3-030-72016-2\_13}

\bibitem{selinger2011survey}
Selinger, P.: A survey of graphical languages for monoidal categories. New structures for physics pp. 289--355 (2011)

\bibitem{Watanabe21}
Watanabe, K., Eberhart, C., Asada, K., Hasuo, I.: A compositional approach to parity games. In: {MFPS}. {EPTCS}, vol.~351, pp. 278--295 (2021). \doi{10.4204/EPTCS.351.17}, \url{https://doi.org/10.4204/EPTCS.351.17}

\bibitem{WatanabeEAH23}
Watanabe, K., Eberhart, C., Asada, K., Hasuo, I.: Compositional probabilistic model checking with string diagrams of {MDP}s. In: {CAV}. {LNCS}, vol. 13966, pp. 40--61. Springer (2023), \url{https://doi.org/10.1007/978-3-031-37709-9\_3}

\bibitem{DBLP:conf/atva/XuGG10}
Xu, D.N., G{\"{o}}{\ss}ler, G., Girault, A.: Probabilistic contracts for component-based design. In: {ATVA}. {LNCS}, vol.~6252, pp. 325--340. Springer (2010)

\end{thebibliography}

\iffull 
\clearpage
\appendix

\section{Benchmark Details}\label{app:benchmarks}
All benchmark cases presented in \cref{sec:imp_exp} are formed by some combination of a string diagram $\sd{D}$ and leaf models $\mdp{M}$. 

\subsection{String Diagrams}
\paragraph{Unidirectional Grid (\texttt{UniGrid})}
The unidirectional grid string diagram contains a $N \times N$ grid of connected rooms. In the initial room at $(1,1)$, there is an exit to the north and the east, leading to $(2,1)$ and $(1,2)$ respectively. Once an edge of the grid has been reached, i.e., $(x, N)$ or $(N, y)$, only the east and north exit are available, respectively. The goal of the string diagram is to reach the unique exit located in $(N,N)$.

\paragraph{Bidirectional Grid (\texttt{BiGrid})}
The bidirectional grid is defined in the same way as the unidirectional grid, except that it also possible to traverse between the rooms in the opposite directions, south and west.


\paragraph{Unidirectional Chain with a loop (\texttt{Chain})}
The unidirectional chain string diagram can be seen as a 1D  version of the unidirectional grid. However, instead of having a unique exit at the end of the chain, there is another exit that brings us back to the start of the chain, which makes the chain not rightward. 
To define this string diagram, we use syntactic sugar in form of the \emph{trace} operator.


\subsection{Leaf Models}
\paragraph{Small Room (\texttt{RmS})}
There are four different variants of the small room model, indicated by a pair out of the set $\{\mathit{Safe}, \mathit{Unsafe}\} \times \{ \mathit{Windy}, \mathit{Calm} \}$. This pair determines the dynamics of the room.

The small room model consists of a $7 \times 7$ grid world with imprecise movement. After each movement action (north, east, south or west), there is some probability that the agent does not end up where it intended to move. This behaviour is more likely if the room is windy instead of calm. Furthermore, there are some holes in the grid, which cannot be exited once entered. The rooms that are unsafe contain more holes. The exits of the room are at the four center positions of each edge of the grid.

\paragraph{Big Room (\texttt{RmB})}
The big room model is defined in the same way as the small room, except that the dimensions of the grid are 101 instead of 7.

\paragraph{Dice Game (\texttt{Dice})}
The dice models contain a small dice game. The game is played in rounds, and the goal of the game is to score as many points as possible. In each round, there is a choice of three biased dice. After each round, the controller picks a die and throws it, and adds the (potentially negative) number on the die to their score. There are 100 rounds, and the score is clamped between 0 and 100. For the four exit variant of the dice game, obtaining a score between 0 and 24 means that the first exit will be taken. Similarly, a score between 25 and 49 means that the second exit will be taken, and so forth.



\subsection{Results and Models Table}
\label{app:allresults}


\begin{table}[H]
    \centering
    \begin{adjustbox}{scale=0.7}
    \begin{NiceTabular}{l l r r r || r r | r | r r r r | r r r r }
    \toprule
         \Block[C]{2-1}{$\sd{D}$} & \Block[C]{2-1}{$\mdp{M}$} & \Block[C]{2-1}{$|S_{\semantics{\sd{D}}}|$} & \Block[C]{2-1}{$\#\mdp{A}$} & \Block[C]{2-1}{$|S_\mdp{A}|$} & \Block{1-2}{\texttt{Mono}} & & \texttt{Prec} & \Block{1-4}{\texttt{rPareto} [$\eta = 10^{-2}$]} & & & & \Block{1-4}{\texttt{rPareto} [$\eta = 10^{-4}$]} \\
         &  &  &  &  & $t$ & $t_m$ & $t$ & $t$ & $t_m$ & $E$ & \hspace{5mm}$p$ & $t$ & $t_m$ & $E$ & \hspace{5mm}$p$ \\\midrule
\texttt{BiGrid10}&\texttt{RmB}&1.1e+06&16&169808&76&4&TO&TO&TO&TO&TO&TO&TO&TO&TO\\
\texttt{BiGrid50}&\texttt{RmB}&TO&TO&TO&TO&TO&TO&TO&TO&TO&TO&TO&TO&TO&TO\\
\texttt{BiGrid100}&\texttt{RmS}&8.5e+05&16&1353&19&1&TO&TO&TO&TO&TO&TO&TO&TO&TO\\
\texttt{BiGrid200}&\texttt{RmS}&3.4e+06&16&1352&145&5&TO&TO&TO&TO&TO&TO&TO&TO&TO\\
\texttt{BiGrid500}&\texttt{RmS}&3.4e+06&16&1352&144&5&TO&TO&TO&TO&TO&TO&TO&TO&TO\\
\texttt{Chain10}&\texttt{RmB}&1.1e+05&6&42463&1&$<$1&OOM&10&$<$1&2e-03&44&22&$<$1&8e-05&82\\
\texttt{Chain20}&\texttt{RmB}&2.1e+05&6&42463&2&$<$1&OOM&11&$<$1&3e-03&44&22&$<$1&1e-04&82\\
\texttt{Chain50}&\texttt{RmB}&5.3e+05&6&42463&5&1&OOM&10&$<$1&9e-04&44&22&$<$1&4e-05&82\\
\texttt{Chain100}&\texttt{RmB}&1.1e+06&6&42463&10&2&OOM&11&$<$1&3e-05&44&22&$<$1&2e-06&82\\
\texttt{Chain200}&\texttt{RmB}&2.1e+06&6&42463&20&3&OOM&11&$<$1&4e-08&44&22&$<$1&2e-09&82\\
\texttt{Chain500}&\texttt{RmB}&5.3e+06&6&42463&53&7&OOM&11&$<$1&5e-17&44&22&$<$1&2e-18&82\\
\texttt{Chain1000}&\texttt{RmB}&1.1e+07&6&42463&114&12&OOM&11&$<$1&1e-33&44&22&$<$1&5e-35&82\\
\texttt{Chain1500}&\texttt{RmB}&1.6e+07&6&42463&181&16&OOM&11&$<$1&2e-50&44&22&$<$1&6e-52&82\\
\texttt{Chain2000}&\texttt{RmB}&2.1e+07&6&42463&258&24&OOM&11&$<$1&3e-68&44&22&$<$1&7e-70&82\\
\texttt{Chain2500}&\texttt{RmB}&2.7e+07&6&42463&OOM&OOM&OOM&11&$<$1&2e-86&44&23&$<$1&3e-88&82\\
\texttt{Chain3000}&\texttt{RmB}&3.2e+07&6&42463&OOM&OOM&OOM&11&$<$1&1e-102&44&23&$<$1&2e-104&82\\
\texttt{Chain3500}&\texttt{RmB}&3.7e+07&6&42463&OOM&OOM&OOM&11&$<$1&4e-120&44&23&$<$1&5e-122&82\\
\texttt{Chain4000}&\texttt{RmB}&4.2e+07&6&42463&OOM&OOM&OOM&11&$<$1&7e-136&44&22&$<$1&8e-138&82\\
\texttt{Chain10}&\texttt{Dice3}&9.9e+04&3&9883&4&$<$1&OOM&2&$<$1&2e-16&211&145&$<$1&4e-16&1779\\
\texttt{Chain20}&\texttt{Dice3}&2.0e+05&3&9883&7&$<$1&OOM&2&$<$1&4e-16&211&144&$<$1&4e-16&1779\\
\texttt{Chain50}&\texttt{Dice3}&4.9e+05&3&9883&19&$<$1&OOM&2&$<$1&2e-15&211&148&$<$1&4e-16&1779\\
\texttt{Chain100}&\texttt{Dice3}&9.9e+05&3&9883&37&$<$1&OOM&2&$<$1&4e-15&209&154&$<$1&4e-16&1779\\
\texttt{Chain200}&\texttt{Dice3}&2.0e+06&3&9883&74&$<$1&OOM&2&$<$1&9e-15&211&165&$<$1&1e-15&1771\\
\texttt{Chain300}&\texttt{Dice3}&3.0e+06&3&9883&110&1&OOM&3&$<$1&1e-14&209&170&$<$1&1e-15&1777\\
\texttt{Chain400}&\texttt{Dice3}&3.9e+06&3&9883&146&2&OOM&3&$<$1&2e-14&213&181&$<$1&2e-15&1777\\
\texttt{Chain500}&\texttt{Dice3}&4.9e+06&3&9883&188&2&OOM&3&$<$1&2e-14&213&198&$<$1&2e-15&1767\\
\texttt{Chain100}&\texttt{Dice5}&OOM&OOM&OOM&OOM&OOM&OOM&OOM&OOM&OOM&OOM&OOM&OOM&OOM&OOM\\
\texttt{Chain200}&\texttt{Dice5}&OOM&OOM&OOM&OOM&OOM&OOM&OOM&OOM&OOM&OOM&OOM&OOM&OOM&OOM\\
\texttt{Chain500}&\texttt{Dice5}&OOM&OOM&OOM&OOM&OOM&OOM&OOM&OOM&OOM&OOM&OOM&OOM&OOM&OOM\\
\texttt{Chain10}&\texttt{RmS}&8.6e+02&6&349&$<$1&$<$1&OOM&$<$1&$<$1&4e-03&55&$<$1&$<$1&3e-04&133\\
\texttt{Chain20}&\texttt{RmS}&1.7e+03&6&349&$<$1&$<$1&OOM&$<$1&$<$1&1e-02&55&$<$1&$<$1&2e-04&133\\
\texttt{Chain50}&\texttt{RmS}&4.2e+03&6&349&$<$1&$<$1&OOM&$<$1&$<$1&6e-03&55&$<$1&$<$1&8e-05&133\\
\texttt{Chain100}&\texttt{RmS}&8.5e+03&6&349&$<$1&$<$1&OOM&$<$1&$<$1&4e-04&55&$<$1&$<$1&5e-06&133\\
\texttt{Chain200}&\texttt{RmS}&1.7e+04&6&349&$<$1&$<$1&OOM&$<$1&$<$1&1e-06&55&$<$1&$<$1&1e-08&133\\
\texttt{Chain500}&\texttt{RmS}&4.2e+04&6&349&$<$1&$<$1&OOM&$<$1&$<$1&2e-14&55&$<$1&$<$1&2e-16&133\\
\texttt{UniGrid10}&\texttt{RmB}&1.1e+06&14&148582&15&4&OOM&14&2&8e-04&58&24&2&4e-05&155\\
\texttt{UniGrid20}&\texttt{RmB}&4.2e+06&14&148582&52&10&OOM&14&2&2e-04&58&25&2&1e-05&155\\
\texttt{UniGrid50}&\texttt{RmB}&2.7e+07&14&148582&OOM&OOM&OOM&14&2&2e-06&58&26&2&1e-07&155\\
\texttt{UniGrid100}&\texttt{RmB}&1.1e+08&14&148582&OOM&OOM&OOM&17&2&7e-11&58&32&2&4e-12&155\\
\texttt{UniGrid200}&\texttt{RmB}&4.2e+08&14&148582&OOM&OOM&OOM&37&2&2e-19&58&84&2&1e-20&155\\
\texttt{UniGrid250}&\texttt{RmB}&6.6e+08&14&148582&OOM&OOM&OOM&58&2&5e-23&58&141&2&3e-24&155\\
\texttt{UniGrid500}&\texttt{RmB}&OOM&OOM&OOM&OOM&OOM&OOM&OOM&OOM&OOM&OOM&OOM&OOM&OOM&OOM\\
\texttt{UniGrid10}&\texttt{Dice2}&5.3e+05&14&74424&2&$<$1&OOM&2&$<$1&2e-02&56&2&$<$1&3e-05&94\\
\texttt{UniGrid20}&\texttt{Dice2}&2.3e+06&14&83424&8&2&OOM&2&$<$1&4e-02&56&2&$<$1&4e-05&94\\
\texttt{UniGrid50}&\texttt{Dice2}&1.4e+07&14&74424&42&10&OOM&2&$<$1&1e-01&56&3&$<$1&1e-04&94\\
\texttt{UniGrid100}&\texttt{Dice2}&5.4e+07&14&74424&170&41&OOM&5&$<$1&3e-01&56&7&$<$1&2e-04&94\\
\texttt{UniGrid200}&\texttt{Dice2}&2.1e+08&14&65424&OOM&OOM&OOM&25&$<$1&8e-01&56&45&$<$1&5e-04&94\\
\texttt{UniGrid250}&\texttt{Dice2}&3.3e+08&14&74424&OOM&OOM&OOM&46&$<$1&1e+00&56&88&$<$1&7e-04&94\\
\texttt{UniGrid300}&\texttt{Dice2}&OOM&OOM&OOM&OOM&OOM&OOM&OOM&OOM&OOM&OOM&OOM&OOM&OOM&OOM\\
\texttt{UniGrid350}&\texttt{Dice2}&OOM&OOM&OOM&OOM&OOM&OOM&OOM&OOM&OOM&OOM&OOM&OOM&OOM&OOM\\
\texttt{UniGrid400}&\texttt{Dice2}&OOM&OOM&OOM&OOM&OOM&OOM&OOM&OOM&OOM&OOM&OOM&OOM&OOM&OOM\\
\texttt{UniGrid450}&\texttt{Dice2}&OOM&OOM&OOM&OOM&OOM&OOM&OOM&OOM&OOM&OOM&OOM&OOM&OOM&OOM\\
\texttt{UniGrid500}&\texttt{Dice2}&OOM&OOM&OOM&OOM&OOM&OOM&OOM&OOM&OOM&OOM&OOM&OOM&OOM&OOM\\
\texttt{UniGrid10}&\texttt{RmS}&8.4e+03&14&1183&$<$1&$<$1&OOM&$<$1&$<$1&3e-03&108&$<$1&$<$1&1e-05&284\\
\texttt{UniGrid20}&\texttt{RmS}&3.4e+04&14&1184&$<$1&$<$1&OOM&$<$1&$<$1&1e-03&108&$<$1&$<$1&1e-05&284\\
\texttt{UniGrid50}&\texttt{RmS}&2.1e+05&14&1183&$<$1&$<$1&OOM&1&$<$1&6e-06&108&3&$<$1&6e-08&284\\
\texttt{UniGrid100}&\texttt{RmS}&8.5e+05&14&1183&4&$<$1&OOM&6&$<$1&7e-10&108&14&$<$1&2e-11&284\\
\texttt{UniGrid200}&\texttt{RmS}&3.4e+06&14&1182&42&3&OOM&43&$<$1&2e-18&108&109&$<$1&9e-21&284\\
\texttt{UniGrid500}&\texttt{RmS}&OOM&OOM&OOM&TO&TO&OOM&OOM&OOM&OOM&OOM&OOM&OOM&OOM&OOM\\
     \end{NiceTabular}
\end{adjustbox}
\end{table}


\clearpage
\section{Details on shortcut MDPs~(Section~\ref{sec:paretotoshortcut})}

\subsection{Proof for \cref{lem:addingSuboptimalActions}}
\label{app:proof:addingSuboptimalActions}
This corresponds to adding an additional action $a_{*}$ enabled in a state $s$, see \cref{fig:constructsuboptimal}. If the action is locally dominated, in the sense that the probabilities to the successor states are dominated by a weighted combination of existing actions, then this does not change the weighted reachability (as an optimal scheduler will not take these actions) and thus the Pareto curve.
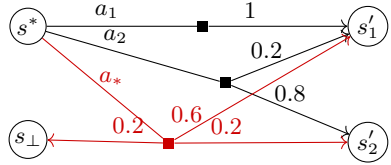
\begin{figure}[t]
\centering
\begin{tikzpicture}
    \node[circle,draw,inner sep=1pt] (s) {$s^{*}$};
    \node[circle,draw,inner sep=1pt, right=4cm of s] (sp1) {$s'_1$};
    \node[circle,draw,inner sep=1pt, below=of sp1] (sp2) {$s'_2$}; 
    \node[circle,draw,inner sep=1pt, below=of s] (sbot) {$s_\bot$};

    \node[fill,inner sep=2pt,right=2cm of s] (a1) {};
    \node[fill,inner sep=2pt,below=6mm of a1,xshift=3mm] (a2) {};
    \node[fill=red!80!black,inner sep=2pt,left=6mm of a2,yshift=-8mm] (a3) {};

    \draw[-] (s) -- node[above,pos=0.4] {$a_1$} (a1);
    \draw[-] (s) -- node[above,pos=0.4] {$a_2$} (a2);
    \draw[-,color=red!80!black] (s) -- node[right,pos=0.4] {$a_{*}$} (a3);
    \draw[->] (a1) -- node[above,pos=0.3] {$1$} (sp1);
    \draw[->] (a2) -- node[above,pos=0.3] {$0.2$} (sp1);
    \draw[->] (a2) -- node[above,pos=0.5] {$0.8$} (sp2);
    \draw[->,color=red!80!black] (a3) -- node[left,pos=0.25,xshift=-1mm] {$0.6$} (sp1);
    \draw[->,color=red!80!black] (a3) -- node[above,pos=0.3] {$0.2$} (sp2);
    \draw[->,color=red!80!black] (a3) -- node[above,pos=0.3] {$0.2$} (sbot);  
\end{tikzpicture}

\caption{Illustrative construction in \cref{lem:addingSuboptimalActions}. }
\label{fig:constructsuboptimal}
\end{figure}

The following lemma formulates this idea for any MDP and uses that if only probabilities to a sink are increased, the probability to reach any (other) state must be decreased. We provide a proof in \cref{app:proof:addingSuboptimalActions}.
\begin{lemma}
\label{lem:addingSuboptimalActions}
    Let $\mdp{A} = ((S, A, P), \interface)$ be an oMDP with $s_\bot$ a sink state and $s^{*} \in S$. 
    Let $\mdp{A}' = ((S, A \cup \{ a_{*} \}, P'), \interface)$ be an oMDP such that \[ P'(s,a) = \begin{cases} P(s,a) &  \text{ for all $s \in S, a \in A$,}\\ \delta & \text{if $s=s^*$, $a=a_{*}$,} \\ 
    \bot & \text{otherwise,} \end{cases} \] where $\delta \in \dist{S}$ such that for some weight vector  $w \in \weights{A}$ it holds that $\delta(s') \leq \sum w_a P(s_{*},a, s')$ for all $s' \in S \setminus \{ s_\bot \}$. Then: $ \fachievable{\mdp{A}} =\fachievable{\mdp{A}'}.$
\end{lemma}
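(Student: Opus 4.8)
The plan is to prove the two inclusions of $\fachievable{\mdp{A}} = \fachievable{\mdp{A}'}$ separately. The inclusion $\achievable{\mdp{A}}{i} \subseteq \achievable{\mdp{A}'}{i}$ for every entrance $i$ is immediate, since every scheduler of $\mdp{A}$ is also a scheduler of $\mdp{A}'$ (one that simply never selects $a_*$), and the induced reachability probabilities to the exits are unchanged. The content of the lemma is the reverse inclusion, for which I would pass to the dual description of the achievable set via its support function, i.e.\ via optimal weighted reachability. By \cref{lem:dmsuffices} each $\achievable{\mdp{A}}{i}$ is convex and downward-closed in $\mathbb{R}^{\ex}$, and, being finite only in nonnegative directions, such a set is determined by the values $\sup_{\sigma} \WReacha{\mdp{A},\sigma}{\convw{v}}{i}$ ranging over all weight vectors $\convw{v}\in\weights{\ex}$ (this is exactly the support-function characterization underlying \cref{alg:approxMultiObjMDP}). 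Hence it suffices to show, for every entrance $i$ and every $\convw{v}\in\weights{\ex}$, that $\sup_{\sigma\in\Gmsched{\mdp{A}'}} \WReacha{\mdp{A}',\sigma}{\convw{v}}{i} = \sup_{\sigma\in\Gmsched{\mdp{A}}} \WReacha{\mdp{A},\sigma}{\convw{v}}{i}$.

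Fix such a $\convw{v}$ and let $V,V'\colon S\to\realp$ be the optimal weighted-reachability value functions of $\mdp{A}$ and $\mdp{A}'$, i.e.\ $V(s)=\sup_\sigma\WReacha{\mdp{A},\sigma}{\convw{v}}{s}$ and likewise $V'$. Both are least fixpoints of the respective Bellman optimality operators: at an exit $j$ the value is pinned to $v_j$, at the sink $s_\bot$ it is $0$ (no exit is reachable from an absorbing sink), and at every other (non-terminal) state $s$ it equals $\max_{a\in\enabled{s}}\sum_{s'}P(s,a,s')\cdot(\text{value of }s')$; this least-fixpoint characterization of maximal (weighted) reachability is standard~\cite{Baier08}. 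Since $\Gmsched{\mdp{A}}\subseteq\Gmsched{\mdp{A}'}$ we have $V\le V'$ pointwise, so the whole argument reduces to showing $V'\le V$.

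To obtain $V'\le V$, I would verify that $V$ is \emph{already} a fixpoint of the Bellman operator $F'$ of $\mdp{A}'$; since $V'$ is the least such fixpoint, this yields $V'\le V$ at once. At every state other than $s^*$ the enabled actions of $\mdp{A}$ and $\mdp{A}'$ agree, so $F'(V)$ and $F(V)=V$ coincide there. At $s^*$ the only new option is $a_*$, and using $V(s_\bot)=0$, the domination hypothesis $\delta(s')\le\sum_a w_a P(s^*,a,s')$ for $s'\neq s_\bot$, and $V\ge 0$,
\[
\sum_{s'}\delta(s')\,V(s')=\sum_{s'\neq s_\bot}\delta(s')\,V(s')\le\sum_a w_a\sum_{s'}P(s^*,a,s')\,V(s')\le\max_{a\in\enabled{s^*}}\sum_{s'}P(s^*,a,s')\,V(s')=V(s^*),
\]
where the last inequality uses $\sum_a w_a=1$ together with the fact that a convex combination is bounded by the maximum. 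Thus the extra action $a_*$ does not raise the Bellman value at $s^*$, so $F'(V)(s^*)=V(s^*)$ and $V$ is a fixpoint of $F'$. Consequently $V'=V$; the weighted-reachability values agree for every direction and entrance, and therefore $\achievable{\mdp{A}'}{i}\subseteq\achievable{\mdp{A}}{i}$, completing the equality.

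The step I expect to be the main obstacle is the clean justification that maximal weighted reachability is the \emph{least} fixpoint of the Bellman operator, so that ``$V$ is some fixpoint of $F'$'' upgrades to ``$V'\le V$'': one must rule out spurious larger fixpoints arising from zero-value loops, which is exactly where the role of the value-$0$ sink $s_\bot$ and the assumption that exits are the only (non-sink) terminals is used. The intuitive content — that $a_*$ merely shifts probability mass towards the value-$0$ sink and away from states that can still reach an exit, hence is never worth taking — could alternatively be packaged as a coupling between the Markov chain induced by a scheduler using $a_*$ and the one replacing $a_*$ with the mixed action $\sum_a w_a\cdot a$, but the fixpoint argument above makes the global, recursive nature of this domination rigorous with far less bookkeeping.
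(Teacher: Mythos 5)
Your proof is correct and follows essentially the same route as the paper's: both reduce the reverse inclusion to optimal weighted reachability for each weight vector and then use the Bellman characterization together with the hypothesis $\delta(s')\le\sum_a w_a P(s^*,a,s')$ (and the value $0$ at $s_\bot$) to show the new action is one-step dominated at $s^*$. The only difference is packaging — you phrase it as ``$V$ is a fixpoint of the new Bellman operator, hence dominates its least fixpoint $V'$,'' whereas the paper derives a contradiction from a weighted-optimal DM scheduler that selects $a_*$ — which is a cleaner rendering of the same argument.
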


We use the following auxiliary lemma.

\begin{lemma}
\label{lem:monotonicity_mc}
Let $M_1 \defeq (S, P_1),\, M_2 \defeq (S, P_2)$ be MCs, and $s\in S$ be a sink state for both $M_1$ and $M_2$. Suppose that $P_1(s_1, s_2) \leq P_2(s_1, s_2)$ for any $s_1, s_2\in S\backslash \{s\}$. Then, $\Reacha{M_1}{s_1}{s_2} \leq \Reacha{M_2}{s_1}{s_2}$ for any $s_1, s_2\in S\backslash \{s\}$. 
\end{lemma}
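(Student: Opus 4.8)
The plan is to reduce the statement to a monotonicity property of \emph{bounded-horizon} reachability and then argue by induction on the horizon. First I would fix a target $t \defeq s_2 \in S\setminus\{s\}$ and, for a Markov chain $M=(S,P)$ and $n\in\nat$, write $f^M_n(u)$ for the probability of reaching $t$ within $n$ steps from $u$. These satisfy $f^M_0(u)=1$ if $u=t$ and $f^M_0(u)=0$ otherwise, together with, for $n\geq 0$,
\[
f^M_{n+1}(u) = \begin{cases} 1 & \text{if } u = t, \\ \sum_{u'\in S} P(u,u')\, f^M_n(u') & \text{if } u \neq t. \end{cases}
\]
The events ``reach $t$ within $n$ steps'' increase to ``eventually reach $t$'', so by continuity of measure from below $f^M_n(u)\uparrow \Reacha{M}{u}{t}$. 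Hence it suffices to prove $f^{M_1}_n(u)\leq f^{M_2}_n(u)$ for all $u\in S$ and all $n$, and then let $n\to\infty$ (instantiated at $u=s_1$).

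The first thing I would record is that the sink carries no weight: since $s$ is absorbing in both chains and $s\neq t$, the target is never reached from $s$, so $f^{M_1}_n(s)=f^{M_2}_n(s)=0$ for every $n$. Consequently the summand $u'=s$ vanishes in the recurrence above, and for $u\neq t$ I may restrict the sum to $u'\in S\setminus\{s\}$. The induction on $n$ is then routine: the base case $n=0$ holds with equality, and for the inductive step, assuming $f^{M_1}_n(u')\leq f^{M_2}_n(u')$ for all $u'$, the cases $u=t$ (both sides $1$) and $u=s$ (both sides $0$) are trivial, while for $u\in S\setminus\{s,t\}$,
\[
f^{M_1}_{n+1}(u) = \sum_{u'\in S\setminus\{s\}} P_1(u,u')\, f^{M_1}_n(u') \;\leq\; \sum_{u'\in S\setminus\{s\}} P_2(u,u')\, f^{M_2}_n(u') = f^{M_2}_{n+1}(u),
\]
where the inequality is termwise, combining the hypothesis $P_1(u,u')\leq P_2(u,u')$ (valid precisely because $u,u'\in S\setminus\{s\}$), the induction hypothesis, and nonnegativity of all factors (so that $0\le a\le a'$ and $0\le b\le b'$ yield $ab\le a'b'$).

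The only real content — and the one place where the sink hypothesis is indispensable — is the opening observation that the extra probability mass $M_1$ sends to $s$, relative to $M_2$, is exactly the mass that cannot contribute to reaching $t$. This is what lets the pointwise domination of the non-sink transition entries propagate cleanly through the recurrence, with no need for an explicit coupling or for tracking where the diverted mass goes. I therefore expect no genuine obstacle beyond the bookkeeping of the $u'=s$ term; once that is isolated, taking the limit $n\to\infty$ in the displayed inequality gives $\Reacha{M_1}{s_1}{s_2}\leq\Reacha{M_2}{s_1}{s_2}$ for all $s_1,s_2\in S\setminus\{s\}$, as required.
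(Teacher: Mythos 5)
Your argument is correct. The one substantive point — that the only place where $M_1$ and $M_2$ are allowed to differ in $M_1$'s favour is the mass sent to the sink $s$, and that this mass contributes nothing to reaching the target $t=s_2\neq s$, so the $u'=s$ summand can be dropped from the recurrence before comparing termwise — is exactly the right observation, and the rest (monotone convergence of the step-bounded reachability probabilities $f^M_n$ to $\Reacha{M}{s_1}{s_2}$, plus a routine induction on $n$) is standard and carried out without error. Note that the paper itself states this lemma without any proof, using it only as an auxiliary fact inside the proof of its Lemma 2 on adding dominated actions; so there is no argument in the paper to compare against, and your value-iteration-style induction is a perfectly reasonable way to discharge it. The only cosmetic caveat is that, under the paper's conventions, a state of an MC may be terminal (no enabled action), in which case one should read $P(u,u')=0$ in your recurrence so that $f^M_{n+1}(u)=0$ for terminal $u\neq t$; this does not affect the inequality.
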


\begin{proof}[Proof of \cref{lem:addingSuboptimalActions}]
We prove $\achievable{\mdp{A}}{i} = \achievable{\mdp{A}'}{i}$.
Clearly, $\achievable{\mdp{A}}{i} \subseteq \achievable{\mdp{A}'}{i}$: The achievable points are the union over the achievable points of the schedulers. First, every scheduler  $\sched \in \Dmsched{\mdp{A}}$ is also a scheduler in $\Dmsched{\mdp{A}'}$. Second, the induced Markov chains under schedulers coincide $\indMC{\mdp{A}}{\sched} = \indMC{\mdp{A}'}{\sched}$, thus $\Reacha{\mdp{A},\sched}{i}{O} =\Reacha{\mdp{A'},\sched}{i}{O}$. 

It remains to show that $\achievable{\mdp{A}}{i} \supseteq \achievable{\mdp{A'}}{i}$.
Assume for contradiction that there exists a scheduler $\sched^{*} \in \Gmsched{\mdp{A'}}$ such that $\Reacha{\mdp{A'},\sched^{*}}{i}{O} \not\in \achievable{\mdp{A}}{i}$. 
\begin{itemize}
\item 
W.l.o.g., via \cref{lem:dmsuffices}, we can assume $\sched^{*}$ is DM.
\item 
W.l.o.g., we can assume $\sched^{*}$ is optimal for some weighted objective, that is, it optimizes some $\WReacha{\mdp{A}'}{\convw{w}}{i}$. We use $\sched^{r}$ to consider the optimal scheduler in $\mdp{A}$ for the same weight $\convw{w}$.
\item 
Furthermore, observe that $\sched^{*}(s_{*}) = a_{*}$. In particular, this follows from $\sched^{*} \not\in \Gmsched{\mdp{A}}$ as again, the induced Markov chains would then coincide.
\end{itemize}
Now we consider the Bellman characteristic~\cite[Thm 10.100]{Baier08} with respect to the reachability probability $\WReacha{\mdp{A}'}{\convw{w}}{i}$. Let $x_{s}$ denote the optimal (weighted) reachability with respect to weight $\convw{w}$ when starting in some state $s$; we have that 
\begin{align*}
x_{s^*} & =  \sum_{s'} P'(s^*,\sched^{*}(s^*))(s') \cdot x_{s'} = \sum_{s'} P'(s^*,a_{*})(s') \cdot x_{s'} \\ & =  \sum_{s'} \delta(s') \cdot x_{s'} \leq \sum_{s'} \sum_{a \in A}  w_a P'(s_{*},a)(s') x_{s'} \leq \max_{a \in A} \sum_{s'} P'(s_{*},a)(s') x_{s'}
\end{align*}
and 
\begin{align*}
x_{s^*}  & = \max_{a \in A \cup \{a_{*} \}} \sum_{s'} P'(s,a) \cdot x_{s'} > \sum_{s'} P'(s,\sched^r(s^{*})) \cdot x_{s'} = \max_{a \in A} \sum_{s'} P'(s_{*},a)(s') x_{s'}.
\end{align*}

\end{proof}

\subsection{Proof for \cref{prop:correct_abst_openMDPs}}
\label{app:proof:correct_abst_openMDP}
\begin{proof}[\cref{prop:correct_abst_openMDPs}]
We show (1) $\achievable{\cmdp{\fachievable{\mdp{A}}}}{i} =  \achievable{\mdp{A}}{i}$ and then (2) the inclusions.

\smallskip
\noindent\emph{1):}
$\achievable{\cmdp{\fachievable{\mdp{A}}}}{i} =  \achievable{\mdp{A}}{i}$:  because the set of achievable points is the down-ward convex closure of the set of Pareto points that are induced by DM schedulers. By~\cref{def:abstractMDP}, every DM scheduler in $\cmdp{\fachievable{\mdp{A}}}$ induces a Pareto point, which corresponds to a Pareto point in $\mdp{A}$ from $i$ that is induced by a DM scheduler. Conversely, every DM scheduler that induces a Pareto point from $i$ in $\mdp{A}$ corresponds to an action in $\cmdp{\fachievable{\mdp{A}}}$, which also corresponds to a DM scheduler in $\cmdp{\achievable{\mdp{A}}{i}}$. 

\smallskip\noindent\emph{2)}: We first show the following auxiliary lemma for $X, Y \subseteq \dist{O}^\en$ where $X(i), Y(i)$ are finitely generated downward-closed convex sets. 
\begin{align} X(i) \subseteq Y(i)\quad\text{ implies }\quad\achievable{\cmdp{X}}{i} \subseteq \achievable{\cmdp{Y}}{i}. \label{eqref:prooflemma} \end{align}
To see this holds, we start with $\cmdp{Y}$ and iteratively apply \cref{lem:addingSuboptimalActions} to add all actions from $\cmdp{X}$. We call this $\cmdp{Y}'$. The new actions are all suboptimal in the sense of \cref{lem:addingSuboptimalActions} and we obtain $\fachievable{\cmdp{Y}'} = \fachievable{\cmdp{Y}}$. Consequentially, $\achievable{\cmdp{Y}}{i} = \achievable{\cmdp{Y}'}{i}$. Now, notice that $\cmdp{X}$ is a subMDP of $\cmdp{Y}'$ in the sense that $\cmdp{X}$ contains a subset of the enabled actions of $\cmdp{Y}'$. By definition, $\achievable{\cmdp{X}}{i} \subseteq \achievable{\cmdp{Y}'}{i}$ and \eqref{eqref:prooflemma} follows. Now, we use \eqref{eqref:prooflemma} twice: We instantiate $X$ and $Y$ once with the lower bound and the Pareto curve, and once with the Pareto curve and the upper bound. From this, monotonicity follows directly.
\end{proof}

\subsection{Proof of \cref{lem:construct_deterministic_scheduler}}
\label{app:proof:construct_deterministic_scheduler}
\begin{proofs}
    We proof the first part, the second part follows directly from  \cref{def:abstractMDP}.
    We construct $\tau$ as follows: for any entrances $i\in \en^{\mdp{A}}$, there is a Pareto-optimal memoryless scheduler $\sigma_i$ such that $\Reacha{\mdp{A},\sigma_i}{i}{\ex^{\mdp{A}}} = \Reacha{\cmdp{\fachievable{\mdp{A}}},\sigma}{i}{\ex^{\mdp{A}}}$ by definition of $\cmdp{\fachievable{\mdp{A}}}$. For any finite path $\pi\defeq s_0\xrightarrow{a_0}\dots\xrightarrow{a_{n-1}}s_{n}  $ on $\mdp{A}$, we define $\tau(\pi) \defeq \sigma_i(s_n)$ if $s_0 = i \in \en^{\mdp{A}}$, and $\tau(\pi) = a$  otherwise, where $a$ is any fixed action. The scheduler $\tau$ indeed satisfies the condition because for any finite path that is from any entrance $i$, the scheduler $\tau$ behaves exactly same as $\sigma_i$. 
    \qed
\end{proofs}

\clearpage
\section{Details on Composing Shortcut MDPs (Section~\ref{sec:composingshortcuts})}
\label{app:explainmainprop}
In this Section, we continue our work to establish~\cref{prop:correct_seqcomp}.  We formalize our proof idea for  \[ \achievable{\cmdp{\fachievable{\mdp{A}}}\seqcomp\cmdp{\fachievable{\mdp{B}}}}{i}  =
    \achievable{\mdp{A}\seqcomp\mdp{B}}{i}.\] 
    
     Therefore,  we define the sequential composition $\sigma^{\mdp{A}}\seqcomp \sigma^{\mdp{B}}$ of memoryless schedulers $\sigma^{\mdp{A}}, \sigma^{\mdp{B}}$ on $\mdp{A}, \mdp{B}$ as a scheduler on $\mdp{A}\seqcomp\mdp{B}$ that behaves like $\sigma^{\mdp{A}}$ on states from $\mdp{A}$ and $\sigma^{\mdp{B}}$ on states from $\mdp{B}$.   
 
 For the inclusion from left to right, we construct a scheduler for $\mdp{A}\seqcomp\mdp{B}$ from some schedulers for $\cmdp{\fachievable{\mdp{A}}}$ and $\cmdp{\fachievable{\mdp{B}}}$ that preserves reachability problems:
\begin{lemma}
\label{lem:construct_deterministic_scheduler_seqcomp}
Let $\mdp{A}, \mdp{B}$ oMDPs such that $\mdp{A}\seqcomp\mdp{B}$ is well-defined. Let $\sigma^{\mdp{A}}$ and $\sigma^{\mdp{B}}$ be memoryless schedulers on $\cmdp{\fachievable{\mdp{A}}}$ and $ \cmdp{\fachievable{\mdp{B}}}$, respectively.
There is a scheduler $\tau$ on $\mdp{A}\seqcomp \mdp{B}$ such that $\Reacha{\mdp{A}\seqcomp \mdp{B},\tau}{i}{\ex^{\mdp{A}\seqcomp \mdp{B}}} = \Reacha{\cmdp{\fachievable{\mdp{A}}}\seqcomp \cmdp{\fachievable{\mdp{B}}},\sigma^{\mdp{A}}\seqcomp \sigma^{\mdp{B}}}{i}{\ex^{\mdp{A}\seqcomp \mdp{B}}}$ for any $i\in \en^{\mdp{A}\seqcomp \mdp{B}}$. 
\end{lemma}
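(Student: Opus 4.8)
The plan is to build $\tau$ by \emph{restarting}, at each entry into one of the two components, the single-component scheduler supplied by \cref{lem:recoverstratfromshortcut}, and then to prove the reachability identity by exhibiting both sides as reachability probabilities in \emph{one and the same} finite Markov chain on the set of entrances. First I would apply \cref{lem:recoverstratfromshortcut} (in its form that matches on every entrance simultaneously) to obtain schedulers $\tau^{\mdp{A}}$ on $\mdp{A}$ and $\tau^{\mdp{B}}$ on $\mdp{B}$ with $\Reacha{\mdp{A},\tau^{\mdp{A}}}{i}{\ex^{\mdp{A}}} = \Reacha{\cmdp{\fachievable{\mdp{A}}},\sigma^{\mdp{A}}}{i}{\ex^{\mdp{A}}}$ for all $i\in\en^{\mdp{A}}$, and symmetrically for $\mdp{B}$. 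Since $\sigma^{\mdp{A}},\sigma^{\mdp{B}}$ are memoryless rather than DM, I would first note that the relevant direction of \cref{lem:recoverstratfromshortcut} extends verbatim: the only property used is that each per-entrance vector $\Reacha{\cmdp{\fachievable{\mdp{A}}},\sigma^{\mdp{A}}}{i}{\ex^{\mdp{A}}}$ lies in $\achievable{\mdp{A}}{i}$, which by \cref{lem:dmsuffices} is realized by some memoryless scheduler in $\mdp{A}$.

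Next I would define $\tau$ on $\mdp{A}\seqcomp\mdp{B}$ as follows. For a finite path $\pi$ ending in $s$, let $\mdp{D}\in\{\mdp{A},\mdp{B}\}$ be the component with $s\in S^{\mdp{D}}$ (unique, since the state spaces are disjoint), and let $\pi'$ be the maximal suffix of $\pi$ staying in $S^{\mdp{D}}$. By the definition of $\seqcomp$, the first state of $\pi'$ is an entrance of $\mdp{D}$: either an outer entrance where $\pi$ started, or $\nxt(k)$ for the internal exit $k$ through which $\mdp{D}$ was last entered. I set $\tau(\pi)\defeq\tau^{\mdp{D}}(\pi')$. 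This remembers only the last entry point into the current component, so $\tau$ is history-dependent but well-defined, and within any single sojourn in $\mdp{D}$ starting at entrance $i$ it reproduces $\tau^{\mdp{D}}$ from $i$; in particular $\Reacha{\mdp{D},\tau}{i}{k}=\Reacha{\mdp{D},\tau^{\mdp{D}}}{i}{k}$ for every exit $k\in\ex^{\mdp{D}}$.

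Finally I would prove the identity by comparing macro-chains. Fix an outer exit $j\in\ex^{\mdp{A}\seqcomp\mdp{B}}$. The displayed equations in \cref{sec:seqcompcorrect} hold for any scheduler; instantiating them for $\tau$ on $\mdp{A}\seqcomp\mdp{B}$ and for $\sigma^{\mdp{A}}\seqcomp\sigma^{\mdp{B}}$ on $\cmdp{\fachievable{\mdp{A}}}\seqcomp\cmdp{\fachievable{\mdp{B}}}$ exhibits both reachability families as reachability probabilities in the finite Markov chain on the entrance set $\en^{\mdp{A}}\cup\en^{\mdp{B}}$, with the outer exits absorbing and a sink absorbing the mass lost to internal loops, whose transition coefficients are precisely the first-passage probabilities $\Reacha{\mdp{A},\cdot}{i}{k}$ and $\Reacha{\mdp{B},\cdot}{i}{k}$ to the internal and outer exits. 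By the previous paragraph these coefficients for $\tau$ equal $\Reacha{\mdp{D},\tau^{\mdp{D}}}{i}{k}$, which by the first step equal the corresponding coefficients for $\sigma^{\mdp{A}}\seqcomp\sigma^{\mdp{B}}$. Hence the two macro-chains are literally identical, so their reachability probabilities to $j$ coincide; ranging over $j$ gives the claimed equality of vectors.

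The main obstacle is the cyclic dependency: because $\mdp{A}\seqcomp\mdp{B}$ may revisit each component arbitrarily often, I cannot merely concatenate the component schedulers, and $\tau$ must discard all history except the current entry point. Making this precise, and checking that the collapsed first-passage data really do satisfy one and the same linear system -- in particular that the probability mass trapped in internal loops is accounted for identically on both sides, so the two substochastic coefficient matrices agree and the systems therefore have the same solution -- is the delicate part of the argument.
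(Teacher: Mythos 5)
Your proposal is correct and follows essentially the same route as the paper's proof: obtain $\tau^{\mdp{A}},\tau^{\mdp{B}}$ from \cref{lem:recoverstratfromshortcut}, stitch them into $\tau$ by restarting at the entrance of the component last entered, and conclude by showing that both reachability families satisfy the same first-passage recursion over $\en^{\mdp{A}}\cup\en^{\mdp{B}}$ (your ``identical macro-chain'' phrasing is just the paper's ``unique solution of the same linear system'' in different clothing). Your explicit remark that \cref{lem:recoverstratfromshortcut} must be extended from DM to stochastic memoryless $\sigma^{\mdp{A}},\sigma^{\mdp{B}}$ via \cref{lem:dmsuffices} is a point the paper's proof uses implicitly, so it is a welcome clarification rather than a deviation.
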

\begin{proofs}
Every action in $\sched^\mdp{A}$ and $\sched^\mdp{B}$ correspond to a memoryless strategy for one of their entrances $i$. The scheduler $\tau$ stores the last entrance $i$ that it used to enter an MDP\footnote{It is good to realize that entrance states can also be reached from other states in the same oMDP.}. More precisely, $\tau$ tracks the first state reached after leaving an exit state or the initial state if no exit has been reached. The scheduler behaves then based on $\sigma^\mdp{A}$ in entrance $i$ or $\sigma^\mdp{B}$ in entrance $i$, respectively. The complete proof is below.
\end{proofs}

Continueing our discussion on the equality in \cref{prop:correct_seqcomp}: From right to left, we take one indirection more: We take a scheduler on $\mdp{A}\seqcomp\mdp{B}$, then obtain schedulers for $\cmdp{\fachievable{A}}$ and $\cmdp{\fachievable{B}}$ and show that by reconstructing a scheduler $\tau'$ on $\mdp{A}\seqcomp\mdp{B}$ with the same reachability probability, that this new scheduler dominates~$\tau$.

\subsection{Proof for \cref{lem:construct_deterministic_scheduler_seqcomp}}
\begin{proof}

By~\cref{lem:construct_deterministic_scheduler}, we can construct schedulers $\tau^{\mdp{A}}, \tau^{\mdp{B}}$ such that 
    \begin{align*}
        \Reacha{\mdp{A},\tau^{\mdp{A}}}{i^{\mdp{A}}}{\ex^{\mdp{A}}} &= \Reacha{\cmdp{\fachievable{\mdp{A}}},\sigma^{\mdp{A}}}{i^{\mdp{A}}}{\ex^{\mdp{A}}},\,\\
        \Reacha{\mdp{B},\tau^{\mdp{B}}}{i^{\mdp{B}}}{\ex^{\mdp{B}}} &= \Reacha{\cmdp{\fachievable{\mdp{B}}},\sigma^{\mdp{B}}}{i^{\mdp{B}}}{\ex^{\mdp{B}}},
    \end{align*}
    for any $i^{\mdp{A}}\in \en^{\mdp{A}},\, i^{\mdp{B}}\in \en^{\mdp{B}}$. 
    
    Then, we construct a scheduler $\tau$ on $\mdp{A}\seqcomp \mdp{B}$ below; the intuition is that $\tau$ mimicks either schedulers $\tau^{\mdp{A}}$ or $\tau^{\mdp{B}}$, and switches them when it reaches to states in $\exr^{\mdp{A}}\cup \exl^{\mdp{B}}$. 
    
    For any exit $s\in \exr^{\mdp{A}}\cup \exl^{\mdp{B}}$, 
    we write $\nxt(s)$ for the corresponding entrance 
    in $\enr^{\mdp{B}}\cup \enl^{\mdp{A}}$.
    Precisely, for any finite path $\pi\defeq s_0\xrightarrow{a_0}s_1\xrightarrow{a_{n-1}}\dots$, the scheduler $\tau(\pi)$ is given by 
    \begin{itemize}
      
        \item if $s_0 \in \en^{\mdp{A}}$ and there are no states $s$ in $\pi$ such that $s\in \exr^{\mdp{A}}\cup \exl^{\mdp{B}}$, then $\tau(\pi) \defeq \tau^{\mdp{A}}(\pi)$, 
        \item if $s_0 \in \en^{\mdp{B}}$ and there are no states $s$ in $\pi$ such that $s\in \exr^{\mdp{A}}\cup \exl^{\mdp{B}}$, then $\tau(\pi) \defeq \tau^{\mdp{B}}(\pi)$, 
        \item if $s_0\in \en^{\mdp{A}}\cup \en^{\mdp{B}}$ and there is a state $s_n$ such that $s_n$ is the last state that satsfies $s_n\in \exr^{\mdp{A}}\cup \exl^{\mdp{B}}$ in $\pi$, and $s_n$ is the last state in $\pi$, then $\tau(\pi) \defeq \nxt(s_n)$, 
        \item  if $s_0\in \en^{\mdp{A}}\cup \en^{\mdp{B}}$ and there is a state $s_n$ such that $s_n$ is the last state that satsfies $s_n\in \exr^{\mdp{A}}\cup \exl^{\mdp{B}}$ in $\pi$, and $s_n$ is not the last state in $\pi$, then $\tau(\pi) \defeq \tau^{\mdp{A}}(\pi')$ if $s_{n}\in \exl^{\mdp{B}}$, and $\tau(\pi) \defeq \tau^{\mdp{B}}(\pi')$ if $s_{n}\in \exr^{\mdp{A}}$, where $\pi'$ is the suffix sequence $s_{n+1}\xrightarrow{a_{n+1}}s_{n+2}\dots$ of $\pi$ (whose head $s_{n+1}$ is in $ \enr^{\mdp{B}}\cup \enl^{\mdp{A}}$). 
          \item if $s_0\not \in \en^{\mdp{A}}\cup \en^{\mdp{B}}$, then $\tau(\pi) \defeq a$, where $a$ is any fixed action,
    \end{itemize}

    Then, for any $i\in \en^{\mdp{A}}\cup  \en^{\mdp{B}}$ and $j\in \ex^{\mdp{A}\seqcomp \mdp{B}}$, we can simplify the equations from \cref{sec:seqcompcorrect} and substitute the reachability probability using $\tau$ by the reachability probability under a restriction of that scheduler. It is good to notice that the schedulers $\tau^\mdp{A}$ and $\tau^\mdp{B}$ are fixed and not dependent on $\tau$.
    \begin{align*}
        \Reacha{\mdp{A}\seqcomp \mdp{B}, \tau}{i}{j} = \begin{cases}\Reacha{\mdp{A},\tau^{\mdp{A}}}{i}{j} + \sum_{k\in \exr^{\mdp{A}}} \Reacha{\mdp{A},\tau^{\mdp{A}}}{i}{k}\cdot  \Reacha{\mdp{A}\seqcomp \mdp{B}, \tau}{\nxt(k)}{j} &\text{ if }i\in \en^{\mdp{A}},\\
       \Reacha{\mdp{B},\tau^{\mdp{B}}}{i}{j} + \sum_{k\in \exl^{\mdp{B}}} \Reacha{\mdp{B},\tau^{\mdp{B}}}{i}{k}\cdot  \Reacha{\mdp{A}\seqcomp \mdp{B}, \tau}{\nxt(k)}{j} &\text{ if } i\in \en^{\mdp{B}}.
        \end{cases}
    \end{align*}
    In fact, $\big(\Reacha{\mdp{A}\seqcomp \mdp{B}, \tau}{i}{j}\big)_{i\in \en^{\mdp{A}}\cup \en^{\mdp{B}}}$ is the unique solution of the following linear equation: 
    \begin{align*}
        x_i \defeq \begin{cases}
            0 &\text{ if }\Reacha{\mdp{A}\seqcomp \mdp{B}, \tau}{i}{j} = 0,\\
            \Reacha{\mdp{A},\tau^{\mdp{A}}}{i}{j} + \sum_{k\in \exr^{\mdp{A}}} \Reacha{\mdp{A},\tau^{\mdp{A}}}{i}{k}\cdot x_{\nxt(k)}&\text{ if } \Reacha{\mdp{A}\seqcomp \mdp{B}, \tau}{i}{j} > 0 \text{ and $i\in \en^{\mdp{A}}$,}\\
            \Reacha{\mdp{B},\tau^{\mdp{B}}}{i}{j} + \sum_{k\in \exl^{\mdp{B}}} \Reacha{\mdp{B},\tau^{\mdp{B}}}{i}{k}\cdot x_{\nxt(k)}&\text{ if } \Reacha{\mdp{A}\seqcomp \mdp{B}, \tau}{i}{j} > 0 \text{ and $i\in \en^{\mdp{B}}$.}
        \end{cases}
    \end{align*}

    On the other hand, $\big(\Reacha{\mdp{A}\seqcomp \mdp{B}, \tau}{i}{j}\big)_{i\in \en^{\mdp{A}}\cup \en^{\mdp{B}}}$ is also the unique solution of the following equivalent linear equation:

    \begin{align*}
        x_i \defeq \begin{cases}
            0 &\text{ if }\Reacha{\mdp{A}\seqcomp \mdp{B}, \tau}{i}{j} = 0,\\
\Reacha{\cmdp{\fachievable{\mdp{A}}},\sigma^{\mdp{A}}}{i}{j} + \sum_{k\in \exr^{\mdp{A}}} \Reacha{\cmdp{\fachievable{\mdp{A}}},\sigma^{\mdp{A}}}{i}{k}\cdot x_{\nxt(k)}&\text{ if } \Reacha{\mdp{A}\seqcomp \mdp{B}, \tau}{i}{j} > 0 \text{ and $i\in \en^{\mdp{A}}$,}\\
            \Reacha{\cmdp{\fachievable{\mdp{B}}},\sigma^{\mdp{B}}}{i}{j} + \sum_{k\in \exl^{\mdp{B}}} \Reacha{\cmdp{\fachievable{\mdp{B}}},\sigma^{\mdp{B}}}{i}{k}\cdot x_{\nxt(k)}&\text{ if } \Reacha{\mdp{A}\seqcomp \mdp{B}, \tau}{i}{j} > 0 \text{ and $i\in \en^{\mdp{B}}$.}
        \end{cases}
    \end{align*}

    Since $(\Reacha{\cmdp{\fachievable{\mdp{A}}}\seqcomp \cmdp{\fachievable{\mdp{B}}}, \sigma}{i}{j})_{i\in \en^{\mdp{A}}\cup \en^{\mdp{B}}}$ is also the unique solution of the above linear equation, we can conclude that $\Reacha{\mdp{A}\seqcomp \mdp{B}, \tau}{i}{j} = \Reacha{\cmdp{\fachievable{\mdp{A}}}\seqcomp \cmdp{\fachievable{\mdp{B}}}, \sigma}{i}{j}$.
    \qed
\end{proof}

\subsection{Proof of \cref{prop:correct_seqcomp}}
Note that an explanation is given in \cref{sec:seqcompcorrect} and (in the introduction of) \cref{app:explainmainprop}.
\begin{proof}[\cref{prop:correct_seqcomp}]
    We prove the statement in three parts.
    
    \medskip\noindent 
    Firstly, we prove (1) \[ \achievable{\cmdp{\fachievable{\mdp{A}}}\seqcomp\cmdp{\fachievable{\mdp{B}}}}{i}  \subseteq 
    \achievable{\mdp{A}\seqcomp\mdp{B}}{i}.\] 
    
    From a DM scheduler $\sigma$ on $ \cmdp{\fachievable{\mdp{A}}}\seqcomp\cmdp{\fachievable{\mdp{B}}}$, we obtain  DM schedulers $\sigma^{\mdp{A}}, \sigma^{\mdp{B}}$ with $\sigma= \sigma^{\mdp{A}}\seqcomp \sigma^{\mdp{B}}$. 
    Using~\cref{lem:construct_deterministic_scheduler_seqcomp}, there exists a deterministic scheduler $\tau$ such that \[ \Reacha{\cmdp{\fachievable{\mdp{A}}}\seqcomp\cmdp{\fachievable{\mdp{B}}}, \sigma}{i}{\ex^{\mdp{A}\seqcomp \mdp{B}}} = \Reacha{\mdp{A}\seqcomp \mdp{B}, \tau}{i}{\ex^{\mdp{A}\seqcomp \mdp{B}}}. \] 

    \medskip
    \noindent Next, we prove (2) \[   
    \achievable{\mdp{A}\seqcomp\mdp{B}}{i}  \subseteq \achievable{\cmdp{\fachievable{\mdp{A}}}\seqcomp\cmdp{\fachievable{\mdp{B}}}}{i} .\]
   
    For each DM scheduler  $\tau$ on $ \mdp{A}\seqcomp \mdp{B}$, it induces DM schedulers $\tau^{\mdp{A}}, \tau^{\mdp{B}}$ on $\mdp{A}, \mdp{B}$ such that $\tau $ such that $\tau = \tau^{\mdp{A}}\seqcomp \tau^{\mdp{B}}$.  
    We consider memoryless schedulers $\sigma^{\mdp{A}}, \sigma^{\mdp{B}}$ on $\cmdp{\fachievable{\mdp{A}}}, \cmdp{\fachievable{\mdp{B}}}$ such that $\Reacha{\mdp{A},\tau^{\mdp{A}}}{i^{\mdp{A}}}{\ex^\mdp{A}} \preceq \Reacha{\cmdp{\fachievable{\mdp{A}}},\sigma^{\mdp{A}}}{i^{\mdp{A}}}{\ex^{\mdp{A}}}$ for any $i^{\mdp{A}}\in \en^{\mdp{A}}$, and same for $\tau^{\mdp{B}}, \sigma^{\mdp{B}}$. These exist due to \cref{lem:recoverstratfromshortcut}.

By~\cref{lem:construct_deterministic_scheduler_seqcomp}, there is a scheduler $\tau'$ on $\mdp{A}\seqcomp \mdp{B}$ such that \[ \Reacha{\mdp{A}\seqcomp \mdp{B},\tau'}{i}{\ex^{\mdp{A}\seqcomp \mdp{B}}} = \Reacha{\cmdp{\fachievable{\mdp{A}}}\seqcomp \cmdp{\fachievable{\mdp{B}}},\sigma^{\mdp{A}}\seqcomp \sigma^{\mdp{B}}}{i}{\ex^{\mdp{A}\seqcomp \mdp{B}}}. \]

Then, by considering linear equations similar to these shown in the proof of~\cref{lem:construct_deterministic_scheduler_seqcomp}, we can obtain the inequality $\Reacha{\mdp{A}\seqcomp \mdp{B}, \tau}{i}{\ex^{\mdp{A}\seqcomp \mdp{B}}} \preceq \Reacha{\mdp{A}\seqcomp \mdp{B}, \tau'}{i}{\ex^{\mdp{A}\seqcomp \mdp{B}}}$, which concludes $\Reacha{\mdp{A}\seqcomp \mdp{B},\tau}{i}{\ex^{\mdp{A}\seqcomp \mdp{B}}} \preceq \Reacha{\cmdp{\fachievable{\mdp{A}}}\seqcomp \cmdp{\fachievable{\mdp{B}}},\sigma^{\mdp{A}}\seqcomp\sigma^{\mdp{B}}}{i}{\ex^{\mdp{A}\seqcomp \mdp{B}}}$. 


Finally, we prove the (3) the inclusions w.r.t.\ lower and upper bounds.
These are exactly analogously to the proof of \cref{prop:correct_abst_openMDPs} by adding suboptimal actions as in \cref{lem:addingSuboptimalActions} to the upper bound until the lower bound is a strict subMDP of the newly created MDP.  \qed
\end{proof}

\clearpage
\section{Details on Error Bounds in~\cref{sec:comp_error_bouds}}

\subsection{Proof for~\cref{thm:error_uni_seq}}
\label{subsec:proof_error_uni_seq}

We use the following propositions for the proof: 
\begin{proposition}
\label{prop:error_uni_seq}
    Let $\mdp{A}, \mdp{B}$ be oMDPs with $\typemdp{\mdp{A}} = (m,  0) \rightarrow (l, 0)$ and $\typemdp{\mdp{B}} = (l, 0) \rightarrow (n, 0)$, $i\in \en^{\mdp{A}}$, and $(L^{\mdp{A}}, U^{\mdp{A}}), (L^{\mdp{B}}, U^{\mdp{B}})$ be sound approximations. Assume the following: 
    \begin{itemize}
        \item $\sigma^{\mdp{A}},\sigma^{\mdp{B}}$ be memoryless schedulers on $\cmdp{U^{\mdp{A}}},\cmdp{U^{\mdp{B}}}$, and $\tau^{\mdp{A}},\tau^{\mdp{B}}$ be memoryless schedulers on $\cmdp{L^{\mdp{A}}},\cmdp{L^{\mdp{B}}}$,
        \item  $\epsilon^{\mdp{A}}$ be a constant given by
        \begin{align*}
            \epsilon^{\mdp{A}} \defeq \sup_{k\in \ex^{\mdp{A}}}|\Reacha{\cmdp{U^{\mdp{A}}}, \sigma^{\mdp{A}}}{i}{k} -  \Reacha{\cmdp{L^{\mdp{A}}}, \tau^{\mdp{A}}}{i}{k}|,
        \end{align*}
        \item $\epsilon^{\mdp{B}}$ be a constant given by
        \begin{align*}
            \epsilon^{\mdp{B}} \defeq \sup_{k\in \en^{\mdp{B}}} \sup_{j\in \ex^{\mdp{B}}}|\Reacha{\cmdp{U^{\mdp{B}}}, \sigma^{\mdp{B}}}{k}{j} -  \Reacha{\cmdp{L^{\mdp{B}}}, \tau^{\mdp{B}}}{k}{j}|,
        \end{align*}
 
    \end{itemize}
    Then, the following inequality holds:
    \begin{align*}
        \sup_{j\in \ex^{\mdp{B}}}|\Reacha{\cmdp{U^{\mdp{A}}}\seqcomp \cmdp{U^{\mdp{B}}}, \sigma^{\mdp{A}}\seqcomp \sigma^{\mdp{B}}}{i}{j} -  \Reacha{\cmdp{L^{\mdp{A}}}\seqcomp \cmdp{L^{\mdp{B}}}, \tau^{\mdp{A}}\seqcomp \tau^{\mdp{B}}}{i}{j}| \leq |\ex^{\mdp{A}}|\cdot \epsilon^{\mdp{A}} + \epsilon^{\mdp{B}}. 
    \end{align*}
\end{proposition}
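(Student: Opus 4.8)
The plan is to exploit that both $\mdp{A}$ and $\mdp{B}$ are rightward, so that the sequential composition carries no leftward flow: once a path leaves $\mdp{A}$ through an exit $k\in\exr^{\mdp{A}}$ and enters $\mdp{B}$ at $\nxt(k)\in\en^{\mdp{B}}$, it can never return to $\mdp{A}$. First I would specialize the decomposition of reachability probabilities from \cref{sec:seqcompcorrect} to this setting. Since $\ex^{\mdp{A}\seqcomp\mdp{B}}=\exr^{\mdp{B}}$, for a target $j\in\ex^{\mdp{B}}$ the ``local'' term $\Reacha{\cmdp{U^{\mdp{A}}},\sigma^{\mdp{A}}}{i}{j}$ vanishes and the recursive term collapses (no return), giving the clean product form
\[
\Reacha{\cmdp{U^{\mdp{A}}}\seqcomp\cmdp{U^{\mdp{B}}},\,\sigma^{\mdp{A}}\seqcomp\sigma^{\mdp{B}}}{i}{j}=\sum_{k\in\exr^{\mdp{A}}}\Reacha{\cmdp{U^{\mdp{A}}},\sigma^{\mdp{A}}}{i}{k}\cdot\Reacha{\cmdp{U^{\mdp{B}}},\sigma^{\mdp{B}}}{\nxt(k)}{j},
\]
and the analogous identity for the lower approximation with the $\tau$'s. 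Here I would use that $\sigma^{\mdp{A}}\seqcomp\sigma^{\mdp{B}}$ acts as $\sigma^{\mdp{A}}$ on the states of $\cmdp{U^{\mdp{A}}}$ and as $\sigma^{\mdp{B}}$ on those of $\cmdp{U^{\mdp{B}}}$, so that the per-component reachabilities are exactly those appearing in the factors.

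With both quantities written as sums of products $\sum_k a_k b_k$ and $\sum_k a'_k b'_k$ (with $a_k,a'_k$ coming from $\mdp{A}$ and $b_k,b'_k$ from $\mdp{B}$), the next step is the elementary telescoping estimate $a_k b_k-a'_k b'_k=(a_k-a'_k)b_k+a'_k(b_k-b'_k)$, summed over $k\in\exr^{\mdp{A}}$ and bounded by the triangle inequality. By the definitions of $\epsilon^{\mdp{A}}$ and $\epsilon^{\mdp{B}}$ we have $|a_k-a'_k|\le\epsilon^{\mdp{A}}$ and $|b_k-b'_k|\le\epsilon^{\mdp{B}}$ for every $k$, and all four quantities lie in $[0,1]$.

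Finally I would bound the two resulting sums separately: the first by $\sum_k|a_k-a'_k|\le|\exr^{\mdp{A}}|\cdot\epsilon^{\mdp{A}}=|\ex^{\mdp{A}}|\cdot\epsilon^{\mdp{A}}$ (using $b_k\le1$ and rightwardness $\ex^{\mdp{A}}=\exr^{\mdp{A}}$), and the second by $\epsilon^{\mdp{B}}\sum_k a'_k\le\epsilon^{\mdp{B}}$, where $\sum_k a'_k\le1$ since the reachability probabilities to the pairwise disjoint, absorbing exits of $\mdp{A}$ form a subdistribution. Taking the supremum over $j\in\ex^{\mdp{B}}$ then yields the claimed bound $|\ex^{\mdp{A}}|\cdot\epsilon^{\mdp{A}}+\epsilon^{\mdp{B}}$. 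The only genuinely delicate point is the first step: justifying the collapse of the recursive decomposition into a finite product of per-component reachabilities, i.e.\ that rightwardness removes the cyclic dependency which made \cref{prop:correct_seqcomp} nontrivial, and that the composed scheduler restricts to the component schedulers on each part; the remaining estimates are routine.
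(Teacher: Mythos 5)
Your proposal is correct and follows essentially the same route as the paper: both factor the composed reachability as $\sum_{k}\Reacha{\cmdp{U^{\mdp{A}}},\sigma^{\mdp{A}}}{i}{k}\cdot\Reacha{\cmdp{U^{\mdp{B}}},\sigma^{\mdp{B}}}{\nxt(k)}{j}$ using rightwardness, telescope via an inserted mixed term (the paper inserts $\Reacha{\cmdp{U^{\mdp{A}}}\seqcomp\cmdp{L^{\mdp{B}}},\sigma^{\mdp{A}}\seqcomp\tau^{\mdp{B}}}{i}{j}$, you the other cross term, which is immaterial), and bound one sum by $|\ex^{\mdp{A}}|\cdot\epsilon^{\mdp{A}}$ and the other by $\epsilon^{\mdp{B}}$ via the subdistribution property. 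The ``delicate point'' you flag is also left implicit in the paper's proof, and your justification of it is sound.
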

\begin{proof}
    By the following calculation: 
   \begin{align*}
       &\sup_{j\in \ex^{\mdp{B}}}\big| \Reacha{\cmdp{U^{\mdp{A}}}\seqcomp \cmdp{U^{\mdp{B}}}, \sigma^{\mdp{A}}\seqcomp \sigma^{\mdp{B}}}{i}{j} -  \Reacha{\cmdp{L^{\mdp{A}}}\seqcomp \cmdp{L^{\mdp{B}}}, \tau^{\mdp{A}}\seqcomp \tau^{\mdp{B}}}{i}{j} \big|\\
       =  &\sup_{j\in \ex^{\mdp{B}}}\big| \Reacha{\cmdp{U^{\mdp{A}}}\seqcomp \cmdp{U^{\mdp{B}}}, \sigma^{\mdp{A}}\seqcomp \sigma^{\mdp{B}}}{i}{j} - \Reacha{\cmdp{U^{\mdp{A}}}\seqcomp \cmdp{L^{\mdp{B}}}, \sigma^{\mdp{A}}\seqcomp \tau^{\mdp{B}}}{i}{j}\\
       +& \Reacha{\cmdp{U^{\mdp{A}}}\seqcomp \cmdp{L^{\mdp{B}}}, \sigma^{\mdp{A}}\seqcomp \tau^{\mdp{B}}}{i}{j}- \Reacha{\cmdp{L^{\mdp{A}}}\seqcomp \cmdp{L^{\mdp{B}}}, \tau^{\mdp{A}}\seqcomp \tau^{\mdp{B}}}{i}{j} \big|\\
       = & \sup_{j\in \ex^{\mdp{B}}}\big| \sum_{k\in \ex^{\mdp{A}}} \Reacha{\cmdp{U^{\mdp{A}}}, \sigma^{\mdp{A}}}{i}{k} \cdot \big(\Reacha{\cmdp{U^{\mdp{B}}},\sigma^{\mdp{B}}}{k}{j} - \Reacha{\cmdp{L^{\mdp{B}}},\tau^{\mdp{B}}}{k}{j}\big)  \\
       + & \sum_{k\in \ex^{\mdp{A}}} \big(\Reacha{\cmdp{U^{\mdp{A}}},\sigma^{\mdp{A}}}{i}{k} - \Reacha{\cmdp{L^{\mdp{A}}},\tau^{\mdp{A}}}{i}{k}\big) \cdot \Reacha{\cmdp{L^{\mdp{B}}}, \tau^{\mdp{B}}}{k}{j}  \big|\\
       \leq& \sum_{k\in \ex^{\mdp{A}}} \epsilon^{\mdp{B}}\cdot \Reacha{\cmdp{U^{\mdp{A}}}, \sigma^{\mdp{A}}}{i}{k} + \epsilon^{\mdp{A}}\cdot \sup_{j\in \ex^{\mdp{B}}}\big(\Reacha{\cmdp{L^{\mdp{B}}}, \tau^{\mdp{B}}}{k}{j}\big)\\
       \leq & |\ex^{\mdp{A}}| \cdot \epsilon^{\mdp{A}} + \epsilon^{\mdp{B}} .
   \end{align*}
    \qed
\end{proof}

\begin{proposition}
\label{prop:error_accurate_uni_seq}
Let $\mdp{A}, \mdp{B}$ be oMDPs with $\typemdp{\mdp{A}} = (m,  0) \rightarrow (l, 0)$ and $\typemdp{\mdp{B}} = (l, 0) \rightarrow (n, 0)$, and $(L^{\mdp{A}}, U^{\mdp{A}}), (L^{\mdp{B}}, U^{\mdp{B}})$ be sound approximations.
The following inequality holds: 
\begin{align*}
    \gapinf{L}{U} \leq |\ex^{\mdp{A}}|\cdot \gapinf{L^{\mdp{A}}}{U^{\mdp{A}}} + \gapinf{L^{\mdp{B}}}{U^{\mdp{B}}}, 
\end{align*}
where $L\defeq\big( \achievable{\cmdp{L^{\mdp{A}}}\seqcomp \cmdp{L^{\mdp{B}}}}{i}\big)_{i\in \en^{\mdp{A}\seqcomp \mdp{B}}}$, and $U \defeq \big( \achievable{\cmdp{U^{\mdp{A}}}\seqcomp \cmdp{U^{\mdp{B}}}}{i}\big)_{i\in \en^{\mdp{A}\seqcomp \mdp{B}}}$.
\end{proposition}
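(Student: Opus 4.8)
The plan is to reduce the set-level inequality to the scheduler-level estimate of~\cref{prop:error_uni_seq}, which already does the telescoping arithmetic for the composition. First I would unfold $\gapinf{L}{U} = \sup_{i}\sup_{\point\in U_i}\inf_{\point'\in L_i}\|\point-\point'\|_{\infty}$ and fix an entrance $i\in\en^{\mdp{A}}$. A preliminary observation is that the function $g(\point) \defeq \inf_{\point'\in L_i}\|\point-\point'\|_{\infty}$, being the $L^{\infty}$-distance to the downward-closed convex set $L_i$, is monotone with respect to $\preceq$: given $\point_1\preceq\point_2$ and the optimal witness $\point_2'$ for $\point_2$, the truncation $\min(\point_2',\point_1)$ lies in $L_i$ by downward-closedness and witnesses $g(\point_1)\le g(\point_2)$. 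Hence the supremum over $U_i=\achievable{\cmdp{U^{\mdp{A}}}\seqcomp\cmdp{U^{\mdp{B}}}}{i}$ is attained at a Pareto-optimal point, which by~\cref{lem:dmsuffices} is the reachability vector of a DM scheduler $\sigma=\sigma^{\mdp{A}}\seqcomp\sigma^{\mdp{B}}$ that decomposes along the two shortcut MDPs.

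Next I would construct matching lower-bound schedulers. Since a one-step shortcut MDP realizes exactly the downward convex closure of its action distributions, we have $\achievable{\cmdp{U^{\mdp{A}}}}{i}=U^{\mdp{A}}_i$ and $\achievable{\cmdp{L^{\mdp{A}}}}{i}=L^{\mdp{A}}_i$, so the point $\point^{\mdp{A}}$ achieved by $\sigma^{\mdp{A}}$ from $i$ is a Pareto vertex of $U^{\mdp{A}}_i$. The crucial geometric step is to find a scheduler-realizable $q\in L^{\mdp{A}}_i$, i.e.\ a point on the Pareto frontier of $L^{\mdp{A}}_i$ that is the exact reachability vector of some $\tau^{\mdp{A}}$ on $\cmdp{L^{\mdp{A}}}$, with $\|\point^{\mdp{A}}-q\|_{\infty}\le\gapinf{L^{\mdp{A}}}{U^{\mdp{A}}}$. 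To this end I would take the gap-witness $\point'\in L^{\mdp{A}}_i$ within $\gapinf{L^{\mdp{A}}}{U^{\mdp{A}}}$ of $\point^{\mdp{A}}$, set $\hat\point\defeq\min(\point',\point^{\mdp{A}})$ componentwise (still in $L^{\mdp{A}}_i$, still within the same $L^{\infty}$-ball, and with $\hat\point\preceq\point^{\mdp{A}}$), and then slide from $\hat\point$ along the nonnegative direction $\point^{\mdp{A}}-\hat\point$ until leaving $L^{\mdp{A}}_i$; the exit point lies on the Pareto frontier and stays inside the ball. Because entrances are independent states in a shortcut MDP, exactly the same projection can be carried out simultaneously at every entrance of $\mdp{B}$, producing a single $\tau^{\mdp{B}}$ with $\epsilon^{\mdp{B}}\le\gapinf{L^{\mdp{B}}}{U^{\mdp{B}}}$, and likewise $\epsilon^{\mdp{A}}\le\gapinf{L^{\mdp{A}}}{U^{\mdp{A}}}$ at the fixed entrance $i$.

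Finally I would apply~\cref{prop:error_uni_seq} to $\sigma^{\mdp{A}},\sigma^{\mdp{B}},\tau^{\mdp{A}},\tau^{\mdp{B}}$: its conclusion gives $\|\point-\point'\|_{\infty}\le|\ex^{\mdp{A}}|\,\epsilon^{\mdp{A}}+\epsilon^{\mdp{B}}$ where $\point'=\Reacha{\cmdp{L^{\mdp{A}}}\seqcomp\cmdp{L^{\mdp{B}}},\tau^{\mdp{A}}\seqcomp\tau^{\mdp{B}}}{i}{\ex^{\mdp{B}}}\in L_i$, so that $g(\point)\le|\ex^{\mdp{A}}|\gapinf{L^{\mdp{A}}}{U^{\mdp{A}}}+\gapinf{L^{\mdp{B}}}{U^{\mdp{B}}}$; taking the supremum over $\point$ and over $i$ yields the claim. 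I expect the main obstacle to be precisely the second step: certifying that the set-level gap between $L^{\mdp{A}}_i$ and $U^{\mdp{A}}_i$ is witnessed by an honestly scheduler-realizable point on the lower frontier (the truncate-and-slide argument), because~\cref{prop:error_uni_seq} is phrased for concrete schedulers rather than for arbitrary points of the achievable sets. The rightward assumption is what makes everything valid: it ensures the composition identity underlying~\cref{prop:error_uni_seq} has no feedback, so reachability in $\mdp{A}\seqcomp\mdp{B}$ is the finite, acyclic substitution of the component reachabilities and the error terms add rather than being amplified by loops (contrast~\cref{ex:err_bounds_seq_comp_steq}).
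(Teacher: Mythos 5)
Your overall route is exactly the paper's: fix a Pareto-optimal memoryless $\sigma=\sigma^{\mdp{A}}\seqcomp\sigma^{\mdp{B}}$ on $\cmdp{U^{\mdp{A}}}\seqcomp\cmdp{U^{\mdp{B}}}$, choose $\tau^{\mdp{A}},\tau^{\mdp{B}}$ on $\cmdp{L^{\mdp{A}}},\cmdp{L^{\mdp{B}}}$ whose reachability vectors match those of $\sigma^{\mdp{A}},\sigma^{\mdp{B}}$ up to the component gaps, feed the resulting $\epsilon^{\mdp{A}},\epsilon^{\mdp{B}}$ into \cref{prop:error_uni_seq}, and take suprema. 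Your monotonicity observation (reducing to Pareto points of $U_i$) is correct and implicit in the paper, and you have correctly isolated the one step the paper asserts without argument: that the gap $\gapinf{L^{\mdp{A}}}{U^{\mdp{A}}}$, an infimum over \emph{all} points of the downward-closed set $L^{\mdp{A}}_i$, is witnessed by a point that is the \emph{exact} reachability vector of a memoryless scheduler on $\cmdp{L^{\mdp{A}}}$, i.e., a point of the convex hull of the vertices of $L^{\mdp{A}}_i$. This is genuinely the crux, because $\epsilon^{\mdp{A}}$ in \cref{prop:error_uni_seq} is a two-sided difference of actual scheduler values.

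However, the truncate-and-slide patch does not close this step. The point where the ray from $\hat\point$ in direction $\point^{\mdp{A}}-\hat\point$ leaves $L^{\mdp{A}}_i$ is only a boundary point of a downward-closed set: it need not lie in the convex hull of the vertices and hence need not be realizable (and if the direction has a zero coordinate the ray may exit at $\hat\point$ itself). Concretely, with three exits let $r=(0.4+2\varepsilon,\,0.3-\varepsilon,\,0.3-\varepsilon)$, $\point^{\mdp{A}}=(0.4,0.3,0.3)$, $L^{\mdp{A}}_i=\dwconvcl{\{r\}}$ and $U^{\mdp{A}}_i=\dwconvcl{\{r,\point^{\mdp{A}}\}}$. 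Then $\gapinf{L^{\mdp{A}}}{U^{\mdp{A}}}=\varepsilon$, witnessed by the dominated point $(0.4,\,0.3-\varepsilon,\,0.3-\varepsilon)$, but the unique scheduler-realizable point of $\cmdp{L^{\mdp{A}}}$ from $i$ is $r$, at $L^{\infty}$-distance $2\varepsilon$ from $\point^{\mdp{A}}$; your slide stops at the non-realizable witness. So one can only guarantee $\epsilon^{\mdp{A}}\leq(|\ex^{\mdp{A}}|-1)\cdot\gapinf{L^{\mdp{A}}}{U^{\mdp{A}}}$ in general (using that subdistributions sum to at most $1$), and the arithmetic would have to be redone with that weaker input; the final constant $|\ex^{\mdp{A}}|$ may still absorb this, but the step as you (and, silently, the paper) state it is not established. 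In short: same approach as the paper, with the right instinct about where the difficulty lies, but the proposed geometric lemma is false as stated and its proof would fail.
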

\begin{proof}
Let $i\in \en^{\mdp{A}}$, and $\sigma = \sigma^{\mdp{A}}\seqcomp \sigma^{\mdp{B}}$ be a Pareto-optimal memoryless scheduler on $\cmdp{U^{\mdp{A}}}\seqcomp \cmdp{U^{\mdp{B}}}$. By definition of $\cmdp{\_}$, $\sigma^{\mdp{A}}, \sigma^{\mdp{B}}$ are Pareto-optimal for each entrance on $ \cmdp{U^{\mdp{A}}}, \cmdp{U^{\mdp{B}}}$, respectively. We take Pareto-optimal memoryless schedulers $\tau^{\mdp{A}}, \tau^{\mdp{B}}$ on $\cmdp{L^{\mdp{A}}}, \cmdp{L^{\mdp{B}}}$ such that 
\begin{align*}
     \sup_{k\in \ex^{\mdp{A}}}|\Reacha{\cmdp{U^{\mdp{A}}}, \sigma^{\mdp{A}}}{i}{k} -  \Reacha{\cmdp{L^{\mdp{A}}}, \tau^{\mdp{A}}}{i}{k}| &\leq  \gapinf{L^{\mdp{A}}}{U^{\mdp{A}}},\\
     \sup_{k\in \en^{\mdp{B}}}\sup_{j\in \ex^{\mdp{B}}}|\Reacha{\cmdp{U^{\mdp{B}}}, \sigma^{\mdp{B}}}{k}{j} -  \Reacha{\cmdp{L^{\mdp{B}}}, \tau^{\mdp{B}}}{k}{j}| &\leq  \gapinf{L^{\mdp{B}}}{U^{\mdp{B}}}. 
\end{align*}
Then, we obtain the following inequality by~\cref{prop:error_accurate_uni_seq}:
\begin{align*}
    &\sup_{j\in \ex^{\mdp{B}}}\big| \Reacha{\cmdp{U^{\mdp{A}}}\seqcomp \cmdp{U^{\mdp{B}}}, \sigma^{\mdp{A}}\seqcomp \sigma^{\mdp{B}}}{i}{j} -  \Reacha{\cmdp{L^{\mdp{A}}}\seqcomp \cmdp{L^{\mdp{B}}}, \tau^{\mdp{A}}\seqcomp \tau^{\mdp{B}}}{i}{j} \big| \\
    &\leq |\ex^{\mdp{A}}|\cdot \gapinf{L^{\mdp{A}}}{U^{\mdp{A}}} + \gapinf{L^{\mdp{B}}}{U^{\mdp{B}}}.
\end{align*}
By taking the infimum and supremum, the following inequality holds: 
\begin{align*}
    \gapinf{L}{U} \leq |\ex^{\mdp{A}}|\cdot \gapinf{L^{\mdp{A}}}{U^{\mdp{A}}} + \gapinf{L^{\mdp{B}}}{U^{\mdp{B}}}.
\end{align*}
\qed
\end{proof}

\begin{proposition}
\label{prop:triangle_inequality}
    Let $(L, B),\, (B, U)$ be sound approximations whose types are all same. The following triangle inequality holds: 
    \begin{align*}
        \gapinf{L}{U} \leq \gapinf{L}{B} + \gapinf{B}{U}. 
    \end{align*}
\end{proposition}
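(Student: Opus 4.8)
The plan is to reduce the statement to a per-entrance estimate and then invoke the ordinary triangle inequality of the norm $\|\cdot\|_\infty$. Recall that each of the three $L^\infty$-errors is a supremum over entrances $i \in \en$ of a directed distance, $\gapinf{L_i}{U_i} = \sup_{\point \in U_i} \inf_{\point' \in L_i} \|\point - \point'\|_\infty$, so since $\sup_i (a_i + b_i) \leq \sup_i a_i + \sup_i b_i$, it suffices to prove the inequality for a single fixed entrance $i$, namely $\gapinf{L_i}{U_i} \leq \gapinf{L_i}{B_i} + \gapinf{B_i}{U_i}$. Throughout I use that $(L,B)$ and $(B,U)$ being sound approximations of the same type gives a common index set $\en$ together with the chain of inclusions $L_i \subseteq B_i \subseteq U_i$.

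Next I would fix such an $i$ and a point $\point \in U_i$, aiming to bound $\inf_{\point' \in L_i}\|\point - \point'\|_\infty$. For an arbitrary intermediate point $\point'' \in B_i$ and arbitrary $\point' \in L_i$, the metric triangle inequality for $\|\cdot\|_\infty$ yields $\|\point - \point'\|_\infty \leq \|\point - \point''\|_\infty + \|\point'' - \point'\|_\infty$. Taking the infimum over $\point' \in L_i$ on both sides and then bounding the resulting inner distance by $\inf_{\point' \in L_i}\|\point'' - \point'\|_\infty \leq \gapinf{L_i}{B_i}$, which is valid precisely because $\point'' \in B_i$, gives $\inf_{\point' \in L_i}\|\point - \point'\|_\infty \leq \|\point - \point''\|_\infty + \gapinf{L_i}{B_i}$.

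I would then take the infimum over $\point'' \in B_i$. Since $\gapinf{L_i}{B_i}$ is a constant independent of $\point''$, this yields $\inf_{\point' \in L_i}\|\point - \point'\|_\infty \leq \inf_{\point'' \in B_i}\|\point - \point''\|_\infty + \gapinf{L_i}{B_i} \leq \gapinf{B_i}{U_i} + \gapinf{L_i}{B_i}$, where the last step uses $\point \in U_i$. As this bound is uniform in $\point$, taking the supremum over $\point \in U_i$ gives the per-entrance claim, and taking the supremum over $i \in \en$ concludes the proof.

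The argument is entirely standard, being the directed-Hausdorff-distance incarnation of the triangle inequality, so I do not expect a genuine obstacle. The only points demanding care are the bookkeeping of the nested $\inf$/$\sup$ operations, in particular pulling the constant $\gapinf{L_i}{B_i}$ out of the infimum over $\point''$ and using the memberships $\point'' \in B_i$ and $\point \in U_i$ to pass to the respective suprema. If one wishes to be fully rigorous about degenerate cases, one can additionally note that the inequality holds trivially in the extended reals whenever some $L_i$ is empty, since then the relevant infima are $+\infty$.
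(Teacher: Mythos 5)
Your proof is correct and follows essentially the same route as the paper's: both reduce to a fixed entrance $i$, insert an intermediate point of $B_i$, and apply the triangle inequality of $\|\cdot\|_\infty$ before passing to the nested infima and suprema. The only (cosmetic) difference is that the paper selects an explicit minimizer $p(\point^U)\in B_i$ attaining $\inf_{\point^B\in B_i}\|\point^U-\point^B\|_\infty$, whereas you quantify over an arbitrary $\point''\in B_i$ and take the infimum at the end, which sidesteps the attainment question.
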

\begin{proof}
For any $i\in \en$ and $\point^{U}\in U_i$, we take $p(\point^{U})\in B_i$ such that $\|\point^{U}-p(\point^{U})\|_{\infty} =  \inf_{\point^{B}\in B_i}\|\point^{U} - \point^B  \|_{\infty}$. Then, 
\begin{align*}
    \sup_{\point^U\in U_i}\inf_{\point^{L} \in L_i} \| \point^U -  \point^L \|_{\infty} &= \sup_{\point^U\in U_i}\inf_{\point^{L} \in L_i}\| \point^U - p(\point^U) + p(\point^U) -  \point^L   \|_{\infty} \\  
    &\leq \sup_{\point^U\in U_i}\| \point^U - p(\point^U) \|_{\infty} +  \sup_{\point^U\in U_i}\inf_{\point^{L} \in L_i}\|p(\point^U) -  \point^L   \|_{\infty}\\
    &\leq \sup_{\point^U\in U_i} \inf_{\point^B\in B_i}\| \point^U - \point^B \|_{\infty} + \sup_{\point^B\in B_i}\inf_{\point^{L} \in L_i}\| \point^B -  \point^L   \|_{\infty}
\end{align*}\qed
\end{proof}

\begin{proof}[\cref{thm:error_uni_seq}]
By~\cref{prop:error_accurate_uni_seq,prop:triangle_inequality}. \qed
\end{proof}

\subsection{Remarks on Bounds for Composition}
\label{app:rem:bounds}

The example demonstrates that in the presence of likely loops, reachability probabilities can be rescaled, which may amplify errors measured in the infinity norm. This motivates looking at different distance measures. Indeed, error bounds on reachability probabilities in~\cite{DBLP:conf/fossacs/Chatterjee12} may yield tighter bounds. In a nutshell, for \emph{structurally equivalent MCs} $\mdp{C}, \mdp{D}$, i.e., MCs that share the same underlying graph, they derive a bound $|\Reacha{\mdp{C}}{i}{j} - \Reacha{\mdp{D}}{i}{j}| \leq \epsilon^{2\cdot |S|} - 1$ with $\epsilon\defeq \max_{s, s'\in S}\max\big(\frac{P^{\mdp{C}}(s, s')}{P^{\mdp{D}}(s, s')},\, \frac{P^{\mdp{D}}(s, s')}{P^{\mdp{C}}(s, s')}\big)$. This bound may be applied to Pareto-optimal schedulers $\sigma$ on $\cmdp{L^{\mdp{A}}}\seqcomp \cmdp{L^{\mdp{B}}}$ and $\tau$ on $\cmdp{U^{\mdp{A}}}\seqcomp \cmdp{U^{\mdp{B}}}$, if their induced MCs are indeed structurally equivalent. In particular, for these shortcut MDPs, $|S|$ is often small. However, to derive a-priori error bounds needs careful algorithmic considerations to ensure structural equivalence throughout all approximations. 

\ifdraft

\else 
\fi
\else
\fi
\end{document}